\documentclass[pdflatex,sn-mathphys-ay,iicol]{sn-jnl}


\usepackage{graphicx}%
\usepackage{multirow}%
\usepackage{amsmath,amssymb,amsfonts}%
\usepackage{amsthm}%
\usepackage{mathrsfs}%
\usepackage[title]{appendix}%
\usepackage{xcolor}%
\usepackage{textcomp}%
\usepackage{manyfoot}%
\usepackage{booktabs}%
\usepackage{algorithm}%
\usepackage{algorithmicx}%
\usepackage{algpseudocode}%
\usepackage{listings}%

\usepackage{calrsfs}
\usepackage{subcaption}
\RequirePackage{tikz}


\newcommand{\reals}{\mathbb{R}}
\newcommand{\integers}{\mathbb{Z}}


\newcommand{\X}{{\mathcal{X}}}

\newcommand{\T}{\mathcal{T}}
\renewcommand{\O}{\mathcal{O}}

\newcommand{\Prob}{\mathbb{P}}






\theoremstyle{thmstyleone}%
\newtheorem{theorem}{Theorem}
\newtheorem{proposition}[theorem]{Proposition}%
\newtheorem{lemma}[theorem]{Lemma}
\newtheorem{corollary}[theorem]{Corollary}

\theoremstyle{thmstylethree}%
\newtheorem{example}{Example}%

\theoremstyle{thmstylethree}%
\newtheorem{definition}{Definition}%

\raggedbottom

\begin{document}

\title[Article Title]{Random sampling of contingency tables and partitions: Two practical examples of the Burnside process}


\author[1,2]{\fnm{Persi} \sur{Diaconis}}
\author*[2]{\fnm{Michael} \sur{Howes}}\email{mhowes@stanford.edu}

\affil[1]{\orgdiv{Department of Mathematics}, \orgname{Stanford University}, \orgaddress{\state{California}, \country{USA}}}

\affil[2]{\orgdiv{Department of Statistics}, \orgname{Stanford University}, \orgaddress{\state{California}, \country{USA}}}


\abstract{
    This paper gives new, efficient algorithms for approximate uniform sampling of contingency tables and integer partitions. The algorithms use the Burnside process, a general algorithm for sampling a uniform orbit of a finite group acting on a finite set. We show that a technique called `lumping' can be used to derive efficient implementations of the Burnside process. For both contingency tables and partitions, the lumped processes have far lower per step complexity than the original Markov chains. We also define a second Markov chain for partitions called the reflected Burnside process. The reflected Burnside process maintains the computational advantages of the lumped process but empirically converges to the uniform distribution much more rapidly. By using the reflected Burnside process we can easily sample uniform partitions of size $10^{10}$.}

\keywords{Markov chain Monte Carlo, contingency tables, partitions, Burnside process}


\maketitle

\section{Introduction}\label{sec:intro}

The Burnside process is a Markov chain for sampling combinatorial objects that arise in applied probability and statistics. We begin with a general description and then specialize to examples. Let $\X$ be a finite set and $G$ a group acting on $\X$. We will write $x^g$ for the image of $g$ acting on $x$. Under the group action, the set $\X$ splits into disjoint orbits
\[
\X =\O_{x_1} \sqcup \O_{x_2} \sqcup \cdots \sqcup \O_{x_Z},
\]
where $\O_x=\{x^g : g \in G\}$ is the orbit containing $x$ and $Z$ is the number of orbits. The orbit $\O_x$ can be thought of as the element $x \in \X$ `up to symmetry.' Natural questions about the orbits are
\begin{itemize}
    \item How many orbits are there?
    \item What is the size of a typical orbit?
    \item How can an orbit be uniformly chosen?
\end{itemize}
Solving the third problem helps with the first two. The Burnside process \citep{jerrum1993uniform} is a Markov chain on $\X$ with stationary distribution 
\[
    \pi(x) = \frac{1}{Z|\O_x|}.
\] 
This means that if $X$ is sampled from $\pi$, then the orbit containing $X$ is uniform over the $Z$ possible orbits. Furthermore, starting at any $x \in \X$ and taking sufficiently many steps of the Burnside process will produce a random $X \in \X$ that is approximately distributed according to $\pi$. An important feature of the Burnside process is that it can be `lumped to orbits' \citep{diaconis2005analysis}. This means that if $(X_i)_{i=0}^n$ is a Markov chain on $\X$ that evolves according to the Burnside process, then the orbit process $(\O_{X_i})_{i=0}^n$ is also a Markov chain. In other words, the Burnside process can be run directly on the space of orbits. Special cases of the Burnside process can be used to sample P\'olya trees \citep{bartholdi2024algorithm}, set partitions \citep{paguyo2022mixing}, conjugacy classes in groups \citep{rahmani2022mixing, diaconis2025counting}, Bose--Einstein statistics \citep{diaconis2005analysis, diaconis2020hahn} and graph colorings for lifted probabilistic inference \citep{holtzen2020generating}. These examples, as well as general facts about the Burnside process and group actions are reviewed in Section~\ref{sec:Burnside}. 

In Sections~\ref{sec:tables} and \ref{sec:parts} we explain how the Burnside process can be used to approximately sample contingency tables and partitions--leading to applications in combinatorics and statistics. We describe both the original processes and their lumped versions. Our main finding is that the lumped Burnside process can be enormously more efficient than the original process. For contingency tables, the lumped process is exponentially faster than the original process. For partitions, the speed-up is quadratic. 

We also introduce a variation on the Burnside process called the \emph{reflected Burnside process}. This is a second Markov chain for partitions that combines the lumped Burnside process with a `deterministic jump.' The reflected Burnside process appears to mix incredibly rapidly, and empirically generates uniform partitions of size $10^{10}$ in around 50 steps. The advantages of the reflected Burnside process over the original Burnside process are discussed in Section~\ref{sec:transposing} and shown in Figure~\ref{fig:alternating-parts}.

The final section summarizes our experience with implementing the Burnside process. We explain why the lumped process is so much more efficient than the original process and give pointers to other applications of the Burnside process where lumping could help. Our implementation of the Burnside process for partitions and contingency tables is available in Julia \citep{bezanson2017julia} at \url{https://github.com/Michael-Howes/BurnsideProcess/}. The repository also contains notebooks for recreating the figures in this article.

\section{Background on group actions and the Burnside process}\label{sec:Burnside}

Many statisticians know group actions in the context of exchangeable random variables \citep{aldous1985exchangeability} or equivariant estimators \citep[Chapter~3]{lehmann1998theory}. In this section, we will give a formal introduction to group actions and their use in the Burnside process. Let $\X$ be a finite set and $G$ a finite group. A \emph{group action} of $G$ on $\X$ is a map from $\X \times G$ to $\X$ written $(x,g) \mapsto x^g$ satisfying the following two properties:
\begin{enumerate}
    \item For all $x \in \X$, $x^\mathrm{Id} = x$ where $\mathrm{Id}$ is the identity element of $G$.
    \item For all $g,h \in G$ and $x \in \X$, $(x^g)^h = x^{gh}$.
\end{enumerate}
The group action of $G$ on $\X$ defines an equivalence relation on $\X$ with $x \sim y$ if and only if $y = x^g$ for some $g \in G$. The equivalence classes are called \emph{orbits} and the orbit containing $x$ is denoted by $\O_x = \{x^g : g \in G\}$. The set $\X / G = \{ \O_x : x \in \X\}$ represents the space of all orbits.

Associated to every $x \in \X$ there is subgroup of $G$ containing all group element that fix $x$. This subgroup is called the stabilizer of $x$ and is written as $G_x = \{g \in G : x^g = x\}$. The \emph{orbit--stabilizer theorem} states that for all $x \in \X$,
\begin{equation}
    |G| = |\O_x||G_x|. \label{eq:orbit-stab}
\end{equation}
Finally, for each $g \in G$, we will let $\X^g = \{x \in \X : x^g = x\}$ denote the set of points fixed by $g$. For a clear, textbook account of the needed group theory, see \citet[Chapter~1, Section~7]{suzuki1986group}

The Burnside process was introduced by \citet{jerrum1993uniform} and developed for `computational P{\'o}lya theory' in \citet{goldberg1993automating, goldberg2002burnside}. The Burnside process is a Markov chain on $\X$. A single step of this Markov chain transitions from $x$ to $y$ as follows:
\begin{enumerate}
    \item From $x \in \X$, uniformly choose $g \in G_x = \{g \in G : x^g = x\}$.
    \item From $g \in G$, uniformly choose $y \in \X^g= \{y\in \X : y^g = y\}$. 
\end{enumerate}
The probability of moving from $x$ to $y$ is therefore 
\begin{eqnarray}
    P(x,y) = \frac{1}{|G_x|}\sum_{g \in G_x \cap G_y} \frac{1}{|\X^g|}.\label{eq:kernel}
\end{eqnarray}
As claimed in the introduction and shown in \citet{jerrum1993uniform}, this $P$ is an ergodic, reversible Markov kernel with stationary distribution $\pi(x) = \frac{1}{Z|\O_x|}$. To see this, note that by \eqref{eq:orbit-stab},
\begin{align*}
 \pi(x)P(x,y) &= \frac{1}{Z|\O_x| |G_x|}\sum_{g \in G_x \cap G_y} \frac{1}{|\X^g|}\\
 & = \frac{1}{Z|G|}\sum_{g \in G_x \cap G_y} \frac{1}{|\X^g|}\\
 & = \pi(y)P(y,x).
\end{align*}
This shows that $P$ is reversible with respect to $\pi$. Furthermore, we have that $P(x,y) > 0$ for all $x,y \in \X$. This is because it is always possible to transition from $x$ to $y$ by choosing the identity group element. This implies that $P$ is ergodic.  It follows that the Burnside process can be used to approximately sample uniform orbits, provided that the Markov chain has been run for sufficiently many steps.

The following examples give a feel for the two steps in the Burnside process.
\begin{example}[Bose--Einstein statistics]\label{ex:C2}
    Let $\X=C_2^n$ be the set of binary $n$-tuples. Let $G=S_n$ act on $\X$ by permuting coordinates. There are $n+1$ orbits with
    \[ 
        \O_i = \{x : |x|=i\} \text{ for } 0 \le i \le n,
    \]
    where $|x|$ is the number of $1$'s in $x$.  The Burnside process proceeds as follows:
    \begin{enumerate}
        \item From $x$, choose $\sigma \in S_n$ fixing $x$. Such a permutation must permute the $1$'s and $0$'s in $x$ amongst themselves. Thus, if $|x|=i$, then $G_x \cong S_i \times S_{n-i}$. It is easy to sample from $S_i \times S_{n-i}$ and hence $G_x$.
        \item From $\sigma$, we must choose $y \in C_2^n$ fixed by $\sigma$. This is also easy to do. First, decompose $\sigma$ into disjoint cycles $\sigma=c_1c_2\cdots c_j$. Then, label each cycle $0$  or $1$ with probability $1/2$ independently. The label of $y_i$ is determined by the cycle containing $i$. 
    \end{enumerate}
\end{example}

\begin{example}[Random matrices]
    Let $U_n(q)$ be the group of uni-upper triangular matrices with entries in the finite field $\mathbb{F}_q$. When $n$ is at all large (e.g. $n=20$) the conjugacy classes of $U_n(q)$ are unknown and, in a sense, unknowable. In representation theory, classification problems can either be `tame' or `wild', and the wild problems are hopelessly hard \citep{donovan1972some, drozd1980tame}. \citet{gudivok1990classes} showed that classifying the conjugacy classes of $U_n(q)$ is wild. Section~5 of \citet{diaconis2021complexity} gives more background and connects the complexity of $U_n(q)$ to theory of random graphs. 
    
    The Burnside process gives a way of sampling a uniform conjugacy class. The equivalence between sampling and counting \citep{jerrum1986random,broder1986hard} gives a route to effective enumeration of $|C_n(q)|$--the number of conjugacy classes. This program has be carried out in \citep{diaconis2025counting} who get useful estimates of $|C_n(2)|$ and $|C_n(3)|$ for $n$ up to $40$.

    Here $\X=G=U_n(q)$ and $X^M = M^{-1}XM$ so that $U_n(q)$ acts on itself by conjugation. The Burnside process runs on $\X$. For the first step, from a matrix $X \in U_n(q)$ one must sample $M \in U_n(q)$ so that $MX=XM$. For fixed $M$ this is a linear algebra problem over $\mathbb{F}_q$. Gaussian elimination can be used to find a basis for the solution space. Choosing a random linear combination of basis vectors gives a uniform choice of $M$. The second step is the same as the first (choose $X'$ with $X'M= MX'$). Of course, this is a serious undertaking, but it is quite possible and has been usefully carried out \citep{diaconis2025counting}. It provides an important supplement (and check) on exact computation.
\end{example}
\begin{example}[P\'olya trees]\label{ex:trees}
    Let $\mathcal{C}_n$ be the set of labeled trees on $n$ vertices rooted at the vertex 1. Cayley's formula shows $|\mathcal{C}_n|=n^{n-2}$ (so $|\mathcal{C}_4|=16$, see Figure~\ref{fig:cayley-trees}). The permutation group $S_{n-1}$ acts on $\mathcal{C}_n$ by permuting the vertices and fixing the roots. The orbits of $S_{n-1}$ on $\mathcal{C}_n$ are unlabeled P\'olya trees $\mathcal{T}_n$ (see $\mathcal{T}_4$ in Figure~\ref{fig:unlabeled-trees}). There is no formula for $|\mathcal{T}_n|$ and enumerative questions are challenging \citep{drmota2009random}. 
    
    Consider the Burnside process with $\X =\mathcal{C}_n$ and $G=S_{n-1}$. The two steps are:
    \begin{enumerate}
        \item From $t \in \mathcal{C}_n$, choose $\sigma \in S_{n-1}$ uniformly with $t^\sigma = t$.
        \item From $\sigma$, choose $t_1$ uniformly with $t_1^\sigma = t_1$.
    \end{enumerate}
    Both steps are challenging, in theory and practice. \citet{bartholdi2024algorithm} successfully implement the Burnside process. This requires an extension of Cayley's formula to a formula for the number of labeled trees fixed by a given permutation, and uses substantial input from the computational graph theory community \citep{anders2021parallel,anders2023engineering}  (and their accompanying software \href{https://automorphisms.org/}{https://automorphisms.org/}). The implementation efficiently yields samples from trees of size 10,000. The simulations show that published theorems were `off' and suggested that the limit distributions needed location and scaling shifts.
\end{example}

\def\treeA#1#2#3#4{\node[fill, label=right:{#1}] (a) at (0,0) {};
  \node[label=right:{#2}] (b) at (0,-0.5) {};
  \node[label=right:{#3}] (c) at (0,-1) {};
  \node[label=right:{#4}] (d) at (0,-1.5) {};
  \draw (a) -- (b) -- (c) -- (d);}
\def\treeB#1#2#3#4{\node[fill, label=right:{#1}] (a) at (0,-0.1) {};
  \node[label=right:{#2}] (b) at (0,-0.7) {};
  \node[label=right:{#3}] (c) at (-0.3,-1.3) {};
  \node[label=right:{#4}] (d) at (0.3,-1.3) {};
  \draw (a) -- (b) -- (c) (b) -- (d);}
\def\treeC#1#2#3#4{\node[fill, label=right:{#1}] (a) at (0,-0.1) {};
  \node[label=right:{#2}] (b) at (-0.3,-0.7) {};
  \node[label=right:{#3}] (c) at (0.3,-0.7) {};
  \node[label=right:{#4}] (d) at (0.3,-1.3) {};
  \draw (a) -- (b) (a) -- (c) -- (d);}
\def\treeD#1#2#3#4{\node[fill, label=right:{#1}] (a) at (0,-0.3) {};
  \node[label=right:{#2}] (b) at (-0.5,-1.1) {};
  \node[label=right:{#3}] (c) at (0,-1.1) {};
  \node[label=right:{#4}] (d) at (0.5,-1.1) {};
  \draw (a) -- (b) (a) -- (c) (a) -- (d);}

\begin{figure*}
  \centerline{\begin{tikzpicture}[every node/.style={circle,inner sep=0pt,minimum size=4pt,draw}]
    \begin{scope}[xshift=0cm]\treeA1234\end{scope}
    \begin{scope}[xshift=1cm]\treeA1243\end{scope}
    \begin{scope}[xshift=2cm]\treeA1324\end{scope}
    \begin{scope}[xshift=3cm]\treeA1342\end{scope}
    \begin{scope}[xshift=4cm]\treeA1423\end{scope}
    \begin{scope}[xshift=5cm]\treeA1432\end{scope}
    \begin{scope}[xshift=6.2cm]\treeB1234\end{scope}
    \begin{scope}[xshift=7.6cm]\treeB1324\end{scope}
    \begin{scope}[xshift=9cm]\treeB1423\end{scope}
    \begin{scope}[yshift=-2cm,xshift=1cm]\treeC1234\end{scope}
    \begin{scope}[yshift=-2cm,xshift=2.2cm]\treeC1243\end{scope}
    \begin{scope}[yshift=-2cm,xshift=3.4cm]\treeC1324\end{scope}
    \begin{scope}[yshift=-2cm,xshift=4.6cm]\treeC1342\end{scope}
    \begin{scope}[yshift=-2cm,xshift=5.8cm]\treeC1423\end{scope}
    \begin{scope}[yshift=-2cm,xshift=7cm]\treeC1432\end{scope}
    \begin{scope}[yshift=-2cm,xshift=8.6cm]\treeD1234\end{scope}
  \end{tikzpicture}}
\caption{The $16$ labeled rooted trees at $1$, for $n=4$. Here and below the root vertex is indicated as solid, and non-root vertices are hollow. Reproduced from \citet{bartholdi2024algorithm}.}
\label{fig:cayley-trees}
\end{figure*}
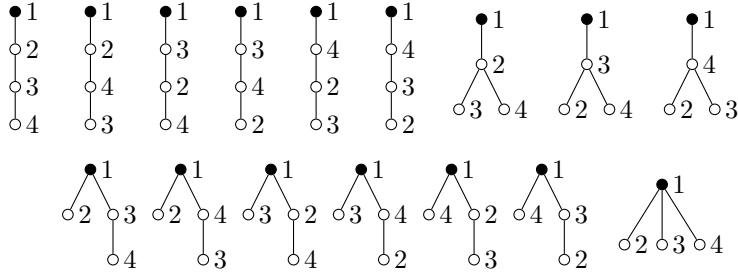

\begin{figure*}
  \centerline{\begin{tikzpicture}[every node/.style={circle,inner sep=0pt,minimum size=4pt,draw}]
    \begin{scope}[xshift=0cm]\treeA{}{}{}{}\end{scope}
    \begin{scope}[xshift=2cm]\treeB{}{}{}{}\end{scope}
    \begin{scope}[xshift=4cm]\treeC{}{}{}{}\end{scope}
    \begin{scope}[xshift=6cm]\treeD{}{}{}{}\end{scope}
  \end{tikzpicture}}
\caption{The $4$ unlabeled rooted trees, for $n=4$. Reproduced from \citet{bartholdi2024algorithm}.}\label{fig:unlabeled-trees}
\end{figure*}
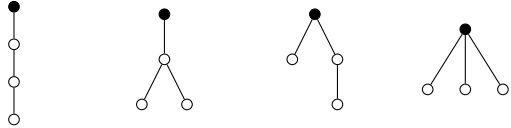

\subsection{Lumping the Burnside process}

\citet{diaconis2005analysis} showed that the Burnside process can be `lumped to orbits.' More precisely, let $\overline{P}$ be a Markov kernel on $\X/G=\{\O_x : x \in \X\}$ given by
\begin{equation}
    \overline{P}(\O_x,\O_y) = \sum_{z \in \O_y}P(x,z),\label{eq:lumped-kernel}
\end{equation}
where $P$ is the kernel for the Burnside process in \eqref{eq:kernel}. \citet{diaconis2005analysis} showed the following.
\begin{proposition}\label{prop:lump}
    Let $P$ and $\overline{P}$ be as above. Then
    \begin{enumerate}
        \item  The definition of $\overline{P}$ does not depend on the choice of $x \in \O_x$. 
        \item The kernel $\overline{P}$ is ergodic and symmetric. Thus, the uniform distribution on $\X/G$ is the unique invariant distribution of $\overline{P}$. 
        \item  If $(X_i)_{i=0}^n$ is a Markov chain on $\X$ with transition kernel $P$, then $(\O_{X_i})_{i=0}^n$ is a Markov chain on $\X/G$ with transition kernel $\overline{P}$.
    \end{enumerate}
\end{proposition}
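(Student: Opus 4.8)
The plan is to derive all three parts from a single \emph{equivariance} identity for the Burnside kernel: $P(x^h,z^h)=P(x,z)$ for all $x,z\in\X$ and $h\in G$. This should fall straight out of the closed form \eqref{eq:kernel}: conjugation by $h$ gives $G_{x^h}=h^{-1}G_xh$ and $G_{x^h}\cap G_{z^h}=h^{-1}(G_x\cap G_z)h$, while $\X^{h^{-1}gh}=(\X^g)^h$; hence $|G_{x^h}|=|G_x|$ and $|\X^{h^{-1}gh}|=|\X^g|$, and substituting $g=h^{-1}g'h$ in the sum in \eqref{eq:kernel} reproduces $P(x,z)$. Part 1 is then immediate: for $x'=x^h\in\O_x$, the map $z\mapsto z^h$ is a bijection of $\O_y$ (orbits are $G$-invariant), so $\overline{P}(\O_{x'},\O_y)=\sum_{z\in\O_y}P(x^h,z)=\sum_{z\in\O_y}P(x^h,z^h)=\sum_{z\in\O_y}P(x,z)=\overline{P}(\O_x,\O_y)$. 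The same computation shows $\overline{P}$ is a genuine Markov kernel, since $\sum_{\O_y}\overline{P}(\O_x,\O_y)=\sum_{z\in\X}P(x,z)=1$ because the orbits partition $\X$.

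For part 2 I would obtain symmetry from the reversibility of $P$ established above together with the observation that $\pi(x)=1/(Z|\O_x|)$ is constant on each orbit. Detailed balance gives $P(x,z)=\tfrac{|\O_x|}{|\O_y|}P(z,x)$ whenever $z\in\O_y$; summing over all $x'\in\O_x$ and $z\in\O_y$ and applying part 1 on both sides yields $|\O_x|\,\overline{P}(\O_x,\O_y)=\tfrac{|\O_x|}{|\O_y|}\sum_{z\in\O_y}\sum_{x'\in\O_x}P(z,x')=|\O_x|\,\overline{P}(\O_y,\O_x)$, so $\overline{P}$ is symmetric. (Alternatively one can reorganize the double sum into the manifestly symmetric form $\overline{P}(\O_x,\O_y)=|G|^{-1}\sum_{g\in G}|\O_x\cap\X^g|\,|\O_y\cap\X^g|/|\X^g|$.) For ergodicity, $\overline{P}(\O_x,\O_y)\ge P(x,y)>0$ for every $x,y$ — the identity element witnesses $P(x,y)>0$, exactly as in the excerpt — so $\overline{P}$ is irreducible and aperiodic. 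A symmetric Markov kernel on the finite set $\X/G$ keeps the uniform distribution stationary, since $\sum_{\O_x}\overline{P}(\O_x,\O_y)=\sum_{\O_x}\overline{P}(\O_y,\O_x)=1$, and ergodicity makes it the unique invariant distribution.

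For part 3 I would invoke the classical lumpability criterion: if $\sum_{z\in A}P(x,z)$ depends on $x$ only through the orbit $\O_x$ for every block $A\in\X/G$ — which is precisely part 1 — then $(\O_{X_i})_{i=0}^n$ is Markov with kernel $\overline{P}$. For completeness I would reprove this by the standard conditioning argument: expand $\Prob(\O_{X_0}=A_0,\dots,\O_{X_n}=A_n)$ as $\sum_{x_0\in A_0,\dots,x_n\in A_n}\mu(x_0)\prod_{j=0}^{n-1}P(x_j,x_{j+1})$, where $\mu$ is the law of $X_0$; the innermost sum $\sum_{x_n\in A_n}P(x_{n-1},x_n)$ equals $\overline{P}(A_{n-1},A_n)$ by part 1 and \eqref{eq:lumped-kernel}, independently of which $x_{n-1}\in A_{n-1}$ occurs, so it factors out and leaves $\overline{P}(A_{n-1},A_n)\,\Prob(\O_{X_0}=A_0,\dots,\O_{X_{n-1}}=A_{n-1})$; dividing gives the desired one-step conditional law.

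The only genuine obstacle is conceptual rather than computational: a function of a Markov chain need not itself be Markov, so all the content of part 3 lies in checking the lumpability hypothesis — and part 1 is exactly what supplies it. Everything else (the conjugation and relabelling bookkeeping in parts 1--2, and verifying irreducibility, aperiodicity and stationarity in part 2) is routine once the equivariance of $P$ is in hand.
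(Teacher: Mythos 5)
Your proposal is correct and follows essentially the same route as the paper: the equivariance identity $P(x^h,z^h)=P(x,z)$ is just a repackaging of the paper's Lemma~\ref{lem:conjugate-stabilizers} (conjugate stabilizers and fixed-point sets) plus the change of variables in \eqref{eq:kernel}, symmetry is likewise deduced from reversibility of $P$ with respect to $\pi$, ergodicity from strict positivity via the identity element, and part~3 from the standard lumpability (Dynkin) criterion, which you reprove rather than cite. No gaps.
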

For completeness, a proof of Proposition~\ref{prop:lump} is given in Appendix~\ref{appn:lump}. It can take work to translate the definition of $\overline{P}$ in \eqref{eq:lumped-kernel} into a useful formula or description for the chance of transitioning from one orbit to another. 
\begin{example}[Example~\ref{ex:C2} continued]\label{ex:C2-revisited}
    For $\X=C_2^n$ and $G=S_n$, the orbits are indexed by $\{0,1,\ldots,n\}$. Let $\overline{P}(j,k)$ be the chance of moving from $j$ to $k$ in one step of the lumped Burnside process. The following result is given in Section~3 of \citet{diaconis2005analysis}. There it is used to bound the mixing time of the Burnside process for $S_n$ acting on $C_2^n$.  

    Let 
        \[
            \alpha^n_k = \binom{2k}{k}\binom{2(n-k)}{n-k}\slash  2^{2n}  \quad 0 \le k  \le n,
        \]
        be the discrete arcsine distribution \citep[Chapter~3]{feller1968introduction}. Then
    \begin{align}\label{eq:C2-lumped}
        \overline{P}(0,k) & = \overline{P}(n,k) = \alpha^n_k,\nonumber\\
        \overline{P}(j,k) &=\sum_{l=(j-k-m)_+}^{\min\{j,k\}}\alpha^j_l \alpha^{n-j}_{k-l} \\
        \overline{P}(j,k) &=\overline{P}(k,j) = \overline{P}(n-j,k)=\overline{P}(j,n-k)\nonumber.
    \end{align}
    This result gives a new method for running the Burnside process. Equation~\ref{eq:C2-lumped} says that the distribution $\overline{P}(j,\cdot)$ is the convolution of two discrete arcsine distributions. Thus, to transition from $j$ to $k$ we can independently sample $k_1$ and $k_2$ from the discrete arcsine distributions with parameters $j$ and $n-j$ and then set $k=k_1+k_2$. It is easy to sample from the discrete arcsine distribution since the discrete arcsine distribution is a Beta-Binomial distribution with parameters $n$ and $\alpha=\beta=1/2$.
\end{example}

\subsection{Rates of convergence for the Burnside process}

It is natural to ask about rates of convergence for the Burnside process. \citet{goldberg2002burnside} shows that the Burnside process can be exponentially slow for certain examples. In Section~\ref{sec:transposing} we introduce a variation on the Burnside process for partitions and find empirically that it converges incredibly rapidly. For P\'olya trees as described in Example~\ref{ex:trees}, \citet{bartholdi2024algorithm} report empirical convergence in 30 steps for $n=10,000$ (!). In general there are strong connections between the Burnside process and the popular Swendsen--Wang algorithm~\citep{jerrum1993uniform,diaconis_hnr}. 

There has been some work on rates of convergence in special cases. For $\X = C_2^n$ and $G=S_n$ as in Examples~\ref{ex:C2} and \ref{ex:C2-revisited}, \citet{diaconis2020hahn} show that
\begin{theorem}[Theorem~1.1 in \citet{diaconis2020hahn}]
    Let $x_0\in \X=C_2^n$ be the point $(0,0,\ldots,0)$, and $P_{x_0}^j$ be the probability distribution on $\X$ induced by starting at $x_0$ and taking $j$ steps of the Burnside process. Then, for all $j \ge 4$ and all $n$
    \[
        \frac{1}{4}\left(\frac{1}{4}\right)^j \le \Vert P_{x_0}^j- \pi \Vert_{\mathrm{TV}} \le 4 \left(\frac{1}{4}\right)^j.
    \]
\end{theorem}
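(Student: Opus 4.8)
The plan is to pass to the lumped chain of \eqref{eq:C2-lumped}, diagonalise it with discrete orthogonal polynomials, and read off matching $L^2$ upper and lower bounds.

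\textbf{Reduction to orbits.} First I would check that $P_{x_0}^j$ is uniform on each orbit $\O_i=\{x:|x|=i\}$ given the orbit, for every $j\ge 1$. For $j=1$ this is immediate: the Burnside kernel is $G$-equivariant, $P(x^\tau,y^\tau)=P(x,y)$, and $x_0$ is fixed by all of $S_n$, so $P(x_0,\cdot)$ is $S_n$-invariant. For $j>1$ it propagates by induction, because reversibility of $P$ gives $\sum_{z\in\O_i}P(z,y)=\tfrac{|\O_i|}{|\O_y|}\overline P(|y|,i)$, which depends on $y$ only through $|y|$; hence $P_{x_0}^j(y)=\sum_i\big(\sum_{z\in\O_i}P_{x_0}^{j-1}(z)P(z,y)\big)$ is again a function of $|y|$ alone once $P_{x_0}^{j-1}$ is constant on orbits. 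Since $\pi$ is also uniform within orbits, $\|P_{x_0}^j-\pi\|_{\mathrm{TV}}=\|\mu_j-\nu\|_{\mathrm{TV}}$, where $\mu_j$ is the $j$-step distribution of the lumped chain $\overline P$ started at $0$ and $\nu$ is the uniform distribution on $\{0,1,\dots,n\}$.

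\textbf{Spectrum of $\overline P$.} By Proposition~\ref{prop:lump}, $\overline P$ is symmetric, hence self-adjoint on $L^2(\nu)$. From \eqref{eq:C2-lumped}, a step from $j$ sends $j\mapsto L+M$ with $L\sim\mathrm{BetaBin}(j,\tfrac12,\tfrac12)$ and $M\sim\mathrm{BetaBin}(n-j,\tfrac12,\tfrac12)$ independent; using the falling-factorial moments $\Expect[L^{\underline m}\mid j]=4^{-m}\binom{2m}{m}j^{\underline m}$ together with Vandermonde's identity $(L+M)^{\underline d}=\sum_i\binom di L^{\underline i}M^{\underline{d-i}}$, one sees that $\overline P$ maps polynomials in $j$ of degree $\le d$ to polynomials of degree $\le d$. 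Therefore the eigenfunctions of $\overline P$ are the orthogonal polynomials for $\nu$ — the discrete Chebyshev (Hahn with $\alpha=\beta=0$) polynomials $Q_0,\dots,Q_n$ — and, reading off the leading coefficient, the eigenvalue on $Q_d$ is
\[
  \beta_d=\frac{1}{4^d}\sum_{i=0}^d(-1)^{d-i}\binom di\binom{2i}{i}\binom{2(d-i)}{d-i}=\Expect\!\big[(\Theta_1-\Theta_2)^d\big],\qquad \Theta_1,\Theta_2\iid\Beta(\tfrac12,\tfrac12),
\]
since $4^{-i}\binom{2i}{i}=\Expect[\Theta^i]$. A short evaluation gives $\beta_d=0$ for odd $d$ and $\beta_{2m}=\binom{2m}{m}^2/16^m$ for even $d=2m$; in particular $\beta_0=1$, $\beta_1=0$, $\beta_2=\tfrac14$, and $\beta_d\le\beta_4=\tfrac{9}{64}<\tfrac14$ with $\beta_{2m}=O(1/m)$ (Stirling) for all $d\ge 3$. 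So $\tfrac14$ is the second-largest eigenvalue, well separated from the rest.

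\textbf{The two bounds.} Normalising the $Q_d$ in $L^2(\nu)$, the standard reversible-chain $\chi^2$ estimate gives $4\|\mu_j-\nu\|_{\mathrm{TV}}^2\le\sum_{d\ge 1}\beta_d^{2j}Q_d(0)^2=(1/4)^{2j}Q_2(0)^2+\sum_{d\ge 4,\ d\text{ even}}\beta_d^{2j}Q_d(0)^2$, using $\beta_1=0$ and $\beta_d=0$ for odd $d$. A direct computation with the degree-$2$ discrete Chebyshev polynomial (proportional to $k^2-nk+\tfrac{n(n-1)}{6}$) gives $Q_2(0)^2\le 5$; combined with $Q_d(0)^2=O(d)$ uniformly in $n$ and $\beta_{2m}=O(1/m)$, the tail is $o\big((1/4)^{2j}\big)$ once $j\ge 4$, so $\|\mu_j-\nu\|_{\mathrm{TV}}\le 4(1/4)^j$ with room to spare. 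For the lower bound, test $\mu_j-\nu$ against $Q_2$: since $Q_2$ is $\nu$-orthogonal to the constants, $\sum_k Q_2(k)\big(\mu_j(k)-\nu(k)\big)=\beta_2^{\,j}Q_2(0)=(1/4)^j Q_2(0)$, whence $\|\mu_j-\nu\|_{\mathrm{TV}}\ge\tfrac12(1/4)^j\,|Q_2(0)|/\|Q_2\|_\infty$. The parabola $k^2-nk+\tfrac{n(n-1)}{6}$ attains its maximum modulus at an endpoint $k\in\{0,n\}$ for $n\ge 3$, while $|Q_2(0)|/\|Q_2\|_\infty=\tfrac12$ at $n=2$; so $|Q_2(0)|/\|Q_2\|_\infty\ge\tfrac12$ for all $n\ge 2$, giving $\|\mu_j-\nu\|_{\mathrm{TV}}\ge\tfrac14(1/4)^j$. (For $n=1$ the chain is stationary after one step, so the statement is to be read with $n\ge 2$.)

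\textbf{Main obstacle.} The one genuinely delicate point is controlling $\sum_{d\ge 3}\beta_d^{2j}Q_d(0)^2$ \emph{uniformly in $n$}: the crude Parseval identity $\sum_d Q_d(0)^2=1/\nu(0)=n+1$ is useless here because it blows up as $n\to\infty$, so one must invoke the explicit form of the discrete Chebyshev polynomials to see that the normalised endpoint values obey $Q_d(0)^2=O(d)$ with an $n$-free constant (they increase to the Legendre value $2d+1$ as $n\to\infty$). Everything else — the equivariance argument of the reduction step, the eigenvalue identity, and the explicit degree-$2$ computations that pin down the constants $\tfrac14$ and $4$ — is routine.
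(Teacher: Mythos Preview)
The paper does not prove this theorem; it is quoted from \citet{diaconis2020hahn} with only the one–sentence summary ``The proof uses an explicit diagonalization of the lumped chain $\overline{P}$ where all the eigenvalues and eigenvectors are known.'' Your proposal carries out exactly that program, and the details you supply are correct: the lumped kernel preserves the polynomial degree filtration and is symmetric, so its eigenfunctions are the discrete Chebyshev (Hahn, $\alpha=\beta=0$) polynomials; your leading–coefficient computation gives the eigenvalues $\beta_{2m}=\binom{2m}{m}^2/16^m$ with $\beta_2=\tfrac14$ the spectral gap; and your $L^2$ upper bound and $Q_2$–test lower bound both go through with the stated constants. Your identification of the only nontrivial point is also right: the explicit Hahn normalisation gives $\phi_d(0)^2=(2d+1)\prod_{k=1}^d\frac{n+1-k}{n+1+k}\le 2d+1$, which is the $n$–free $O(d)$ bound you need, and without it the crude $\sum_d\phi_d(0)^2=n+1$ would destroy uniformity in $n$. (Your caveat about $n=1$ is correct; the lower bound requires $n\ge 2$.)
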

This shows that a finite number of steps suffices for convergence no matter how large $n$ is. The proof uses an explicit diagonizational of the lumped chain $\overline{P}$ where all the eigenvalues and eigenvectors are known. For the analogous chain on $C_k^n$, Aldous \citep[Section~12.1.6]{aldous2014reversible} gives a coupling argument that shows that order $k\log n$ steps are sufficient. Zhong \citep{zhong2025ewens} studies a weighted Burnside process on $C_k^n$ and provides a sharper analysis.

The Burnside process for contingency tables is studied in Chapter~5 of \citet{dittmer2019counting} where he shows that the Burnside process mixes quickly for large, sparse tables. We also study this example in Section~\ref{sec:tables}. Our emphasis is on implementing the lumped Burnside process and on the connection between contingency tables and double cosets. Our results thus nicely complement \citet{dittmer2019counting}. 

Careful analysis of a Burnside process for set partitions in is \citet{paguyo2022mixing}. Mixing time results for a nice collection of conjugacy class chains is in \citet{rahmani2022mixing}. The results in \citet{rahmani2022mixing} work for `CA groups.' A group is a CA group if the centralizer of any non-central element is Abelian. These include the three-dimensional Heisenberg group mod $p$, the affine group mod $p$ and $\mathrm{GL}_2(\mathbb{F}_p)$. For these groups, a combination of character theory, coupling and geometric arguments give matching upper and lower bounds. 

All of this said, careful mixing time analysis of the Burnside process is in an embryonic state. This includes the two examples below. We are working on it.

\section{Contingency tables}\label{sec:tables}

A \emph{contingency table} is an $I \times J$ array of non-negative integers $(T_{i,j})$ frequently used to record cross classified data. For example, Tables~\ref{fig:table1} and \ref{fig:table2} show two classical data sets; the first is a classification by hair color and eye color. The second is a classification by number of children and yearly income. A standard task of statistical analysis of such cross classified data is to test for independence of the two variables. This test is most often done by computing the chi-square statistic
\begin{equation}
    \chi^2 =\chi^2(T)= \sum_{i,j} \frac{(T_{i,j}-r_ic_j/n)^2}{r_{i}c_{j}/n}, \label{eq:chi-sq}
\end{equation}
with $r_{i}=\sum_{j}T_{i,j}$, $c_{j}=\sum_{i}T_{i,j}$ and $n=\sum_{i,j}T_{i,j}$. Standard practice compares $\chi^2(T)$ to its limit distribution--the chi-squared distribution on $(I-1)(J-1)$ degrees of freedom. To illustrate, in Table~\ref{fig:table1}, $I=J=4$, $\chi^2=138.29$ on $3 \times 3=9$ degrees of freedom. The value $138.29$ is way out in the tail of the limiting distribution and the standard test rejects the null hypothesis of independence.

\begin{table*}[t]
    \centering
    \caption{Eye Color vs hair color for $n=592$ subjects. Reproduced from Table~1 in \citet{diaconis1985volume}}
    \begin{tabular}{crrrrr}
    \toprule
    & \multicolumn{4}{c}{\textbf{Hair Color}} & \\ 
    \cmidrule(lr){2-5}
    \textbf{Eye Color} & \textbf{Black} & \textbf{Brunette} & \textbf{Red} & \textbf{Blond} & \textbf{Total} \\
    \midrule
    Brown & 68 & 119 & 26 & 7 & 220 \\
    Blue  & 20 & 84  & 17 & 94 & 215 \\
    Hazel & 15 & 54  & 14 & 10 & 93 \\
    Green & 5  & 29  & 14 & 16 & 64 \\
    \midrule
    \textbf{Total} & 108 & 286 & 71 & 127 & 592 \\
    \bottomrule
    \end{tabular}
    \label{fig:table1}
\end{table*}

\begin{table*}[t]
    \centering
    \caption{Number of children vs yearly income for $n=25,263$ Swedish families. Reproduced from Table~2 in \citet{diaconis1985volume}}
    \begin{tabular}{crrrrr}
    \toprule
    &\multicolumn{4}{c}{\textbf{Yearly Income}}& \\ 
    &\multicolumn{4}{c}{\textbf{Units of 1,000 Kroner}}&\\
    \cmidrule(lr){2-5}
    \textbf{Number of Children} & \textbf{0--1} & \textbf{1--2} & \textbf{2--3} & \textbf{3+} & \textbf{Total} \\
    \midrule
    0  & 2161 & 3577 & 2184 & 1636 & 9558 \\
    1  & 2755 & 5081 & 2222 & 1052 & 11110 \\
    2  & 936  & 1753 & 640  & 306  & 3635  \\
    3  & 225  & 419  & 96   & 38   & 778   \\
    $\geq 4$ & 39   & 98   & 31   & 14   & 182   \\
    \midrule
    \textbf{Total} & 6116 & 10928 & 5173 & 3046 & 25263 \\
    \bottomrule
    \end{tabular}
    \label{fig:table2}
\end{table*}

The statistical background is covered in any introductory statistical text and \citet{agresti1992survey} is an extensive survey. Early on, R.A. Fisher \citep{fisher1970statistical} suggested looking at the conditional distribution of the chi-square statistic when the row and column sums are fixed, and the two variables are independent. In a later development, Diaconis and Efron \citep{diaconis1985volume} suggested looking at the uniform distribution on tables with fixed row and columns sums as a way of interpreting `large' values of $\chi^2$. 

\citet{diaconis1985volume} led to a healthy development, now subsumed under the label `algebraic statistics.' The main questions are can one enumerate all such tables or pick one at random. These problems turn out to be $\#P$-complete, meaning exact enumeration is impossible and a host of approximations have been developed \citep{dyer1997sampling}. An influential paper of Diaconis and Sturmfels \citep{algebraic} develops Markov chain Monte Carlo approximations. It also contains a comprehensive review of conditional testing. \citet{SIS} further treat these contingency table problems using sequential importance sampling. See \citet{almendra2024markov} for recent references.

This section develops an algorithm for uniform generation of contingency tables with given row and column sums. The algorithm uses the Burnside process for double cosets in the permutation group $S_n$ and previously appeared in \citet[Chapter~5]{dittmer2019counting}. Double cosets are reviewed in Section~\ref{sec:double-cosets} which also contains a general procedure for running the Burnside process for double cosets. The connection between contingency tables and double cosets in $S_n$ is discussed in Section~\ref{sec:tables-cosets}. In Section~\ref{sec:tables-burnside} we describe the lumped Burnside process for contingency tables and in Section~\ref{sec:tables_complexity} we analyze the complexity of both the original and lumped processes. The lumped process is applied to Tables~\ref{fig:table1} and \ref{fig:table2} in Section~\ref{sec:tables-examples}. We compare the results with previous efforts \citep{diaconis1985volume,algebraic} and summarize our experience.

\subsection{Double cosets}\label{sec:double-cosets}

In this section we develop a general method for uniformly sampling double cosets via the Burnside process. The connection to contingency tables is developed in Sections~\ref{sec:tables-cosets} and \ref{sec:tables-burnside}. Throughout $H$ and $K$ will be subgroups of a finite group $G$. The product group $H \times K$ acts on $G$ by
\begin{equation}
    s^{h,k} = h^{-1}sk.\label{eq:HK-action}
\end{equation}
This gives an equivalence relation on $G$
\[
    s \sim t \Longleftrightarrow t = h^{-1}sk \text{ for some } (h,k) \in H \times K.
\]
The equivalence classes are double cosets, $HsK$ denotes the double coset containing $s$ and  $H\backslash G \slash K$ denotes the set of all double cosets. These definitions are standard in basic group theory \cite[Page~23]{suzuki1986group}. Classical facts are
\begin{align}
    |HsK| &= \frac{|H||K|}{|H\cap sKs^{-1}|},\label{eq:coset-size}\\
    |H\backslash G\slash K| &=\frac{1}{|H||K|}\sum_{h,k}|G^{h,k}|,\nonumber
\end{align}
where $G^{h,k} = \{s : s^{h,k} = s\}$. Despite these nice formulas, enumerating or describing double cosets can be an intractable problem. \citet{diaconis2022statistical} study and provide references for a host of examples where the enumeration problems are interesting. Many of the examples have connections to areas of mathematics beyond group theory. 

The Burnside process applies to the action of $H\times K$ on $G$ and generates uniformly chosen double cosets. In this case, the Burnside process is a Markov chain on $G$. To carry it out, two steps must be implemented:
\begin{enumerate}
    \item From $s\in G$, uniformly choose $(h,k)$ in 
    \[
        (H \times K)_s = \{(h,k)\in H \times K:s^{h,k}=s\}.
    \]
    \item From $(h,k)$, uniformly choose $t$ in 
    \[
        G^{h,k}=\{t \in G : t^{h,k}=t\}.
    \]
\end{enumerate}
To implement these steps, here is a characterization of $(H \times K)_s$ and $G^{h,k}$. In the following, $C_G(h)$ denotes the centralizer of $h$ in $G$. That is $C_G(h) = \{s \in G: sh = hs\}$.
\begin{lemma}\label{lem:cosets}
    Let $H \times K$ act on $G$ as in \eqref{eq:HK-action}. Then
    \begin{enumerate}
        \item For all $s \in G, h \in H$ and $k \in K$, $s^{h,k}=s$ if and only  if $k=s^{-1}hs$.
        \item For all $s \in G$, 
        \[
            (H \times K)_s = \{(h,s^{-1}hs) : h \in H \cap sKs^{-1}\}.
        \]
        \item For all $(h,k) \in H \times K$, if $G^{h,k}$ is non-empty, then $G^{h,k}=C_G(h)s$ where $s$ is any fixed element of $G^{h,k}$.
    \end{enumerate}
\end{lemma}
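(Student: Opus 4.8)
The plan is to prove the three parts in order, each by a short manipulation of the defining relation $s^{h,k}=h^{-1}sk$. For part~(1), I would rewrite $s^{h,k}=s$ as $h^{-1}sk=s$, then left-multiply by $h$ to get $sk=hs$, and left-multiply by $s^{-1}$ to reach $k=s^{-1}hs$; since each of these steps is reversible, this proves both directions simultaneously. Part~(2) is then immediate from part~(1) together with the constraint $k\in K$: for a given $h\in H$, the pair $(h,s^{-1}hs)$ lies in $H\times K$ precisely when $s^{-1}hs\in K$, i.e. when $h\in sKs^{-1}$, so the stabilizer is exactly the set of pairs $(h,s^{-1}hs)$ with $h\in H\cap sKs^{-1}$.

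For part~(3), I would first fix a base point $s\in G^{h,k}$, which is possible because $G^{h,k}$ is assumed nonempty. Applying the computation of part~(1) with $s$ replaced by an arbitrary $t\in G$ shows that $t\in G^{h,k}$ if and only if $k=t^{-1}ht$. Since also $k=s^{-1}hs$, an element $t$ is fixed exactly when $s^{-1}hs=t^{-1}ht$, which rearranges to $(ts^{-1})h(ts^{-1})^{-1}=h$, i.e. $ts^{-1}\in C_G(h)$, i.e. $t\in C_G(h)s$. Conversely, if $t=cs$ with $c\in C_G(h)$, then substituting back gives $t^{-1}ht=s^{-1}c^{-1}hcs=s^{-1}hs=k$, so $t\in G^{h,k}$; this yields $G^{h,k}=C_G(h)s$.

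There is no substantive obstacle here — all three statements reduce to one- or two-line conjugation identities in $G$. The only point that requires a little care is that part~(3) genuinely relies on the nonemptiness hypothesis: it is used to choose the base point $s$, and without it the set $G^{h,k}$ may be empty while the right-hand side $C_G(h)s$ is not even defined. It is worth remarking in passing that part~(3), combined with the orbit–stabilizer count, recovers the coset-size formula $|G^{h,k}|=|C_G(h)|$ when $h$ and $k$ are conjugate and $0$ otherwise, which is the shape one expects from \eqref{eq:coset-size}.
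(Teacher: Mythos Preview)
Your proposal is correct and follows essentially the same approach as the paper's proof: both arguments reduce each part to a direct manipulation of the identity $h^{-1}sk=s$, and for part~(3) both fix a base point $s\in G^{h,k}$ and characterize $t\in G^{h,k}$ via $ts^{-1}\in C_G(h)$. Your version is slightly more explicit (you spell out both inclusions in part~(3) and note the reversibility in part~(1)), but there is no substantive difference.
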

Lemma~\ref{lem:cosets} is proved in Appendix~\ref{appn:cosets}. The lemma implies that following procedure is equivalent to the Burnside process for $H \times K$ acting on $G$:
\begin{enumerate}
    \item From $s \in G$, choose $h$ uniformly from $H \cap sKs^{-1}$.
    \item From $h$, choose $g$ uniformly in $C_G(h)$ and set $t=gs$.
\end{enumerate}
Thus, to run the Burnside process for double cosets it suffices to be able to sample from the subgroups $H \cap sKs^{-1}$ and $C_G(h)$ for all $s \in G$ and $h \in H$. The above description will be useful in Section~\ref{sec:tables-burnside} when we derive the lumped Burnside process for contingency tables.

\subsection{Contingency tables as double cosets}\label{sec:tables-cosets}

This section explains the relationship between contingency tables with given row and column sums and double cosets in $S_n$. We roughly follow Section~1.3 of \citet{james1981representation}. Throughout $\lambda=(\lambda_i)_{i=1}^I$ and $\mu=(\mu_j)_{j=1}^J$ are two compositions of $n$. That is, $\lambda_i$ and $\mu_j$ are positive integers with $\lambda_1+\cdots+\lambda_I=\mu_1+\cdots+\mu_J=n$. Since $\lambda$ and $\mu$ are compositions of $n$ rather than partitions, we do not assume that they are in non-decreasing order. 

The space of contingency tables with row sums $\lambda$ and column sums $\mu$ will be denoted by $\T_{\lambda,\mu}$. That is, $\T_{\lambda,\mu}$ is the set
\begin{equation*}
    \left\{T \in \integers^{I \times J}_{\ge 0} : r_{i} = \lambda_i \text{ and } c_j=\mu_j \text{ for all } i, j\right\},
\end{equation*}
where, as before, $r_{i}=\sum_{j} T_{i,j}$ and $c_{j}=\sum_i T_{i,j}$. The composition $\lambda$ defines a set partition $L=(L_i)_{i=1}^I$ with 
\begin{align*}
    L_1&=\{1,\ldots,\lambda_1\},\\
    L_2&=\{\lambda_1+1, \ldots,\lambda_1+\lambda_2\}\\
    \vdots\\
    L_I&=\{n-\lambda_I+1,\ldots,n\}.
\end{align*} 
Likewise, $\mu$ defines the set partition $(M_j)_{j=1}^J$ with  
\begin{align*}
    M_1&=\{1,\ldots,\mu_1\}\\
    M_2&=\{\mu_1+1,\ldots,\mu_1+\mu_2\}\\
    \vdots \\
    M_J&=\{n-\mu_J+1,\ldots,n\}.
\end{align*}
For any set partition $A=(A_i)_{i=1}^I$, define the subgroup $S_A \subseteq S_n$ by
\begin{equation}
    \label{eq:S_A} S_A = \{\sigma \in S_n : \sigma(A_i)=A_i \text{ for all } i\}.
\end{equation}
Permutations in $S_A$ permute the elements of $A_1$ amongst themselves, the elements of $A_2$ amongst themselves and so on. Thus, $S_A \cong \prod_{i=1}^I S_{|A_i|}$. The subgroup $S_A$ is called a \emph{Young subgroup} or a \emph{parabolic subgroup}. For the compositions $\lambda$ and $\mu$, we will use $S_\lambda$ and $S_\mu$ to denote $S_L$ and $S_M$. The following map will be used to define a bijection between $S_\lambda \backslash S_n \slash S_\mu$ and $\T_{\lambda,\mu}$. 
\begin{definition}\label{def:biject}
    Let $(L_i)_{i=1}^I$ and $(M_j)_{j=1}^J$ be the set partitions above. Define a function $f:S_n \to \integers_{\ge 0}^{I \times J}$ by
    \[
        f(\sigma)_{i,j} = |L_i \cap \sigma(M_j)|.
    \]
    That is, $f(\sigma)$ is a contingency table and the $(i,j)$-th entry of $f(\sigma)$ is the number of elements in the set $M_j$ that $\sigma$ maps into the set $L_i$. 
\end{definition}
The image of $f$ is the space of contingency tables $\mathcal{T}_{\lambda,\mu}$, and it can be shown that $f(\sigma)=f(\tau)$ if and only if $S_\lambda\sigma S_\mu=S_\lambda\tau S_\mu$. In particular, we have the following result.

\begin{proposition}[Theorem~1.3.10 in \citet{james1981representation}]\label{prop:biject}
    The function $f$ in Definition~\ref{def:biject} is constant on the double cosets in $S_\lambda \backslash S_n \slash S_\mu$ and induces a bijection between $S_\lambda \backslash S_n \slash S_\mu$ and $\T_{\lambda,\mu}$.
\end{proposition}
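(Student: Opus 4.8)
The plan is to check directly that $f$ takes values in $\T_{\lambda,\mu}$, that it is constant on double cosets, and that the induced map $\bar f\colon S_\lambda\backslash S_n\slash S_\mu\to\T_{\lambda,\mu}$ is a bijection, with well-definedness and surjectivity handled by explicit block constructions and injectivity by a rearrangement argument.

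First I would record the combinatorial meaning of the entries: $f(\sigma)_{i,j}=|L_i\cap\sigma(M_j)|$ counts the elements $m$ with $m\in M_j$ and $\sigma(m)\in L_i$. Since $\{L_i\}_i$ and $\{M_j\}_j$ are each partitions of $\{1,\ldots,n\}$ and $\sigma$ is a bijection, $\{\sigma(M_j)\}_j$ is also a partition; summing $f(\sigma)_{i,j}$ over $j$ gives $|L_i|=\lambda_i$ and summing over $i$ gives $|M_j|=\mu_j$, so $f(\sigma)\in\T_{\lambda,\mu}$. For double-coset invariance, take $h\in S_\lambda$ and $k\in S_\mu$, so $h(L_i)=L_i$, $h^{-1}(L_i)=L_i$, and $k(M_j)=M_j$; then $f(h\sigma k)_{i,j}=|L_i\cap h\sigma k(M_j)|=|L_i\cap h\sigma(M_j)|=|h^{-1}(L_i)\cap\sigma(M_j)|=|L_i\cap\sigma(M_j)|=f(\sigma)_{i,j}$. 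Hence $f$ factors through the double cosets and defines $\bar f$.

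For surjectivity, given $T\in\T_{\lambda,\mu}$ I would partition each $M_j$ into blocks $B_{1j},\ldots,B_{Ij}$ with $|B_{ij}|=T_{i,j}$ (possible since $\sum_i T_{i,j}=\mu_j=|M_j|$) and each $L_i$ into blocks $A_{i1},\ldots,A_{iJ}$ with $|A_{ij}|=T_{i,j}$ (possible since $\sum_j T_{i,j}=\lambda_i=|L_i|$), then take $\sigma$ to be any permutation carrying $B_{ij}$ onto $A_{ij}$ for all $i,j$. Since $\sigma(M_j)=\bigcup_i A_{ij}$ and the sets $A_{i'j}$ lie in distinct $L_{i'}$, we get $L_i\cap\sigma(M_j)=A_{ij}$, so $f(\sigma)=T$.

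Injectivity is the step I expect to carry the real weight. Suppose $f(\sigma)=f(\tau)$; for each cell set $P_{ij}=\{m\in M_j:\sigma(m)\in L_i\}$ and $Q_{ij}=\{m\in M_j:\tau(m)\in L_i\}$, so $|P_{ij}|=|Q_{ij}|=f(\sigma)_{i,j}$, the family $\{P_{ij}\}_i$ partitions $M_j$ (likewise $\{Q_{ij}\}_i$), and $\{\sigma(P_{ij})\}_j$ partitions $L_i$ (likewise $\{\tau(Q_{ij})\}_j$, using $\sigma(P_{ij})=L_i\cap\sigma(M_j)$). I would define $k\in S_n$ block-by-block on each $M_j$ as any bijection $M_j\to M_j$ sending $Q_{ij}$ onto $P_{ij}$ for every $i$; this forces $k(M_j)=M_j$, so $k\in S_\mu$. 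Then set $h:=\tau k^{-1}\sigma^{-1}$, and check $h\in S_\lambda$ by tracing the image of $L_i$: $\sigma^{-1}(L_i)=\bigcup_j P_{ij}$, then $k^{-1}$ sends this to $\bigcup_j Q_{ij}$, and $\tau$ sends that to $\bigcup_j\tau(Q_{ij})=L_i$, so $h(L_i)=L_i$. Thus $\tau=h\sigma k$ with $h\in S_\lambda$, $k\in S_\mu$, i.e. $S_\lambda\sigma S_\mu=S_\lambda\tau S_\mu$, so $\bar f$ is injective. (One can alternatively phrase this through the standard tabloid description of $S_n/S_\mu$ and the induced $S_\lambda$-action, as in James, but the explicit block construction keeps things self-contained.) The main care needed throughout is bookkeeping: choosing the block decompositions so that they refine the partitions $\{L_i\}$ and $\{M_j\}$ compatibly, which is exactly what guarantees that the permutations $h$ and $k$ genuinely preserve the Young subgroups $S_\lambda$ and $S_\mu$.
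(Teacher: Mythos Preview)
Your proof is correct: the verification that $f$ lands in $\T_{\lambda,\mu}$, the double-coset invariance computation, the explicit block construction for surjectivity, and the rearrangement argument for injectivity are all sound. In particular, the injectivity step is carefully done---tracing $L_i$ through $\sigma^{-1}$, then $k^{-1}$, then $\tau$ to confirm $h\in S_\lambda$ is exactly the right check.

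As for comparison: the paper does not supply its own proof of this proposition. It is stated as a citation (Theorem~1.3.10 in James--Kerber) and used as a black box; no argument appears in the body or the appendix. Your self-contained block argument is essentially the standard one underlying the James--Kerber result, so there is nothing to contrast beyond noting that you have filled in what the paper deliberately outsourced.
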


Sampling $\sigma \in S_n$ uniformly and then computing $f(\sigma)$ induces a probability measure on $\T_{\lambda,\mu}$. This probability distribution is called the \emph{Fisher---Yates distribution} or the multiple hypergeometric distribution. Under the Fisher--Yates distribution, the probability of a contingency table $T=f(\sigma)$ is proportional to  $|S_\lambda \sigma S_\mu|$ which can be computed using \eqref{eq:coset-size}. Note that
\[
 S_\lambda \cap \sigma S_\mu \sigma^{-1} = S_L \cap S_{\sigma M}= S_{L \land \sigma M}, 
\]
where $\sigma M$ is the set partition $(\sigma(M_j))_{j=1}^J$ and $L \land \sigma M$ is the \emph{meet} of $L$ and $\sigma M$ meaning
\[
    L \land \sigma M = (L_i \cap \sigma(M_j))_{i=1,j=1}^{I,J}.
\]
The subgroup $S_\lambda \cap \sigma S_\mu \sigma^{-1}$ therefore has size $\prod_{i,j}f(\sigma)_{i,j}!$. These observations imply the following.
\begin{proposition}
    Let $p$ be the probability mass function for the Fisher--Yates distribution on $\T_{\lambda,\mu}$, then for all $T \in \T_{\lambda,\mu}$,
    \begin{equation}
        p(T)=\frac{1}{n!} \frac{\left(\prod_{i=1}^I \lambda_i!\right)\left(\prod_{j=1}^J \mu_j!\right)}{\prod_{i,j}T_{i,j}!}.\label{eq:FY}
    \end{equation}
\end{proposition}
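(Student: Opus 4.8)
The plan is to unwind the definition of the Fisher--Yates distribution and then assemble facts already in place. By definition, $p(T)$ is the probability that a uniformly random $\sigma \in S_n$ satisfies $f(\sigma) = T$, so $p(T) = |f^{-1}(T)|/n!$ and it suffices to determine $|f^{-1}(T)|$. By Proposition~\ref{prop:biject}, $f$ is constant on the double cosets $S_\lambda \backslash S_n \slash S_\mu$ and injective on the set of such cosets; since $T$ lies in the image of $f$, fixing any $\sigma_T$ with $f(\sigma_T) = T$ gives $f^{-1}(T) = S_\lambda \sigma_T S_\mu$ (the inclusion $\supseteq$ is ``constant on cosets'', the inclusion $\subseteq$ is injectivity). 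Hence $p(T) = |S_\lambda \sigma_T S_\mu|/n!$.

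Next I would evaluate the double-coset size with the classical formula \eqref{eq:coset-size}, taking $H = S_\lambda$, $K = S_\mu$, $s = \sigma_T$:
\[
    |S_\lambda \sigma_T S_\mu| = \frac{|S_\lambda|\,|S_\mu|}{|S_\lambda \cap \sigma_T S_\mu \sigma_T^{-1}|}.
\]
For the numerator, the Young subgroup isomorphism $S_A \cong \prod_i S_{|A_i|}$ gives $|S_\lambda| = \prod_{i=1}^I \lambda_i!$ and $|S_\mu| = \prod_{j=1}^J \mu_j!$. For the denominator, the text has already observed that $S_\lambda \cap \sigma_T S_\mu \sigma_T^{-1} = S_{L \land \sigma_T M}$ and hence has size $\prod_{i,j} f(\sigma_T)_{i,j}! = \prod_{i,j} T_{i,j}!$. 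Substituting these into $p(T) = |S_\lambda \sigma_T S_\mu|/n!$ yields \eqref{eq:FY}.

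There is no real obstacle here; the proposition is a bookkeeping corollary of Proposition~\ref{prop:biject} and \eqref{eq:coset-size}. The one point meriting a word of care is well-definedness: the quantity $\prod_{i,j} T_{i,j}!$ must not depend on the choice of $\sigma_T$, and it does not, being written entirely through $f(\sigma_T) = T$. As an alternative, one could avoid \eqref{eq:coset-size} and count $f^{-1}(T)$ directly: specifying $\sigma$ with $f(\sigma) = T$ is the same as partitioning each column block $M_j$ into parts of sizes $T_{1,j},\ldots,T_{I,j}$, partitioning each row block $L_i$ into parts of sizes $T_{i,1},\ldots,T_{i,J}$, and choosing a bijection between the matching parts; multiplying the two families of multinomial coefficients by the $\prod_{i,j} T_{i,j}!$ choices of bijections collapses, after cancellation, to $(\prod_i \lambda_i!)(\prod_j \mu_j!)/\prod_{i,j} T_{i,j}!$, and dividing by $n!$ recovers \eqref{eq:FY}. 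I would present the double-coset argument as the main line since it reuses machinery already developed, and note the direct count only as a remark.
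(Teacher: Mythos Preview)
Your proposal is correct and follows essentially the same route as the paper: the paper's argument is the paragraph immediately preceding the proposition, which says $p(T)$ is proportional to $|S_\lambda \sigma S_\mu|$, invokes \eqref{eq:coset-size}, and identifies $S_\lambda \cap \sigma S_\mu \sigma^{-1} = S_{L \land \sigma M}$ of size $\prod_{i,j} T_{i,j}!$. Your write-up is a careful unpacking of exactly these steps, with the alternative direct count being extra.
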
 
The Fisher--Yates distribution plays a central role in our implementation of the lumped Burnside process for contingency tables.

\subsection{The lumped process for contingency tables}\label{sec:tables-burnside}

In this section we derive the lumped Burnside process for contingency tables. That is, we describe the transitions from $f(\sigma) \in \T_{\lambda,\mu}$ to $f(\tau)$ when $\sigma \in S_n$ transitions to $\tau$ according to the Burnside process for $S_\lambda,S_\mu$ double cosets. In Section~\ref{sec:tables_complexity} we will quantify the computational benefits of the lumped process over the original process. These benefits are also demonstrated empirically in Section~\ref{sec:tables-examples} using the two examples in Tables~\ref{fig:table1} and \ref{fig:table2}. A similar description of the lumped Burnside process has previously appeared in \citet{dittmer2019counting}. However, his implementation has complexity that scales linearly in $n$, the sum of the table entries. As shown later, the average case complexity of our algorithm is on the order of $(\log n)^2$.

Lemma~\ref{lem:lumped_tables} contains the key insights behind our implementation of the lumped Burnside process. It describes the distribution of $f(\tau) \in \T_{\lambda,\mu}$ when $\tau$ is drawn uniformly from the set $S_n^{h,k}$ as in the second step of the Burnside process. In Corollary~\ref{cor:lumped_tables}, we observe that the distribution of $f(\tau)$ only depends on the cycle type of $h$ and $k$. Lemma~\ref{lem:lumped_tables2} then describes the distribution of the cycle type of $h$ and $k$ when $(h,k)$ is uniformly sampled from $(S_\lambda \times S_\mu)_\sigma$ as in the first step of the Burnside process.

Before stating these lemmas, we will set some notation and states two definitions. Through-out this section,  $(L_i)_{i=1}^I$ and $(M_j)_{j=1}^J$ are the set partitions corresponding to $\lambda$ and $\mu$. $h \in S_\lambda$ and $k \in S_\mu$ can be represented as $h=(h_i)_{i=1}^I$ and $k=(k_j)_{j=1}^J$ where $h_i$ and $k_j$ are permutations of $L_i$ and $M_j$ respectively. For any permutation, $h$, we will let $\mathcal{C}(h)$ denote the set of cycles of $h$.

\begin{definition}\label{def:lumped_tables1}
    Fix permutations $h=(h_i)_{i=1}^I \in S_\lambda$ and $k=(k_j)_{j=1}^J \in S_\mu$. For each $l \ge 1$, define $r^{(l)} \in \integers_{\ge 0}^I$ and $c^{(l)} \in \integers_{\ge 0}^J$ by
    \begin{align*}
        r^{(l)}_i& = |\{C \in \mathcal{C}(h_i): |C|=l\}|, \text{ and}\\
         c_j^{(l)}&=|\{C \in \mathcal{C}(k_j) : |C|=l\}|.
    \end{align*}
    In words, $r^{(l)}_i$ is the number of $l$ cycles of $h_i$ and $c^{(l)}_j$ is the number of $l$ cycles of $k_j$. The vector $(r^{(l)}_i)_{l \ge 1}$ is called the \emph{cycle type} of $h_i$ (and likewise for $(c^{(l)}_i)_{l \ge 1}$).
\end{definition}

\begin{definition}\label{def:lumped_tables2}
    Fix permutations $h=(h_i)_{i=1}^I \in S_\lambda$ and $k=(k_j)_{j=1}^J \in S_\mu$. Let  $\tau$ be a permutation in the set of fixed points $S_n^{h,k}$. For each $l \ge 1$, define $X^{(l)} \in \integers_{\ge 0}^{I \times J}$ by
    \[
        X^{(l)}_{i,j} = |\{C : |C| = l, C \in \mathcal{C}(h_i), \tau^{-1}(C) \in \mathcal{C}(k_j)\}|.
    \]
    In words, $X_{i,j}^{(l)}$ is the number of $l$ cycles of $h_i$ that $\tau^{-1}$ maps to an $l$ cycle of $k_j$. 
\end{definition}

Lemma~\ref{lem:lumped_tables} expressed the contingency table $f(\tau)$ in terms of the tables $X^{(l)}$ and describes the distribution of $X^{(l)}$ when $\tau$ is drawn uniformly from $S_n^{h,k}$.

\begin{lemma}\label{lem:lumped_tables}
    Fix $h  \in S_\lambda$ and $k\in S_\mu$ and let $r^{(l)}$ and $c^{(l)}$ be as in Definition~\ref{def:lumped_tables1}. If $\tau$ is $S_{n}^{h,k}$ and $X^{(l)}$ is as in Definition~\ref{def:lumped_tables2}, then
    \[
        f(\tau)=\sum_{l=1}^n lX^{(l)}.
    \]
    Furthermore, if $\tau$ is drawn uniformly from $S_n^{h,k}$, then the table $X^{(l)}$ is distributed according to the Fisher--Yates distribution with row sums $r^{(l)}$ and $c^{(l)}$ and $(X^{(l)})_{l \ge 1}$ are independent.
\end{lemma}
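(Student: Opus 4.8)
The plan is to understand the structure of a permutation $\tau$ fixed by the action of $(h,k)$, i.e.\ satisfying $h^{-1}\tau k = \tau$, equivalently $\tau k = h \tau$, equivalently $h = \tau k \tau^{-1}$. So $\tau$ conjugates $k$ to $h$. In particular $h$ and $k$ must have the same cycle type, and $\tau$ carries cycles of $k$ to cycles of $h$ of the same length: if $C$ is an $\ell$-cycle of $k$, then $\tau(C)$ is an $\ell$-cycle of $h$, and conversely every $\ell$-cycle of $h$ arises as $\tau(C')$ for a unique $\ell$-cycle $C'$ of $k$. Conversely, given any length-preserving bijection between the cycles of $k$ and the cycles of $h$, together with, for each matched pair of $\ell$-cycles, one of the $\ell$ possible ``rotational alignments'' identifying them, one reconstructs a unique $\tau \in S_n^{h,k}$. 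This is the standard description of the centralizer-coset / solution set of $h = \tau k \tau^{-1}$, and it is where the combinatorial content lives.

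With this structural description in hand, the first assertion $f(\tau) = \sum_{\ell} \ell X^{(\ell)}$ is essentially bookkeeping. Recall $f(\tau)_{i,j} = |L_i \cap \tau(M_j)|$. A point $m \in M_j$ lies in some $\ell$-cycle $C'$ of $k_j$; then $\tau(m)$ lies in the $\ell$-cycle $C = \tau(C')$ of $h$, which lies in some block $L_i$ precisely when $C \in \mathcal{C}(h_i)$ and $\tau^{-1}(C) = C' \in \mathcal{C}(k_j)$. Summing over $m \in M_j$ landing in $L_i$, each such matched $\ell$-cycle contributes exactly its length $\ell$ points, so $f(\tau)_{i,j} = \sum_\ell \ell\, X^{(\ell)}_{i,j}$. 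I would phrase this by partitioning $M_j$ according to which cycle of $k_j$ a point lies in, and noting $\tau$ maps that cycle bijectively onto a cycle of $h$ of the same length.

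For the distributional statement, I would use the reconstruction picture to describe a uniform $\tau \in S_n^{h,k}$ explicitly: uniformly choose, for each $\ell$, a bijection between the $\ell$-cycles of $k$ and the $\ell$-cycles of $h$ (uniformly among all such), and then independently choose one of $\ell$ alignments for each matched pair. The alignments are irrelevant for $X^{(\ell)}$ (which only records \emph{which} cycle maps to which), so $X^{(\ell)}$ depends only on the chosen bijection on $\ell$-cycles, and these bijections are independent across $\ell$ — giving independence of the $X^{(\ell)}$. It remains to identify the law of $X^{(\ell)}$. Label the $\ell$-cycles of $h$ by which block $L_i$ they belong to (there are $r^{(\ell)}_i$ of type $i$) and the $\ell$-cycles of $k$ by which block $M_j$ they belong to (there are $c^{(\ell)}_j$ of type $j$); a uniform bijection between these two labelled sets of equal size $\sum_i r^{(\ell)}_i = \sum_j c^{(\ell)}_j$ is exactly a uniform element of the symmetric group, and $X^{(\ell)}_{i,j}$ counts how many type-$i$ objects get matched to type-$j$ objects — which is the Fisher--Yates / multiple hypergeometric distribution with row sums $r^{(\ell)}$ and column sums $c^{(\ell)}$, by the interpretation of \eqref{eq:FY} as the pushforward of a uniform permutation under the map $f$ of Definition~\ref{def:biject}.

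The main obstacle is getting the bijection-plus-alignment parametrization of $S_n^{h,k}$ stated cleanly and verifying it is a genuine bijection (every fixed $\tau$ arises exactly once, uniform $\tau$ corresponds to uniform choices), since everything else reduces to this. One must be slightly careful that the count of alignments ($\ell$ per matched $\ell$-cycle pair) is correct and that $|S_n^{h,k}|$ factors as $\prod_\ell \big(\prod_m m^{a_m}\, a_m!\big)$ over cycle lengths so that the product measure on choices really is uniform; this is the same computation that gives $|C_{S_n}(h)|$, so I would either cite it or include the short verification.
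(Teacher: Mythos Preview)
Your proposal is correct and follows essentially the same approach as the paper's proof: both identify $S_n^{h,k}$ with the set of $\tau$ conjugating $k$ to $h$, use that $\tau$ induces a length-preserving bijection on cycles, and read off the Fisher--Yates law and the identity $f(\tau)=\sum_\ell \ell X^{(\ell)}$ from this. Your bijection-plus-alignment parametrization makes explicit the step the paper leaves implicit (that a uniform $\tau$ yields independent uniform bijections on the $\ell$-cycles for each $\ell$), but the argument is the same.
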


Lemma~\ref{lem:lumped_tables} is proved in Appendix~\ref{appn:lumped_tables}. To get a feel for the lemma, the following examples are helpful.

\begin{example}
    Suppose that $h$ and $k$ are both the identity permutation. In this case $r^{(1)}_i =\lambda_i$, $c^{(1)}_j = \mu_j$ and $r^{(l)}_i=c^{(l)}_j=0$ otherwise. We also have that $S_n^{h,k} = S_n$. Thus, Lemma~\ref{lem:lumped_tables} is simply stating that if $\tau$ is uniformly distributed in $S_n$, then $f(\tau)$ is distributed according to the Fisher--Yates distribution.
\end{example}

\begin{example}
    Suppose that $\lambda = \mu = (m,\ldots,m)$ ($I$ times) and that for each $i$, $h_i=k_i$ is an $m$-cycle. Then $r^{(m)}_i = c^{(m)}_i=1$ and $r^{(l)}_i=c^{(l)}_i=0$ for $l \neq m$. In this case Lemma~\ref{lem:lumped_tables} is stating that $f(\tau) = mX^{(m)}$ and that $X^{(m)} \in \integers_{\ge 0}^{I \times I}$ is a uniformly drawn permutation matrix. 
\end{example}

Lemma~\ref{lem:lumped_tables} has the following corollary which is useful for the lumped Burnside process.
\begin{corollary}\label{cor:lumped_tables}
    Let $h$, $k$ be as in Lemma~\ref{lem:lumped_tables}. If $\tau$ is uniformly distributed in $S_n^{h,k}$, then the distribution of $f(\tau)$ only depends on the vectors $(r^{(l)})_{l \ge 1}$ and $(c^{(l)})_{l \ge 1}$ from Definition~\ref{def:lumped_tables3}. 
\end{corollary}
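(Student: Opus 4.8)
The plan is to derive this as an essentially immediate consequence of Lemma~\ref{lem:lumped_tables}. The key observation is that Lemma~\ref{lem:lumped_tables} already expresses $f(\tau)$ as a fixed deterministic function of a family of random tables whose joint law is pinned down entirely by the data $(r^{(l)})_{l\ge1}$ and $(c^{(l)})_{l\ge1}$.

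Concretely, I would proceed as follows. First, invoke Lemma~\ref{lem:lumped_tables} to write $f(\tau)=\sum_{l=1}^{n} l\,X^{(l)}$, where the tables $(X^{(l)})_{l\ge1}$ are mutually independent and each $X^{(l)}$ follows the Fisher--Yates distribution with row sums $r^{(l)}$ and column sums $c^{(l)}$. Next, note from the explicit formula \eqref{eq:FY} that the Fisher--Yates distribution on $\T_{r^{(l)},c^{(l)}}$ is a function of nothing but the marginal vectors $r^{(l)}$ and $c^{(l)}$; hence the law of $X^{(l)}$ depends only on $(r^{(l)},c^{(l)})$. Combining this with independence, the joint law of the whole family $(X^{(l)})_{l\ge1}$ is a function only of $(r^{(l)})_{l\ge1}$ and $(c^{(l)})_{l\ge1}$. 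Finally, since $f(\tau)$ is obtained from this family by the fixed measurable map $(X^{(l)})_{l\ge1}\mapsto \sum_{l} l X^{(l)}$, its pushforward law is likewise determined by $(r^{(l)})_{l\ge1}$ and $(c^{(l)})_{l\ge1}$, which is exactly the claim.

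The only bookkeeping point worth checking, rather than a genuine obstacle, is finiteness and well-definedness of the sum: since $h_i$ permutes a set of size $\lambda_i\le n$ and $k_j$ a set of size $\mu_j\le n$, we have $r^{(l)}=c^{(l)}=0$ for all $l>n$, so the sum $\sum_{l=1}^n lX^{(l)}$ is finite and the map above is the same regardless of which (large) upper limit one writes. I also note that the stray reference to ``Definition~\ref{def:lumped_tables3}'' in the statement should read Definition~\ref{def:lumped_tables1}, where the vectors $(r^{(l)})_{l\ge1}$ and $(c^{(l)})_{l\ge1}$ are introduced. I expect no substantive difficulty; the content of the corollary is purely that a deterministic function of random variables whose law depends only on certain parameters has a law depending only on those parameters.
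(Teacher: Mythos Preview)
Your proposal is correct and matches the paper's approach: the corollary is stated without a separate proof in the paper, being treated as an immediate consequence of Lemma~\ref{lem:lumped_tables}, which is exactly the argument you give. Your observation about the misdirected reference to Definition~\ref{def:lumped_tables3} is also accurate; the vectors $(r^{(l)})$ and $(c^{(l)})$ are indeed introduced in Definition~\ref{def:lumped_tables1}.
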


Corollary~\ref{cor:lumped_tables} states that, when running the lumped Burnside process, we do not need to sample the full permutations $h$  and $k$. It is sufficient to just sample $r_i^{(l)}$ and $c_j^{(l)}$ from the correct distribution. Definition~\ref{def:lumped_tables3} and Lemma~\ref{lem:lumped_tables2} describe this distribution.
\begin{definition}\label{def:lumped_tables3}
    Let $\sigma$ be a permutation in $S_n$ and let $(h,k)$ be an element of the stabilizer $(S_\lambda \times S_\mu)_{\sigma}$. For each $i \in [I], j \in [J]$ and $l \ge 1$, define $a_{i,j}^{(l)}$ by 
    \[
        a_{i,j}^{(l)}=|\{C: |C|=l, C \in \mathcal{C}(h), C \subseteq L_i \cap \sigma(M_j)\}|.
    \]
    That is, $a_{i,j}^{(l)}$ is the number of $l$ cycles of $h$ contained in $L_i \cap \sigma(M_j)$.
\end{definition}
Lemma~\ref{lem:lumped_tables2} expresses $r_i^{(l)}$ and $c_j^{(l)}$ in terms of $a_{i,j}^{(l)}$ and describes the distribution of $a_{i,j}^{(l)}$.
\begin{lemma}\label{lem:lumped_tables2}
    Fix $T \in \T_{\lambda,\mu}$ and let $\sigma$ be any permutation with $f(\sigma)=T$. Let $(h,k)$ be uniformly sampled from $(S_\lambda \times S_\mu)_\sigma$. Let $r^{(l)}_i$ and $c^{(l)}_j$ be as in Definition~\ref{def:lumped_tables2} and let $a_{i,j}^{(l)}$ be as in Definition~\ref{def:lumped_tables3}. Then, 
    \[
        r^{(l)}_i = \sum_{j=1}^J a_{i,j}^{(l)} \quad \text{and}\quad c^{(l)}_j = \sum_{i=1}^{I}a_{i,j}^{(l)}.
    \]
    Furthermore, $a_{i,j}=\left(a_{i,j}^{(l)}\right)_{l \ge 1}$ is equal in distribution to the cycle type of a uniform permutation of length $T_{i,j}$ and $(a_{i,j})_{i,j}$ are independent across $i$ and $j$.
\end{lemma}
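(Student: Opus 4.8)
The plan is to reduce everything to the explicit description of the stabilizer $(S_\lambda \times S_\mu)_\sigma$ from Lemma~\ref{lem:cosets} together with the identity, already recorded in Section~\ref{sec:tables-cosets}, that $S_\lambda \cap \sigma S_\mu \sigma^{-1} = S_{L \wedge \sigma M}$ is the Young subgroup of the meet partition whose blocks are the sets $L_i \cap \sigma(M_j)$ of sizes $T_{i,j}$.

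\textbf{Step 1: reparametrize the stabilizer.} By part~2 of Lemma~\ref{lem:cosets}, $(S_\lambda \times S_\mu)_\sigma = \{(h,\sigma^{-1}h\sigma) : h \in S_\lambda \cap \sigma S_\mu \sigma^{-1}\}$, so $h \mapsto (h,\sigma^{-1}h\sigma)$ is a bijection from $S_{L\wedge\sigma M}$ onto the stabilizer. Hence sampling $(h,k)$ uniformly from $(S_\lambda \times S_\mu)_\sigma$ is the same as sampling $h$ uniformly from $S_{L\wedge\sigma M} \cong \prod_{i,j} S_{L_i \cap \sigma(M_j)}$ and setting $k=\sigma^{-1}h\sigma$. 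In particular $h$ decomposes as an \emph{independent} product of uniform permutations $h_{i,j}$ of the blocks $L_i \cap \sigma(M_j)$, and every cycle of $h$ lies inside exactly one such block.

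\textbf{Step 2: the two sum identities.} Since each cycle of $h$ sits inside a unique block $L_i \cap \sigma(M_j) \subseteq L_i$, the cycles of the restriction $h_i$ are the disjoint union over $j$ of the cycles lying in $L_i \cap \sigma(M_j)$; counting $l$-cycles gives $r_i^{(l)} = \sum_j a_{i,j}^{(l)}$. For the column identity I would use the one-line computation $k(\sigma^{-1}(x)) = \sigma^{-1}(h(x))$ to see that conjugation by $\sigma^{-1}$ carries a cycle of $h$ with support $C$ to a cycle of $k=\sigma^{-1}h\sigma$ with support $\sigma^{-1}(C)$, and that $\sigma^{-1}(L_i \cap \sigma(M_j)) = \sigma^{-1}(L_i) \cap M_j \subseteq M_j$. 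Thus the $l$-cycles of the restriction $k_j$ are in bijection with the $l$-cycles of $h$ lying in $\bigcup_i (L_i \cap \sigma(M_j)) = \sigma(M_j)$, which gives $c_j^{(l)} = \sum_i a_{i,j}^{(l)}$.

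\textbf{Step 3: distribution of $a_{i,j}$.} By Step 1 the restriction $h_{i,j}$ of $h$ to the block $L_i \cap \sigma(M_j)$ is a uniform permutation of a set of size $T_{i,j}$, independent across $(i,j)$; and $a_{i,j}^{(l)}$ is exactly the number of its $l$-cycles, so $a_{i,j}=(a_{i,j}^{(l)})_{l\ge1}$ is distributed as the cycle type of a uniform permutation of length $T_{i,j}$, with independence inherited from the product structure (empty blocks $T_{i,j}=0$ contribute the zero vector, consistent with the convention). There is no deep obstacle here: the only points that need care are the conjugation bookkeeping in Step 2 — verifying that $\sigma^{-1}$ sends cycle supports of $h$ to cycle supports of $k$ and tracking the images $\sigma^{-1}(L_i \cap \sigma(M_j))$ — and being scrupulous in Step 1 that the bijection $h\mapsto(h,\sigma^{-1}h\sigma)$ really transports the uniform measure on the stabilizer to the uniform measure on $S_{L\wedge\sigma M}$, which it does since it is a bijection. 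Once these are pinned down all three claims follow by counting.
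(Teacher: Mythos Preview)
Your proposal is correct and follows essentially the same route as the paper: both use Lemma~\ref{lem:cosets} to identify the stabilizer with $S_{L\wedge\sigma M}\cong\prod_{i,j}S_{L_i\cap\sigma(M_j)}$, deduce the independent uniform block permutations $h_{i,j}$, obtain the row identity by partitioning the cycles of $h_i$ over $j$, and obtain the column identity by transporting cycles along $k=\sigma^{-1}h\sigma$ so that $\sigma^{-1}(L_i\cap\sigma(M_j))\subseteq M_j$. Your write-up is slightly more explicit about the conjugation bookkeeping, but the argument is the same.
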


The proof of Lemma~\ref{lem:lumped_tables2} is given in Appendix~\ref{appn:lumped_tables2}. In our implementation, $(a_{i,j}^{(l)})$ are sampled efficiently using discrete stick breaking as described in Definition~\ref{def:stick-breaking}. The two steps of the Burnside process for contingency tables are
\begin{enumerate}
    \item From a contingency table $T$, sample $(a_{i,j}^{(l)})$ using discrete stick breaking and compute $(r^{(l)})_{l \ge 1}$ and $(c^{(l)})_{l \ge 1}$ as in Lemma~\ref{lem:lumped_tables2}.
    \item Given $(r^{(l)})_{l \ge 1}$ and $(c^{(l)})_{l \ge 1}$ sample $X^{(l)}$ and return $T' = \sum_{l \ge 1}lX^{(l)}$ as in Lemma~\ref{lem:lumped_tables}.
\end{enumerate}

A full description of the lumped chain is given at the end of this section in Algorithm~\ref{alg:tables}. The example below walks through a single step of the lumped Burnside process started from a contingency table of size 2 by 2.

\begin{example}
    Let $I=J=2$ and $\lambda=\mu=(5,5)$ (and so $n=10$) Suppose that the current state of the lumped Burnside process is the table 
    \[  T=\begin{bmatrix}
        5,0\\0,5
        \end{bmatrix}.
    \] In the first step of Algorithm~\ref{alg:tables}, we apply discrete stick breaking to the non-zero entries of $T$. Suppose that the upper left entry of $T$ splits into three fixed points and one 2 cycle. Suppose also that low right entry of $T$ splits into one fixed point and two 2 cycles. In the notation of Definition~\ref{def:lumped_tables3}, we have
    \[
    \begin{bmatrix}
        a_{1,1}^{(1)}, a_{1,2}^{(1)}\\
        a_{2,1}^{(1)}, a_{2,2}^{(1)}\\
    \end{bmatrix}=\begin{bmatrix}
        3&0\\
        0&1
    \end{bmatrix},\text{ and }  \begin{bmatrix}
        a_{1,1}^{(2)}, a_{1,2}^{(2)}\\
        a_{2,1}^{(2)}, a_{2,2}^{(2)}\\
    \end{bmatrix}=\begin{bmatrix}
        1&0\\
        0&2
    \end{bmatrix},
    \]
    and $a_{i,j}^{(l)}=0$ for $l> 2$. With these values of $a_{i,j}^{(l)}$, we get $r^{(1)}=c^{(1)}=[3,1]$ and $r^{(2)}=c^{(2)}=[1,2]$. In the second step of Algorithm~\ref{alg:tables}, $X^{(l)}$ is sampled from the Fisher--Yates distribution. Suppose that we sample 
     \[
    X^{(1)}=\begin{bmatrix}
        2&1\\
        1&0
    \end{bmatrix},\text{ and }  X^{(2)}=\begin{bmatrix}
        0&1\\
        1&1
    \end{bmatrix}.
    \]
    Then, the new state of the lumped Burnside process will be
    \[
    T' = X^{(1)} + 2 X^{(2)} = \begin{bmatrix}
        2&3\\
        3&2
    \end{bmatrix} \in \mathcal{T}_{\lambda,\mu}.
    \]
\end{example}

\begin{algorithm*}
\caption{Lumped Burnside process for contingency tables}
\begin{algorithmic}[1]
\Require{$T$ (contingency table of size $I \times J$)}
\Ensure{$T'$ (new contingency table sampled from the Burnside process)}

\Comment{Step 1}

\State{$r_{i}^{(l)} \gets 0$ for $i \in [I]$  and $l \in [n]$}
\State{$c_{j}^{(l)} \gets 0$ for $j \in [J]$ and $l \in [n]$}
\For{$i \gets 1$ to $I$}
    \For{$j \gets 1$ to $J$}
      \State{Sample $(\lambda_m)_{m=1}^M$ from the discrete stick breaking distribution on $T_{i,j}$}
      \For{$m \gets 1$ to $M$}
        \State{$r_{i}^{(\lambda_m)} \gets r_i^{(\lambda_m)} + 1$}
        \State{$c_{j}^{(\lambda_m)} \gets c_j^{(\lambda_m)}+ 1$}
      \EndFor  
    \EndFor
\EndFor

\Comment{Step 2}

\State{$T'_{i,j} \gets 0$ for $i \in [I]$ and $j \in [J]$}
\For{$l$ such that $r^{(l)}\neq 0$}
    \State{Sample $X^{(l)}$ from the Fisher--Yates distribution with row sums $r^{(l)}$ and column sums $c^{(l)}$}
    \State{$T' \gets T' + lX^{(l)}$}
\EndFor
\State \Return{$T'$}
\end{algorithmic}
\label{alg:tables}
\end{algorithm*}

\subsection{Computational complexity of the lumped Burnside process}\label{sec:tables_complexity}

The lumped chain has both memory and speed advantages over the original Burnside process. The memory advantages are easier to see. To run the original process, we need to store the permutation $\sigma$ which has memory requirement of the order $n \log n$ (since we are storing $n$ integers with order $\log n$ digits).  For the lumped process, we need to store the table $T$ which requires on the order of $IJ \log n$ memory.  The lumped chain also requires storing the non-zero values of $r_{i}^{(l)}$ and $c_j^{(l)}$. The number of non-zero values of $r_{i}^{(l)}$ and $c_j^{(l)}$ is upper bounded by the number of non-zero values of $a_{i,j}^{(l)}$. Since $a_{i,j}^{(l)}$ is drawn from a discrete stick breaking distribution, there are on average $O(\log T_{i,j})$ values of $l$ with $a_{i,j}^{(l)} \neq 0$. Thus, the total memory requirement of the lumped chain is $O(IJ \log n)$ on average.

The speed advantages of the lumped chain are also substantial. The values $(a_{i,j}^{(l)})_{l \ge 0}$ can be sampled using stick breaking with an average case complexity on the order of $(\log T_{i,j})^2$. Therefore, $(r^{(l)})_{l \ge 1}$ and $(c^{(l)})_{l \ge 1}$ can all be computed with complexity $O(IJ (\log n)^2)$. Given $(r^{(l)})_{l \ge 1}$ and $(c^{(l)})_{l \ge 1}$, we only need to sample $X^{(l)}$ for the non-zero values of $r^{(l)}$. Thus, for all but $IJ\log n$ values of $l$, $X^{(l)}$ is zero and does not need to be computed. For the non-zero values of $r^{(l)}$, $X^{(l)}$ can be sampled with complexity $O(IJ\log n)$. Thus, the time complexity of the lumped chain is $O((IJ\log n)^2)$. This is a substantial speed up over the original chain which has linear complexity. Indeed, simply computing $f(\sigma)$ has complexity of order $n$.

\subsection{Applications of the lumped process for contingency tables}\label{sec:tables-examples}

In this section, we run the Burnside process on the spaces of contingency tables corresponding to Tables~\ref{fig:table1} and \ref{fig:table2}. Our simulations agree with the results reported in \citet{diaconis1985volume} and highlight the benefit of the lumped process. 

Both Tables~\ref{fig:table1} and \ref{fig:table2} have values of the chi-square statistic \eqref{eq:chi-sq} that are very large compared to the chi-square statistics of tables drawn from the Fisher--Yates distribution \eqref{eq:FY}. In such settings, \citet{diaconis1985volume} suggest comparing the chi-square statistic to the statistics of tables drawn from the uniform distribution on $\mathcal{T}_{\lambda,\mu}$. Specifically, they suggest computing the \emph{volume statistic} given by
\[
    V(T) = \frac{\left|\{T' \in \mathcal{T}_{\lambda,\mu} : \chi^2(T') \ge \chi^2(T) \}\right|}{\left|\mathcal{T}_{\lambda,\mu}\right|}.
\]
That is, $V(T)$ is the proportion of tables $T' \in \mathcal{T}_{\lambda,\mu}$ that have a larger chi-square statistic than $T$. Here $\lambda$ and $\mu$ are the row and column sums of $T$. Large values of $V(T)$ are evidence of stronger dependence between the row and column variables in $T$. Note that $V(T)$ can also be written as
\begin{equation}
    V(T) = \Prob(\chi^2(T') \ge \chi^2(T)),\label{eq:vol_test}
\end{equation}
where $T'$ is drawn from the uniform distribution on $\mathcal{T}_{\lambda,\mu}$. Thus, $V(T)$ can be approximated by running the Burnside process on $\mathcal{T}_{\lambda,\mu}$ and reporting the proportion of sampled tables with $\chi^2(T') \ge \chi^2(T)$. The results of such a simulation are reported in Tables~\ref{fig:table-sim}. The simulation agrees with the findings in \citet{diaconis1985volume}. They find that Table~\ref{fig:table2} has a value of $\chi^2(T)$ that is much smaller than expected under the uniform distribution. On the other hand, Table~\ref{fig:table1} has a value of $\chi^2(T)$ that is fairly typical under the uniform distribution.  The numerical value of $V(T)$ for Table~\ref{fig:table1} agrees with the value reported in Section~6 of \citet{diaconis1995rectangular}. They use a different Markov chain to estimate $V(T)$ and report that $V(T) \approx 0.154$.

\begin{table*}
    \centering
    \caption{Estimates of the volume test statistic $V(T)$ for Tables~\ref{fig:table1} and \ref{fig:table2}. Each estimate was produced by running the lumped Burnside process for $2\times 10^6$ steps. The Burnside process was initialized at either Table~\ref{fig:table1} or Table~\ref{fig:table2} and run for $10^4$ steps as a `burn-in.' For each table, this procedure is repeated five times to get five estimates of $V(T)$. These five estimates and their median are reported for each table. On a personal computer, 1 hour was needed to perform all $10^7+5\times 10^4$ steps for Table~\ref{fig:table1}. For Table~\ref{fig:table2}, 4 hours and 20 minutes were needed to perform all $10^7+5\times 10^4$ steps.}
    \begin{tabular}{c|ccccc|c}
        & Run 1 & Run 2 & Run 3 & Run 4 & Run 5&Median\\
        \hline
        Table~\ref{fig:table1}&0.1545&0.1534&0.1532&0.1535&0.1533&0.1534\\
        Table~\ref{fig:table2}&$1.6\times 10^{-5}$&$1.35\times 10^{-5}$&$6.0\times 10^{-6}$&$8.0\times 10^{-6}$&$1.35\times 10^{-5}$&$1.35\times 10^{-5} $\\
    \end{tabular}
    \label{fig:table-sim}
\end{table*}

There are many methods that use Markov chain Monte Carlo in the context of hypothesis testing to estimate a tail probability such as \eqref{eq:vol_test}. The method we use to estimate $V(T)$ in Table~\ref{fig:table-sim} is consistent meaning that the estimates converge to $V(T)$ as the number of samples goes to infinity. However, the method does not have the error control guarantee that is sometimes desired for hypothesis testing. The methods in \citet{BC,Gerry1,Gerry2,howes2023markov} all provide such a guarantee. The remarkable thing about these methods is that they assume nothing about the mixing time of the Markov chain. Any of these methods can be used in conjunction with the lumped Burnside process.

Finally, Figure~\ref{fig:tables-run-time} shows the computational benefits of the lumped Burnside process.  This figure, and the analysis in Section~\ref{sec:tables_complexity}, show that the lumped process gives an exponential speed up over the unlumped process. The unlumped process has complexity that is linear in $n$ (the sum of the table entries). However, the complexity of the lumped process grows like a power of $\log n$.

\begin{figure*}
    \centering
    \includegraphics[width = 0.8\textwidth]{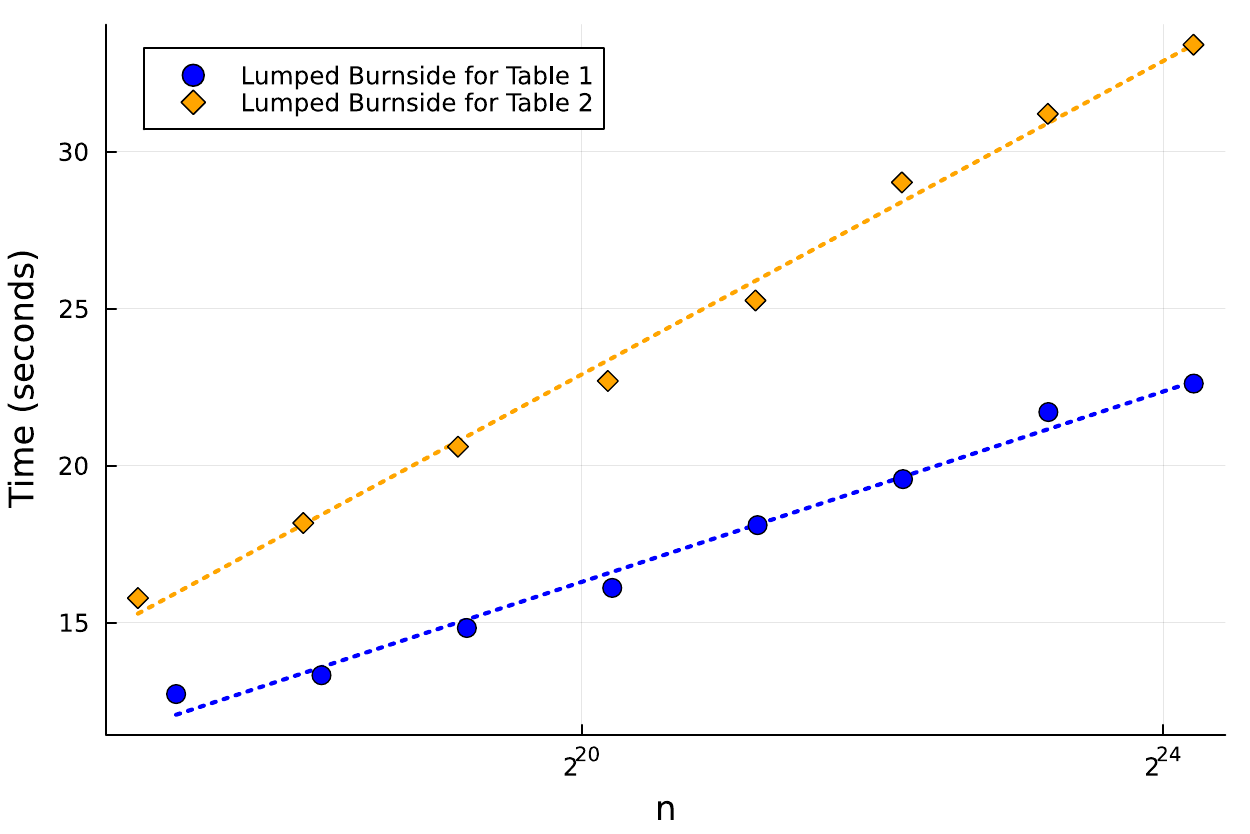}
    \caption{The above figure is a linear--log plot showing the per-step complexity of the Burnside process for contingency tables. The plot was generated by running the Burnside process for 10,000 steps on scaled copies of Tables~\ref{fig:table1} and \ref{fig:table2}. Straight line approximations are shown as dashed lines. We see that if the entries of the table are increased by a multiplicative factor, the run time increases by additive factor. This supports our analysis which shows that the lumped process has a run time that is at most $O((\log n)^2)$ for tables with fixed dimensions.}
    \label{fig:tables-run-time}
\end{figure*}

\section{Partitions}\label{sec:parts}

In combinatorial probability, one studies a set of combinatorial objects (permutations, graphs, matrices) by asking what does a typical element `look like.' The field is rich with limit theorems, and it is natural to ask if these limit theorems are accurate for real world applications (e.g. $n=52$). One approach is to compare the limits to simulations. This comparison in turn requires efficient sampling algorithms. In this section we develop a novel Markov chain for approximately sampling uniform partitions called the \emph{reflected Burnside process}. This Markov chain is used to evaluate limit theorems for partitions.

Section~\ref{sec:parts-background} reviews partitions and some of the limit theorems we compare our simulations to. Section~\ref{sec:lumped-parts} derives the Burnside process for uniformly sampling partitions. We use the fact that partitions are in bijection with the orbits of the permutation group $S_n$ acting on itself by conjugation. We explain this connection and work out the implementation details. Section~\ref{sec:transposing} introduces our reflected Burnside process which we use to illustrate (and test) both the algorithm and the limit theorems in Section~\ref{sec:parts-background}. The reflected Burnside process is an example of speeding up a Markov chain by adding  `deterministic jumps.' Section~\ref{sec:transposing} surveys the literature on this technique. We are excited to have found an example `in the wild.'

\subsection{Uniform partitions}\label{sec:parts-background}

Let $\mathcal{P}_n$ denote the set of partitions of $n$. We will represent partitions in exponential notation as $1^{a_1}2^{a_2}\cdots n^{a_n}$ or $(a_l)_{l= 1}^n$ where $a_l$ is the number of parts of size $l$. This notation is useful for describing the lumped Burnside process in Section~\ref{sec:lumped-parts}. The set of partitions of $n$ is thus
\[
    \mathcal{P}_n = \bigg\{(a_l)_{l = 1}^n :a_l \in \integers_{\ge 0},  \sum_{l= 1}^n la_l = n \bigg\}.
\]
Chapter 10 of \citet{nijenhuis1978combinatorial} contains an algorithm for uniformly sampling from $\mathcal{P}_n$. This algorithm is based on a bijective proof of an identity of Euler's that relates partitions of $n$ to partitions of $m < n$ and divisors of $m-n$. The main limitation of the algorithm in \citet{nijenhuis1978combinatorial} is that it requires computing and storing a table of approximately $n^2$ integers. 

There are also a number of algorithms for sampling partitions that are based on rejection sampling \citep{fristedt1993structure,arratia2016probabilistic}. These methods generate a partition of a random integer $N$ such that conditional on $N=n$, the partition is uniformly distributed on $\mathcal{P}_n$. To get a partition of size $n$, these methods repeatedly sample partitions until one is observed with $N=n$. These methods fit into a broader class of algorithms for sampling from combinatorial structures called \emph{Boltzmann samplers} \citep{Duchon2004Boltzmann,Flajolet2007unlabelled}. The efficiency of the rejection sampling methods depends on the probability of $N=n$. For the method presented in \citet{fristedt1993structure}, this probability is on the order of $n^{-3/4}$. While for the probabilistic divide and conquer method in \citet{arratia2016probabilistic}, the probability that $n=N$ is bounded away from $0$ as $n \to \infty$. The complexity of the method in \citet{arratia2016probabilistic} is $O(n^{1/2+\varepsilon})$ for any $\varepsilon > 0$. This is the same as our conjectured complexity of using the reflected Burnside process in Section~\ref{sec:transposing}. The probabilistic divide and conquer method has the advantage of being an exact algorithm rather than an approximate method like the Burnside process. However, developing the Burnside process in detail here may give insight to other instances of the Burnside process where no exact methods apply. We also believe that it is healthy to have multiple sampling algorithms that can be used to check each other.

\citet{fristedt1993structure} contains many results about the asymptotic distribution of `features' of random partitions as $n\to \infty$. In Section~\ref{sec:transposing}, these asymptotic results are used to empirically evaluate convergence of the reflected Burnside process. We will use the following theorems from \citet{fristedt1993structure}. 
\begin{theorem}[Theorem~2.2 in \citet{fristedt1993structure}]\label{thrm:fristedt1}
    Let $(a_l)_{l =1}^n$ be a uniformly distributed partition of $n$. Then for fixed $l$ and all $x \ge 0$
    \[
       \lim_{n \to \infty} \Prob\left(\frac{\pi}{\sqrt{6n}}la_l \le x\right) = 1-e^{-x}.
    \]
\end{theorem}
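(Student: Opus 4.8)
The plan is to reduce the statement to the asymptotics of the partition function. Write $p(m)=|\mathcal{P}_m|$ for the number of partitions of $m$, with the convention $p(m)=0$ for $m<0$. The elementary observation driving the proof is that deleting $k$ parts equal to $l$ is a bijection between partitions of $n$ with $a_l\ge k$ and \emph{all} partitions of $n-lk$; hence, under the uniform measure on $\mathcal{P}_n$,
\[
    \Prob(a_l\ge k)=\frac{p(n-lk)}{p(n)} \qquad (k\ge 0).
\]
Fix $x\ge 0$ and set $k_n=\bigl\lfloor \sqrt{6n}\,x/(\pi l)\bigr\rfloor+1$, so that $lk_n=\sqrt{6n}\,x/\pi+O(1)$. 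Then $\Prob\bigl(\tfrac{\pi}{\sqrt{6n}}\,la_l\le x\bigr)=1-\Prob(a_l\ge k_n)=1-p(n-lk_n)/p(n)$, and it suffices to show $p(n-lk_n)/p(n)\to e^{-x}$.

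Next I would feed in the Hardy--Ramanujan asymptotic $p(m)=(1+o(1))\,\tfrac{1}{4\sqrt3\,m}\exp\!\bigl(\pi\sqrt{2m/3}\bigr)$. Because $lk_n/n\to 0$ the algebraic prefactors contribute only a factor $1+o(1)$, while a first-order Taylor expansion of the square root gives
\[
    \pi\sqrt{\tfrac{2(n-lk_n)}{3}}-\pi\sqrt{\tfrac{2n}{3}}
    = -\frac{\pi}{\sqrt{6n}}\,lk_n+O\!\bigl(n^{-1/2}\bigr).
\]
The leading coefficient here is forced by the identity $\bigl(\pi\sqrt{2/3}/(2\sqrt n)\bigr)\cdot\bigl(\sqrt{6n}/\pi\bigr)=1$, which is precisely why $\pi/\sqrt{6n}$ is the right normalisation. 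Since $\tfrac{\pi}{\sqrt{6n}}\,lk_n\to x$, the displayed difference tends to $-x$, hence $p(n-lk_n)/p(n)\to e^{-x}$, and the theorem follows.

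A more conceptual route, intrinsic to \citet{fristedt1993structure}, uses the conditioning device: let $(a_l)_{l\ge 1}$ be independent with $\Prob(a_l=k)=(1-q^l)q^{lk}$ and $q=e^{-\pi/\sqrt{6n}}$; conditioning this law on $\{\sum_l la_l=n\}$ returns the uniform measure on $\mathcal{P}_n$, since the conditional probability of each partition of $n$ is the same. Under the unconditioned law the statement is immediate, $\Prob\bigl(\tfrac{\pi}{\sqrt{6n}}\,la_l\ge x\bigr)=q^{\,l\lceil \sqrt{6n}\,x/(\pi l)\rceil}\to e^{-x}$, and one transfers this to the conditioned law via a local limit theorem for $N=\sum_l la_l$. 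Because $a_l$ is independent of $N-la_l$,
\[
    \Prob(a_l=k\mid N=n)=\Prob(a_l=k)\,\frac{\Prob(N-la_l=n-lk)}{\Prob(N=n)},
\]
and one checks $\Expect N=n+o\bigl(n^{3/4}\bigr)$, $\Var N\asymp n^{3/2}$, and that $la_l$ is of order $\sqrt n$ with high probability; as $\sqrt n=o(n^{3/4})=o(\sqrt{\Var N})$, the local limit theorem makes the ratio above $1+o(1)$ uniformly over the relevant range of $k$, so the conditioned law of $\tfrac{\pi}{\sqrt{6n}}\,la_l$ inherits the $\mathrm{Exp}(1)$ limit.

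The main obstacle is the analytic input. In the first approach it is the Hardy--Ramanujan formula itself: the full leading asymptotic is needed, not merely a bound up to constants --- even $\log p(m)=\pi\sqrt{2m/3}+O(\log m)$ is insufficient, since the $O(\log m)$ error need not cancel in $\log p(n)-\log p(n-lk_n)$ while $lk_n$ is of order $\sqrt n$. In the second approach the obstacle is the local limit theorem for $N=\sum_l la_l$, a sum of independent but non-identically distributed lattice variables: this requires showing $|\Expect[e^{i\theta N}]|$ decays in $n$ uniformly for $\theta\in[-\pi,\pi]$ bounded away from $0$, together with a Laplace/saddle-point expansion of $\Expect[e^{i\theta N}]$ near $\theta=0$. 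That estimate is the technical heart of \citet{fristedt1993structure}; everything else above is bookkeeping.
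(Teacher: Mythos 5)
This theorem is quoted from \citet{fristedt1993structure} and the paper offers no proof of it, so there is nothing internal to compare against; judged on its own terms, your proposal is correct. Your first route is the cleaner and more self-contained of the two: the identity $\Prob(a_l\ge k)=p(n-lk)/p(n)$ is an exact bijection (delete/insert $k$ parts of size $l$), and once the Hardy--Ramanujan asymptotic is granted, the expansion $\pi\sqrt{2(n-lk_n)/3}-\pi\sqrt{2n/3}=-\tfrac{\pi}{\sqrt{6n}}lk_n+O(n^{-1/2})$ is legitimate because $lk_n\asymp\sqrt n$ makes the quadratic correction $O(n^{-1/2})$, and the polynomial prefactors contribute $1+o(1)$; your remark that a $\log p(m)$ estimate with an uncancelled $O(\log m)$ error would not suffice is exactly the right caution. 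The second route is essentially Fristedt's own argument (independent geometrics with $q=e^{-\pi/\sqrt{6n}}$, conditioned on total size $n$, transferred by a local limit theorem), and you correctly isolate where the work lies: the LLT for the non-identically-distributed lattice sum $N$, plus the observation that removing the single summand $la_l$ (whose variance is $O(n)=o(\Var N)$) does not disturb it. The trade-off is as you describe: the first route imports one hard classical theorem and is otherwise elementary but does not generalize to joint statements about several $a_l$ or about the number of parts; the second requires the LLT machinery but yields the stronger joint and process-level limit theorems that Fristedt actually proves (and that the paper's Theorem~\ref{thrm:fristedt2} also relies on).
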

\begin{theorem}[Theorem~2.3 in \citet{fristedt1993structure}]\label{thrm:fristedt2}
    Let $(a_l)_{l= 1}^n$ be a uniformly distributed partition of $n$. Let $I=\sum_{l= 1}^n a_l$ be the number of parts in $(a_l)_{l\ge 1}$, then for all $x \in \reals$
    \[
        \lim_{n \to \infty}\Prob\left(\frac{\pi}{\sqrt{6n}}I - \log \frac{\sqrt{6n}}{\pi} \le x\right) = e^{-e^{-x}}.
    \]
\end{theorem}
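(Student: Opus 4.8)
The plan is to prove Theorem~\ref{thrm:fristedt2} via Fristedt's conditioning device, i.e.\ the grand-canonical (Boltzmann) ensemble for partitions. For $q \in (0,1)$, let $\mathbb{P}_q$ be the law of an independent sequence $(a_l)_{l \ge 1}$ under which $a_l$ is geometric with $\mathbb{P}_q(a_l = k) = (1-q^l)q^{lk}$ for $k \ge 0$. A direct computation gives $\mathbb{P}_q\big((a_l) = (b_l)\big) = \big(\prod_l(1-q^l)\big)\,q^{\sum_l l b_l}$, so the conditional law of $(a_l)_{l\ge1}$ under $\mathbb{P}_q$ given $\{\sum_l l a_l = n\}$ is exactly uniform on $\mathcal{P}_n$, for every $q$. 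Put $\epsilon_n = \pi/\sqrt{6n}$ and $q_n = e^{-\epsilon_n}$; this is calibrated so that $\mathbb{E}_{q_n}[T_n] = n + o(\sigma_n)$, where $T_n := \sum_l l a_l$ and $\sigma_n^2 := \operatorname{Var}_{q_n}(T_n) \asymp n^{3/2}$. Write $I = \sum_l a_l$ and $Y_n = \epsilon_n I - \log(1/\epsilon_n)$, the statistic in the theorem.

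\emph{Step 1: limit law of $Y_n$ under the tilted measure.} Under $\mathbb{P}_{q_n}$, $I$ is a sum of independent geometrics, so $\mathbb{E}_{q_n}[e^{tI}] = \prod_{l \ge 1}\frac{1-q_n^l}{1-q_n^l e^{t}}$ for $t < \epsilon_n$. Set $t = s\epsilon_n$ with $s < 1$ fixed and analyze $\sum_{l\ge1}\big[\log(1-q_n^l) - \log(1-q_n^l e^{s\epsilon_n})\big]$ as $n \to \infty$: splitting at $l \asymp \epsilon_n^{-1}$ and using $q_n^l \approx 1 - \epsilon_n l$ in the bulk, the summand is $\log\frac{l}{l-s} + o(1)$, and $\sum_{l=1}^{L}\log\frac{l}{l-s} = \log\frac{\Gamma(L+1)\Gamma(1-s)}{\Gamma(L+1-s)} = s\log L + \log\Gamma(1-s) + o(1)$; an Euler--Maclaurin (or Hurwitz-zeta/Mellin) estimate controls the transition region and the tail. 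This yields $\mathbb{E}_{q_n}[e^{s\epsilon_n I}] = \epsilon_n^{-s}\,\Gamma(1-s)\,(1+o(1))$, hence $\mathbb{E}_{q_n}[e^{sY_n}] \to \Gamma(1-s)$ for $s$ near $0$. Since $\Gamma(1-s)$ is the moment generating function of the standard Gumbel law $e^{-e^{-x}}$, we get $Y_n \Rightarrow$ Gumbel under $\mathbb{P}_{q_n}$. (Heuristically, by Theorem~\ref{thrm:fristedt1} each $\epsilon_n l\,a_l$ is asymptotically $\mathrm{Exp}(1)$, so $Y_n \approx \sum_l E_l/l - \log(\#\{\text{terms}\})$ with $E_l$ i.i.d.\ $\mathrm{Exp}(1)$ --- the classical Gumbel representation; the same device also proves Theorem~\ref{thrm:fristedt1}.)

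\emph{Step 2: transfer to the uniform law.} By the conditioning identity, $\mathbb{P}_{\mathrm{unif}_n}(Y_n \le x) = \mathbb{P}_{q_n}(Y_n \le x \mid T_n = n)$, so it suffices to prove the conditional local limit statement $\sqrt{2\pi}\,\sigma_n\,\mathbb{P}_{q_n}(Y_n \le x, T_n = n) \to \mathbb{P}(G \le x)$ for standard Gumbel $G$, and divide by its $x=+\infty$ special case $\sqrt{2\pi}\,\sigma_n\,\mathbb{P}_{q_n}(T_n = n) \to 1$. Two ingredients: (i) $Y_n$ and $(T_n-n)/\sigma_n$ are jointly asymptotically $(G, Z)$ with $G$ Gumbel \emph{independent} of $Z \sim N(0,1)$ --- the fluctuations of $Y_n$ live on the $O(1)$-size parts (for which $\operatorname{Var}_{q_n}(\epsilon_n a_l) \to 1/l^2$, summable), while the $n^{3/2}$-order fluctuations of $T_n$ live on parts of size $\asymp\sqrt n$; truncating at a large fixed $K$ writes $Y_n$ as a function of the ``small'' block $(a_l)_{l\le K}$ plus an $o_K(1)$ error and $(T_n-n)/\sigma_n$ as a function of the independent ``large'' block $(a_l)_{l>K}$ plus an $o_P(1)$ error, giving asymptotic independence (let $n\to\infty$, then $K\to\infty$); equivalently $\mathbb{E}_{q_n}[e^{sY_n + i\theta(T_n-n)/\sigma_n}] \to \Gamma(1-s)e^{-\theta^2/2}$. (ii) The usual local-CLT inputs for $T_n$: a Gaussian approximation of its characteristic function near $0$ and the bound $\sup_{\delta \le |\theta| \le \pi}|\mathbb{E}_{q_n}[e^{i\theta T_n}]| \to 0$, which holds because the spread-out summands $l\,a_l$ make $T_n$'s characteristic function small off the origin. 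Plugging (i) into the Fourier-inversion formula for $\mathbb{P}_{q_n}(Y_n \le x, T_n = n)$ and discarding $|\theta| \ge \delta$ using (ii) gives the conditional local limit theorem.

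\emph{Where the difficulty lies.} The delicate part is Step 2: that conditioning on the probability-$\Theta(\sigma_n^{-1})$ event $\{T_n = n\}$ does not perturb the limit law of $Y_n$. Although $Y_n$ and $T_n$ are correlated, the normalized correlation is $O(n^{-1/4}\log n) \to 0$, and converting this into genuine asymptotic independence requires the blocking argument above with uniform control of the errors, together with Fourier estimates for $\mathbb{E}_{q_n}[e^{i\theta T_n}]$ (both the Gaussian behaviour near $0$ and smallness away from $0$) sharp enough to survive the inversion. Step~1 is routine, though the transition region $l \asymp \epsilon_n^{-1}$ in the Euler product must be handled with care to pin down the $\log(1/\epsilon_n)$ centering exactly. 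An alternative route is purely generating-function based --- a two-dimensional saddle-point analysis of $\prod_{l\ge1}(1-zq^l)^{-1}$ to extract $[z^m q^n]$ --- but it is, if anything, more technical.
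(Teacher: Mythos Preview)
The paper does not prove this theorem at all: it is quoted verbatim as Theorem~2.3 of Fristedt (1993) and used only as a benchmark against which to check the simulations of the reflected Burnside process (Figures~\ref{fig:limit_laws} and \ref{fig:alternating-parts}). There is therefore no ``paper's own proof'' to compare your proposal against.

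That said, your sketch is essentially Fristedt's original argument. The conditioning device $\mathbb{P}_q(\,\cdot \mid \sum_l l a_l = n)$ with independent geometrics, the tuning $q_n = e^{-\pi/\sqrt{6n}}$, and the transfer via a conditional local limit theorem for $T_n$ are exactly the ingredients in \cite{fristedt1993structure}. Your identification of the delicate point --- that the small parts carrying $Y_n$ and the $\Theta(\sqrt n)$-size parts carrying the fluctuations of $T_n$ decouple asymptotically --- is correct and is what Fristedt formalizes. One minor comment: the paper notes that Fristedt actually states the result for the \emph{largest part} rather than the number of parts $I$; the two have the same distribution by transposing the Young diagram, so your direct treatment of $I$ is fine but is a slight reformulation of what appears in the cited source.
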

Stated probabilistically, Theorem~\ref{thrm:fristedt1} states that $a_l$ converges to an exponential random variable when appropriately normalized. Likewise, Theorem~\ref{thrm:fristedt2} states that $I$ converges to a Gumbel distribution when appropriately normalized. Theorem~\ref{thrm:fristedt1} is actually a weaker statement than Theorem~2.2 in \citet{fristedt1993structure} which gives the limiting distribution of several entries $(a_1,\ldots,a_{l_n})$ for $l_n$ growing at a rate slower than $n^{1/4}$. Also, Theorem~2.3 in \citet{fristedt1993structure} is stated in terms of the largest part of $(a_l)_{l\ge 1}$. However, as noted in the introduction of \citet{fristedt1993structure}, the largest part has the same distribution as the number of parts. This can be seen be reflecting the Young diagram of the partition as in Figure~\ref{fig:young_diagrams}. Figure~\ref{fig:limit_laws} contains an illustration of Theorems~\ref{thrm:fristedt1} and \ref{thrm:fristedt2}, and our new algorithm. The limit theory is in accord with finite $n$ `reality.'

\begin{figure*}
    \centering
    \begin{subfigure}{.48\textwidth}
      \centering
      \includegraphics[width=\linewidth]{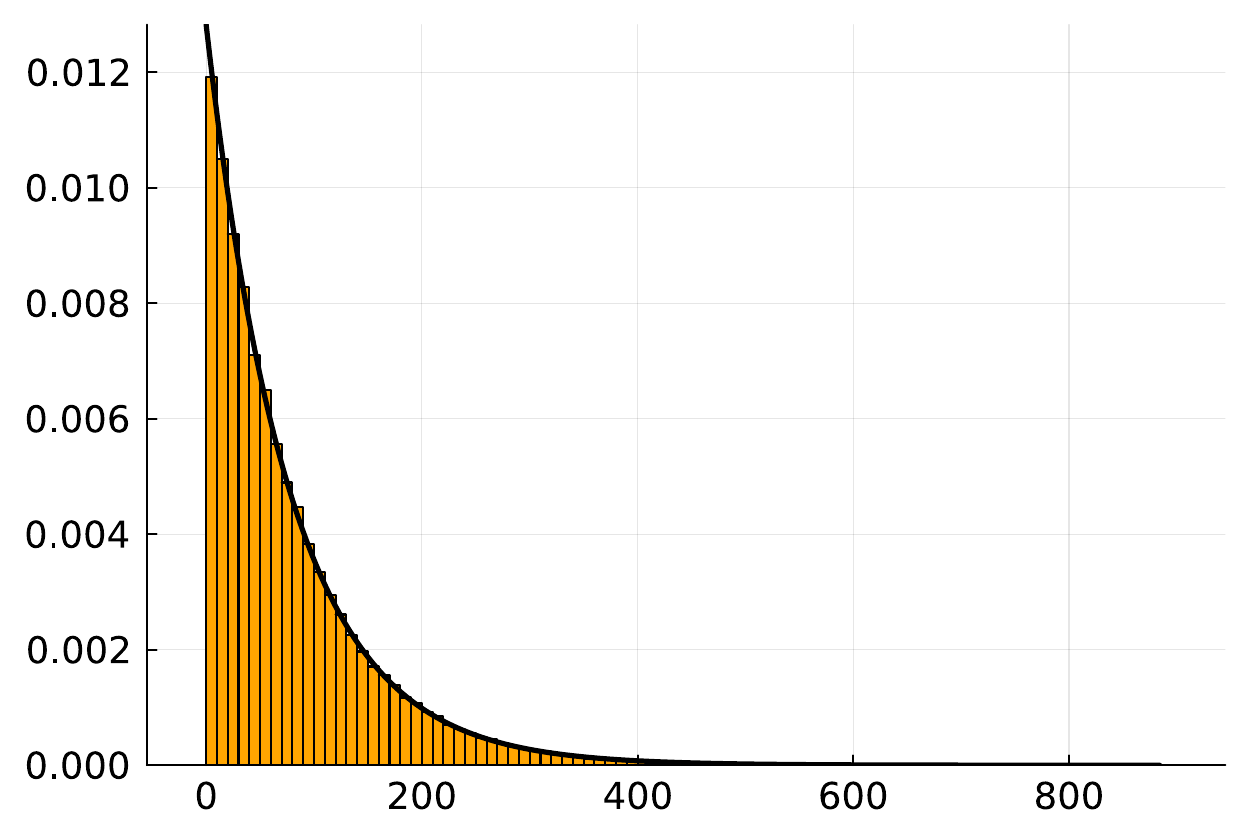}
      \caption{Number of ones ($n=10^4$)}
      \label{fig:sfig1}
    \end{subfigure}%
    \begin{subfigure}{.48\textwidth}
      \centering
      \includegraphics[width=\linewidth]{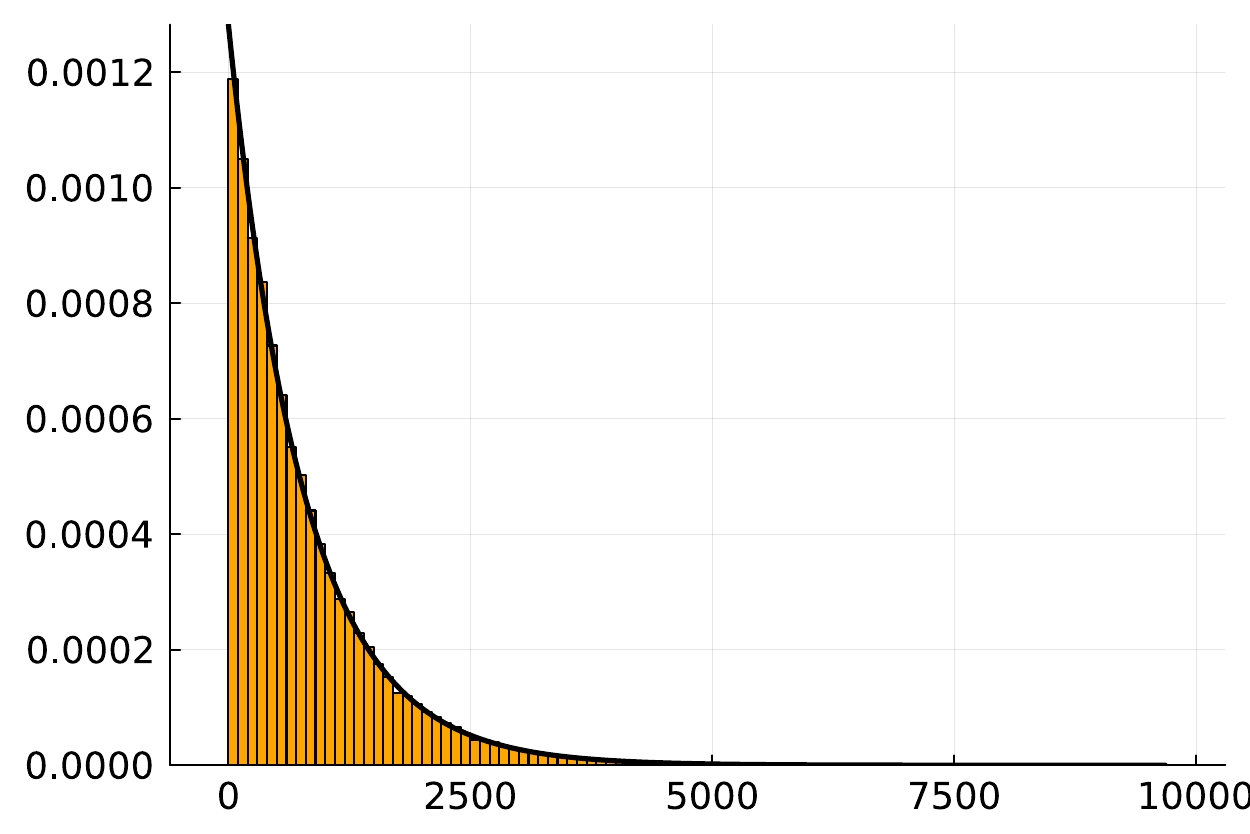}
      \caption{Number of ones ($n=10^6$)}
      \label{fig:sfig2}
    \end{subfigure}
    \begin{subfigure}{.48\textwidth}
        \centering
        \includegraphics[width=\linewidth]{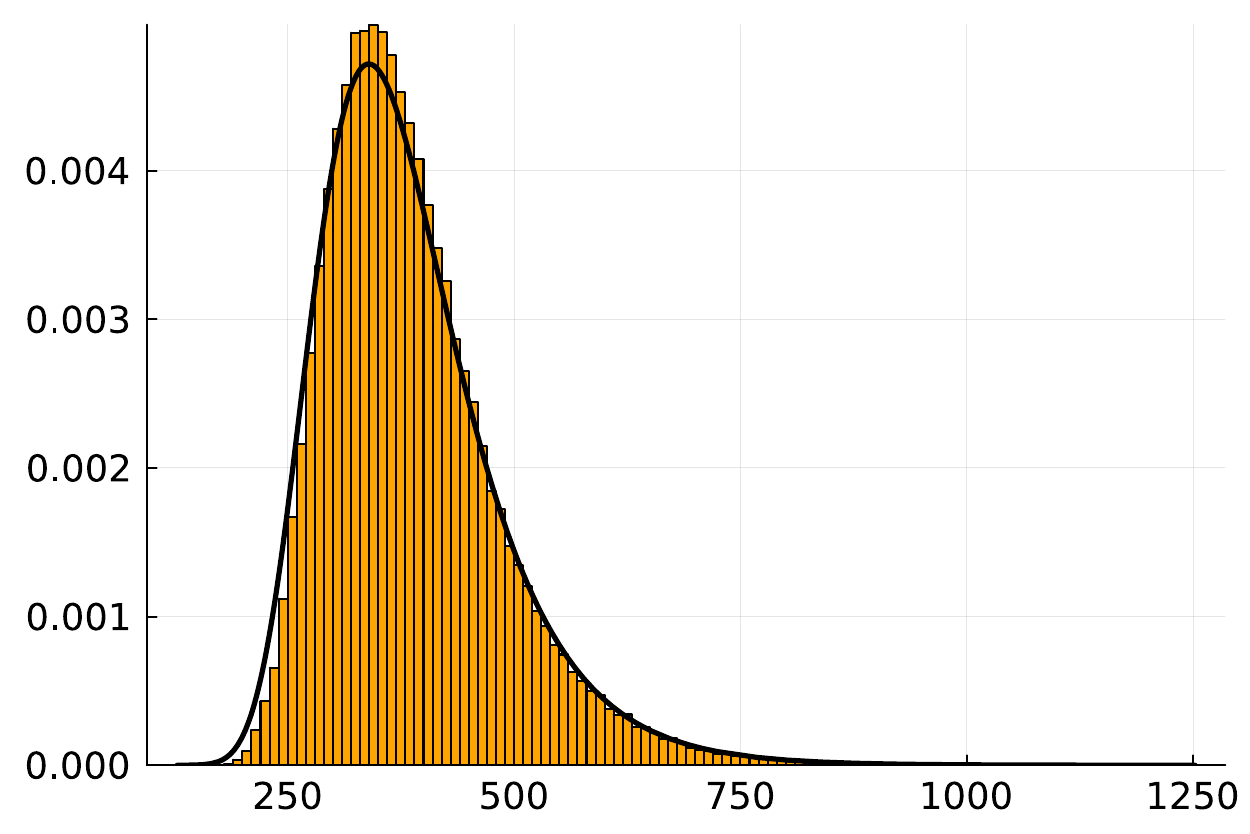}
        \caption{Number of parts ($n=10^4$)}
        \label{fig:sfig3}
      \end{subfigure}%
      \begin{subfigure}{.48\textwidth}
        \centering
        \includegraphics[width=\linewidth]{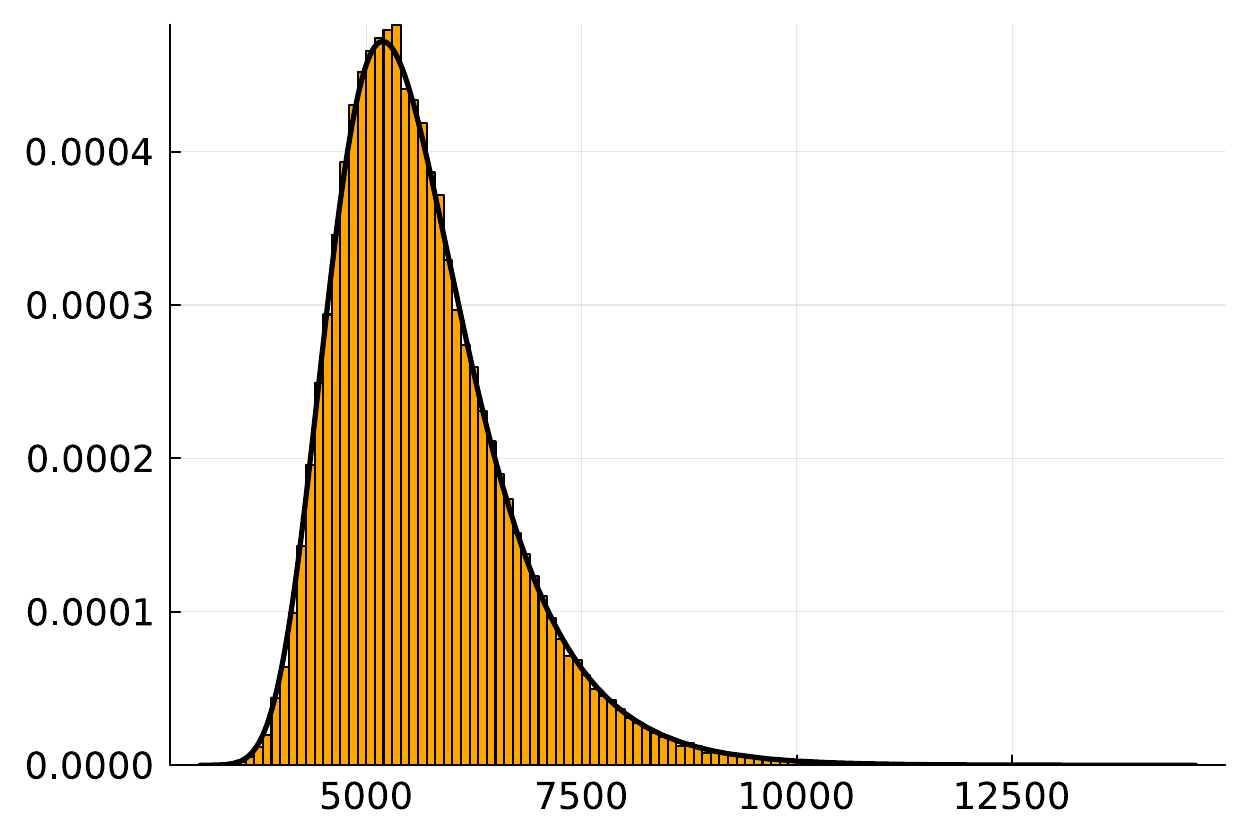}
        \caption{Number of parts ($n=10^6$)}
        \label{fig:sfig4}
      \end{subfigure}
    \caption{Comparisons between samples from the reflected Burnside process and the asymptotic distributions in Theorems~\ref{thrm:fristedt1} and \ref{thrm:fristedt2}. The histogram represents $10^5$ partitions generated from the reflected Burnside process (Section~\ref{sec:transposing}). Each partition was generated by taking $20$ steps of the reflected Burnside process initialized at the partition $1^n$. The dark lines show the asymptotic densities of the two features. Figures~\ref{fig:sfig1} and \ref{fig:sfig2} show that the number of ones is well approximated by an exponential distribution for both $n=10^4$ and $n=10^6$. Figure~\ref{fig:sfig3} shows some disagreement between the number of parts and the limiting Gumbel distribution and the fit is much better for $n=10^6$. Figures~\ref{fig:sfig2} and \ref{fig:sfig4} also suggest that the reflected Burnside process reaches its stationary distribution in just 20 steps for large $n$. On a personal computer, it took 5 minutes to sample all $10^5$ partition when $n=10^4$. For $n=10^6$, the same computer took just under an hour.}
    \label{fig:limit_laws}
\end{figure*}

\subsection{The Burnside process for partitions}\label{sec:lumped-parts}

This section describes our implementation of the lumped Burnside process on $\mathcal{P}_n$. We will see that lumping reduces the complexity of the Burnside process from order $n$ to order $\sqrt{n}(\log n)^2$. We begin by discussing the connection between partitions and conjugacy classes in permutation groups.

The permutation group, $S_n$, acts on itself by conjugation. Specifically, let $\X = G=S_n$ and consider the group action
\begin{equation}\label{eq:conjugate}
    \sigma^\tau = \tau^{-1} \sigma \tau.
\end{equation}
Under this action, two permutations are in the same orbit precisely when they are conjugate in $S_n$. Furthermore, two permutations are conjugate if and only if they have the same cycle type. That is, if $a_l$ is the number of $l$ cycles in $\sigma$, then the map 
\[
    \sigma \mapsto a = (a_l)_{l=1}^n,
\]
is a bijection between the orbits of the action \eqref{eq:conjugate} and partitions. For a proof of this fact, see \citet[Page~292]{suzuki1986group}. The partition $a$ is called the \emph{cycle type} of $\sigma$.

When applied to this group action, the Burnside process can be used to uniformly sample partitions by mapping a permutation to its cycle type. The two steps of the Burnside process are actually the same. From $\sigma$, uniformly sample $\tau$ such that $\sigma \tau=\tau\sigma$, then sample $\sigma'$ such that $\sigma'\tau=\tau\sigma'$. Because of this, we will slightly abuse terminology and in this case say that the Burnside process is
\begin{enumerate}
    \item From $\sigma \in S_n$, uniformly sample $\tau \in C_{S_n}(\sigma)$.
\end{enumerate}
The set $C_{S_n}(\sigma)=\{\tau \in S_n : \sigma\tau=\tau\sigma\}$ is called the \emph{centralizer} of $\sigma$. The structure of $C_{S_n}(\sigma)$ is well understood, and we will briefly review it here. For proofs and more details see, for example, page 295 of \citet{suzuki1986group}. 

The centralizer $C_{S_n}(\sigma)$ is isomorphic to a product of \emph{wreath products} which we will now define. For $l,a \ge 1$ let $C_l$ be the cyclic group of order $l$ and let $S_{a}$ be the symmetric group on $a$ elements. The wreath product $C_l^a \rtimes S_{a}$ is a subgroup of $S_{la}$. To understand the wreath product, think of $[la]$ as $a$ blocks of size $l$ ($\{1,\ldots,l\}$,$\{l+1,\ldots,2l\}$, and so on). Elements in $C_l^a \rtimes S_a$ can apply separate cyclic shifts to each block and permute blocks amongst themselves. For a more explicit description of the elements of $C_l^a \rtimes S_a$, see the start of the proof in Appendix~\ref{appn:lumped-parts}.

Wreath products relate to centralizers in the following way. A permutation $\tau$ is in $C_{S_n}(\sigma)$ if and only if for each $l$, $\tau$ cyclically shifts the $l$ cycles of $\sigma$ and permutes the $l$ cycles amongst themselves. Thus, if the cycle type of $\sigma$ is $(a_l)_{l=1}^n$, then 
\begin{equation}\label{eq:centralizer}
    C_{S_n}(\sigma) \cong \prod_{l : a_l \neq 0} C_l^{a_l} \rtimes S_{a_l},
\end{equation}
Equation~\eqref{eq:centralizer} shows that a permutation $\tau \in C_{S_n}(\sigma)$ can be represented as a list of disjoint permutations $(\tau^{(l)})_{l : a_l\neq 0}$ with $\tau^{(l)} \in C_l^{a_l} \rtimes S_{a_l}$. If $\mathcal{C}(\tau)$ is the set of cycles in $\tau$, then 
\[
    \mathcal{C}(\tau) = \bigsqcup_{l : a_l\neq 0} \mathcal{C}(\tau_l).
\]
That is, the cycles of $\tau$ are the union of the cycles of each $\tau^{(l)}$. Thus, if $b=(b_i)_{i=1}^n$ is the cycle type of $\tau$ and $b = (b^{(l)}_i)_{i=1}^n$ is the cycle type of $\tau_l$, then
\[
    b = \sum_{l:a_l \neq 0} b^{(l)}.
\]
Finally, the product in \eqref{eq:centralizer} implies that, under the uniform distribution on $C_{S_n}(\sigma)$, the permutations $(\tau_l)_{l : a_l \neq 0}$ are independent. We have thus arrived at the following lemma.
\begin{lemma}\label{lem:lumped-parts1}
    Fix a permutation $\sigma$ with cycle type $(a_l)_{l = 1}^n$. Let $\tau$ be a uniform sample from $C_{S_n}(\sigma)$ and let $b = (b_i)_{i = 1}^n$ be the cycle type of $\tau$. Then
    \[ 
        b \stackrel{d}{=} \sum_{l : a_l \neq 0} b^{(l)},
    \] 
    where $b^{(l)} = (b^{(l)}_i)_{i=1}^n$ is the cycle type of a uniformly drawn element of $C_l^{a_l} \rtimes S_{a_l}$.
\end{lemma}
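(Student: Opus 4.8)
The plan is to make precise the informal reasoning given just before the statement: realize $C_{S_n}(\sigma)$ concretely as a direct product of wreath products acting on disjoint point sets, observe that cycle type is additive over such a product, and then use that a uniform draw from a direct product has independent, marginally uniform coordinates.

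First I would fix, for each $l$ with $a_l \neq 0$, the subset $B_l \subseteq \{1,\dots,n\}$ consisting of the points lying on the $l$-cycles of $\sigma$. These sets are pairwise disjoint and $\bigsqcup_{l\,:\,a_l\neq 0} B_l = \{1,\dots,n\}$, since every point of $[n]$ lies on some cycle of $\sigma$. By the structure theorem for centralizers in the symmetric group (see \citet[Page~295]{suzuki1986group}), a permutation $\tau$ commutes with $\sigma$ if and only if $\tau$ preserves each $B_l$ and the restriction $\tau|_{B_l}$ permutes the $a_l$ cycles of $\sigma$ among themselves while applying an arbitrary cyclic shift inside each one. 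Choosing once and for all an enumeration of the $l$-cycles of $\sigma$ and of the points within each cycle identifies $B_l$ with $[la_l]$ partitioned into $a_l$ blocks of size $l$, and under this identification the group of allowed restrictions $\tau|_{B_l}$ is exactly $C_l^{a_l}\rtimes S_{a_l}$. Hence $\tau \mapsto (\tau|_{B_l})_{l\,:\,a_l\neq 0}$ is a group isomorphism from $C_{S_n}(\sigma)$ onto $\prod_{l\,:\,a_l\neq 0} (C_l^{a_l}\rtimes S_{a_l})$; write $\tau^{(l)} := \tau|_{B_l}$ for the $l$-th coordinate.

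Next I would record the additivity of cycle type, which is a deterministic identity valid for every $\tau \in C_{S_n}(\sigma)$. Since the $B_l$ are disjoint and $\tau$-invariant, each cycle of $\tau$ lies entirely in a single $B_l$, and the cycles of $\tau$ contained in $B_l$ are exactly the cycles of $\tau^{(l)}$ viewed as a permutation of $B_l$. Thus $\mathcal{C}(\tau) = \bigsqcup_{l\,:\,a_l\neq 0} \mathcal{C}(\tau^{(l)})$, and counting cycles of each length gives $b = \sum_{l\,:\,a_l\neq 0} b^{(l)}$, where $b$ and $b^{(l)}$ are the cycle types of $\tau$ and $\tau^{(l)}$.

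Finally I would add the probabilistic input: transporting the uniform distribution on $C_{S_n}(\sigma)$ through the isomorphism above gives the uniform distribution on $\prod_{l\,:\,a_l\neq 0} (C_l^{a_l}\rtimes S_{a_l})$, under which the coordinates $\tau^{(l)}$ are mutually independent with each $\tau^{(l)}$ uniform on $C_l^{a_l}\rtimes S_{a_l}$. Combining this with the deterministic identity $b = \sum_l b^{(l)}$ yields the claimed equality in distribution (with the summands independent, a fact used later). The only step carrying any content beyond citation is the explicit identification of $\tau|_{B_l}$ with an element of $C_l^{a_l}\rtimes S_{a_l}$ and the check that it preserves cycle type; this is pure bookkeeping with the block structure on $B_l$, and I anticipate no genuine obstacle.
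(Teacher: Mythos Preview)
Your proposal is correct and follows essentially the same argument as the paper: both invoke the centralizer structure $C_{S_n}(\sigma)\cong\prod_{l:a_l\neq 0}C_l^{a_l}\rtimes S_{a_l}$ acting on disjoint point sets, note that $\mathcal{C}(\tau)=\bigsqcup_l\mathcal{C}(\tau^{(l)})$ so cycle types add deterministically, and then use that a uniform draw from a direct product has independent uniform coordinates. Your write-up is in fact a bit more explicit about the identification of $\tau|_{B_l}$ with an element of the wreath product, but there is no substantive difference in route.
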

Lemma~\ref{lem:lumped-parts1} implies that to run the lumped Burnside process on $\mathcal{P}_n$ it is sufficient to know how to sample the cycle type of permutation $\tau$ uniformly sampled from $C_l^{a} \rtimes S_{a} \subseteq S_{la}$. \citet{diaconis2024poisson,tung2025cutting} studied the cycle type of uniformly sampled elements of $\Gamma^a \rtimes S_a$ for general $\Gamma \subseteq S_l$, It is easier to state their results in terms of \emph{cycle lengths} instead of cycle types. 
\begin{definition}
    Let $\tau$ be a permutation with disjoint cycles $(B_j)_{j=1}^m$. The \emph{cycle lengths} of $\tau$ is the vector $(\lambda_j)_{j=1}^m$ where $\lambda_j = |B_j|$.
\end{definition}
Note that if $(\lambda_j)_{j=1}^m$ are the cycle lengths of $\tau$, then the cycle type of $\tau$ is given by 
\[
    b_i =\sum_{j=1}^m I[\lambda_j=i],
\]
where $I[A]=1$ if $A$ is true and $I[A]=0$ otherwise. 

It is shown in \citet{diaconis2024poisson} that the cycle lengths of a uniformly sampled element in $C_l^a \rtimes S_a$ can be constructed by first sampling the cycle lengths $(\lambda_j)_{j=1}^m$ of a uniform permutation of length $a$. Then, for each $j$, $\lambda_j$ is replaced with $d_j$ copies of $\lambda_j k_j$ where $d_j$ is a divisor of $l$ and $k_j = l/d_j$. The divisors $d_j$ are chosen from a specific distribution as stated more formally below. 
\begin{proposition}\label{prop:lumped-parts}
    Fix $l,a \ge 1$ and let $(\lambda_j)_{j=1}^m$ be the cycle lengths of a uniformly distributed element of $S_{a}$. Let  $(U_j)_{j=1}^m$ be independent and uniformly sampled from $[l]$. Set $d_j = \gcd(l,U_j)$ and $k_j=l/d_j$. 

    If $\mu_{j,p}=\lambda_k k_j$ for all $j \in [m]$ and $p \in [d_j]$, then the list of all $\mu_{j,p}$ is equal in distribution to the cycle lengths of a uniformly sampled element of $C_l^{a}\rtimes S_{a}$.

    Equivalently, if $b=(b_i)_{i=1}^{la}$ is the cycle type of a uniformly distributed element of $C_l^a \rtimes S_a$, then
    \begin{equation*}
        b_i \stackrel{d}{=} \sum_{j=1}^m d_j I[\lambda_j k_j = i],
    \end{equation*} 
    where $I[A]=1$ if $A$ is true and $I[A]=0$ otherwise.
\end{proposition}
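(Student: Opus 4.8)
The plan is to work directly with the explicit realization of $C_l^a \rtimes S_a$ inside $S_{la}$ and then push the uniform distribution through it. I would identify $[la]$ with $[a] \times \integers_l$, so a point is a pair $(i,x)$ consisting of a block $i$ and a position $x$ inside that block, and realize an element of $C_l^a \rtimes S_a$ as a pair $(\mathbf{c},\pi)$ with $\pi \in S_a$ and $\mathbf{c}=(c_1,\dots,c_a)\in\integers_l^a$ acting by $(i,x)\mapsto(\pi(i),\,x+c_i)$. Since the underlying set of the semidirect product is the Cartesian product $\integers_l^a\times S_a$, sampling uniformly from the group amounts to taking $\pi$ uniform on $S_a$ and $\mathbf{c}$ uniform on $\integers_l^a$, independently; in particular, conditionally on $\pi$ the coordinates $c_1,\dots,c_a$ are still i.i.d.\ uniform on $\integers_l$. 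So it suffices to describe, for a fixed $\pi$ with cycle lengths $(\lambda_j)_{j=1}^m$, the cycle lengths of $(\mathbf{c},\pi)$, and afterwards average over $\pi$.

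The crux is a single-cycle computation. Fix one cycle $(i_1\,i_2\,\cdots\,i_{\lambda})$ of $\pi$ and set $s=c_{i_1}+c_{i_2}+\cdots+c_{i_{\lambda}}\in\integers_l$. Iterating the action $\lambda$ times carries $(i_1,x)$ to $(i_1,x+s)$, so the first-return map to block $i_1$ is translation by $s$, which has order $l/\gcd(l,s)$; hence every cycle of $(\mathbf{c},\pi)$ meeting the blocks $\{i_1,\dots,i_\lambda\}$ has length $\lambda\cdot l/\gcd(l,s)$, and since those blocks contain $\lambda l$ points in all, there are exactly $\gcd(l,s)$ such cycles. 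Writing $d=\gcd(l,s)$ and $k=l/d$, this $\pi$-cycle thus contributes $d$ copies of $\lambda k$ to the list of cycle lengths of $(\mathbf{c},\pi)$. I expect this orbit-counting step — the bookkeeping of the shift accumulating around the $\pi$-cycle and the order of the return translation — to be the main (if modest) obstacle; everything after it is assembly, and the answer is insensitive to the sign/composition conventions since $\gcd(l,s)=\gcd(l,-s)$.

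It then remains to identify the joint law of the data $(d_j)_j$. The cycles of $\pi$ partition $\{1,\dots,a\}$, so the shift sums $s_1,\dots,s_m$ attached to the distinct $\pi$-cycles depend on disjoint blocks of the i.i.d.\ variables $c_1,\dots,c_a$ and are therefore independent; each $s_j$ is a sum of $\lambda_j\ge 1$ i.i.d.\ uniform elements of $\integers_l$, hence is itself uniform on $\integers_l$. Identifying $\integers_l$ with $[l]$ we may take $s_j=U_j\sim\unif([l])$ and $d_j=\gcd(l,U_j)$, $k_j=l/d_j$. Combining the single-cycle computation, this independence/uniformity, and the fact that for $\pi$ uniform on $S_a$ the vector $(\lambda_j)_{j=1}^m$ is exactly the cycle-length vector of a uniform permutation of length $a$, I conclude that the cycle lengths of a uniform element of $C_l^a\rtimes S_a$ are obtained by replacing each $\lambda_j$ with $d_j$ copies of $\lambda_j k_j$ — the first assertion — and translating cycle lengths into a cycle type $b=(b_i)$ by counting gives $b_i\stackrel{d}{=}\sum_{j=1}^m d_j\,I[\lambda_j k_j=i]$, the second. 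This is the argument of \citet{diaconis2024poisson} specialized from general $\Gamma\subseteq S_l$ to $\Gamma=C_l$, and would be written out at the start of Appendix~\ref{appn:lumped-parts}.
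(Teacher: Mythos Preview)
Your proposal is correct and follows essentially the same approach as the paper: both realize a uniform element of $C_l^a\rtimes S_a$ as an independent pair $(\pi,\mathbf{c})$, compute the accumulated shift $s_j=\sum_{i\in B_j}c_i$ around each $\pi$-cycle $B_j$, use that the first-return map is translation by $s_j$ with order $l/\gcd(l,s_j)$ to read off $d_j$ cycles of length $\lambda_j l/d_j$, and then note that each $s_j$ is uniform on $\integers_l$. Your write-up is slightly more explicit about the independence of the $s_j$ across cycles (disjoint blocks of i.i.d.\ shifts), which the paper's proof leaves implicit.
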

For completeness, a proof of Proposition~\ref{prop:lumped-parts} is given in Appendix~\ref{appn:lumped-parts}. The computational benefit of Proposition~\ref{prop:lumped-parts} is that the cycle lengths $(\lambda_j)_{j=1}^m$ can be sampled directly without having to sample a permutation. In our implementation we sample $(\lambda_j)_{j=1}^m$ via discrete stick breaking.
\begin{definition}\label{def:stick-breaking}
    The \emph{discrete stick breaking distribution} on $n$ is a distribution over sequences $(\lambda_j)_{j=1}^m$ with $\lambda_j \in \integers_{\ge 0}$ and $\sum_{j=1}^m \lambda_j =  n$. To sample from the discrete breaking distribution first sample $\lambda_1$ uniformly from $[n]$. If $\lambda_1 = 0$, then stop. Otherwise, sample $\lambda_2$ uniformly from $[n-\lambda_1]$. If $\lambda_1+\lambda_2=n$, then stop and otherwise sample $\lambda_3$ uniformly from $[n-\lambda_1-\lambda_2]$. Continue in this way to produce $(\lambda_j)_{j=1}^m$ with $\sum_{j=1}^m \lambda_j=n$.  
\end{definition}
It is well--known that the discrete stick breaking distribution is the same as the distribution of the cycle lengths of a randomly drawn permutation $\sigma \in S_n$. The discrete stick breaking distribution is also an instance of Ewens distribution \citep{crane2016ubiquitous}.

Together Lemma~\ref{lem:lumped-parts1}, Proposition~\ref{prop:lumped-parts} and Definition~\ref{def:stick-breaking} give an efficient implementation of the lumped Burnside process. If we are currently at the partition $a=(a_l)_{l \ge 1}$, then a single step is as follows
\begin{enumerate}
    \item For each $l$ with $a_l \neq 0$ independently sample $b^{(l)}=(b_{i}^{(l)})_{i \ge 1}$ using Proposition~\ref{prop:lumped-parts}.
    \item Return $b = \sum_{l: a_l \neq 0} b^{(l)}$.
\end{enumerate}

A more detailed description is given in Algorithm~\ref{alg:partitions}. The following examples also give a feel for the procedure.

\begin{algorithm*}[hbt]
    \caption{Lumped Burnside process for partitions}
    \begin{algorithmic}[1]
    \Require{$(a_l)_{l=1}^n$ (partition of $n$)}
    \Ensure{$(b_i)_{i=1}^n$ (new partition sampled from the Burnside process)}
    \State{$b_i \gets 0$ for all $i \in [n]$}
    \For{$l\gets 1$ to $n$}
        \If{$a_l \neq 0$}
            \State{Sample $(\lambda_j)_{j=1}^m$ from the discrete stick breaking distribution on $a_l$}
            \For{$j \gets 1$ to $m$}
                \State{Sample $U$ uniformly from $[l]$}
                \State{$d \gets \gcd(U,l)$}
                \State{$i \gets \lambda_j l/d$}
                \State{$b_i \gets b_i + d$}
            \EndFor
        \EndIf
    \EndFor
    \State \Return{$(b_i)_{i=1}^n$}
\end{algorithmic}
\label{alg:partitions}
\end{algorithm*}

\begin{example}
    Suppose $a = (n,0,0,\ldots)$ so that $a_l \neq 0$ only for $l=1$. In this case, $(b_i)_{i = 1}^n$ is simply the cycle type of a uniformly sampled permutation in $S_n$. Equivalently, $b_i=\#\{j:\lambda_j=i\}$ where $(\lambda_j)_{j=1}^m$ is drawn from the discrete stick breaking distribution on $n$. 
\end{example}
\begin{example}\label{ex:cyclic}
    Suppose $a = (0,0,\ldots,0,1)$ so that $a_n=1$ and $a_l=0$ otherwise. The new partition $(b_i)_{i = 1}^n$ has exactly one non-zero entry $b_{n/d} = d$ where $d$ is a divisor of $n$. The probability of a particular value of $d$ is $\phi(n/d)/n$ where $\phi$ is Euler's totient function. In other words, the partition $(b_i)_{i \ge 1}$ is the cycle type of a uniformly chosen element of $C_n \subseteq S_n$.
\end{example}
The main benefit of Algorithm~\ref{alg:partitions} is that we only have to store and use the non-zero values of $(a_l)_{l = 1}^n$. This can dramatically reduce the memory and time requirements of the algorithm. Indeed, the constraint $\sum_{1\ge1} la_l = n$ implies that at most $2\sqrt{n}$ values of $a_l$ can be non-zero and that each $a_l$ is at most $n$. Thus, the memory required to store $(a_l)_{l \ge 1}$ is $O(\sqrt{n}\log n)$.

The average time complexity of the lumped Burnside process is at most $O(\sqrt{n}(\log n)^2)$. Sampling $(\lambda_j)_{j=1}^m$ from the discrete stick breaking distribution on $a_l$ has average complexity $(\log a_l)^2$. The complexity of sampling uniformly from $[l]$ and computing the greatest common divisor $d$ has complexity $O(\log l)$ and this has to be done $O(\log a_l)$ times on average. Thus, the average case complexity of the algorithm is
\begin{align*}
    &O\Bigg(\sum_{l:a_l \neq 0} (\log a_l)^2+\log l\log a_l\Bigg)\\
    &=O\Bigg((\log n)^2 |\{l : a_l \neq 0\}| \Bigg)\\
     &=O\big(\sqrt{n}(\log n)^2\big).
\end{align*}
In contrast, the original Burnside process on $S_n$ requires storing a permutation of size $n$ and so memory requirements on the order of $n \log n$. This is because a permutation of size $n$ contain $n$ numbers most of which are on the order of $n$. Likewise, computing the cycle type of a permutation given as a list of numbers also has linear complexity. Thus,  a single step of the original process has time and memory complexity of order $n \gg \sqrt{n}(\log n)^2$. These theoretical arguments are supported by our experiments which are reported in Figure~\ref{fig:partition-run-time}. 

However, while the per-step complexity of the lumped Burnside process is very low, we have found that it does not always mix quickly. Indeed, consider the process started a partition with a single part (meaning $a_n=1$). Example~\ref{ex:cyclic} shows that the chance that the chain remains at $a_n=1$ is $\phi(n)/n$. If $n$ is prime, this chance of holding is $1-1/n$ and the mixing time is bounded below by $n$. Similarly, Figure~\ref{fig:largest-part} shows the Burnside process is initialized at the all 1's partition (meaning $a_1=n$). In this simulation, the largest part is constant for long periods of time providing a barrier to mixing. The next section introduces a second Markov chain on $\mathcal{P}_n$ that addresses this concern.

\begin{figure*}
    \centering
    \includegraphics[width = 0.8\textwidth]{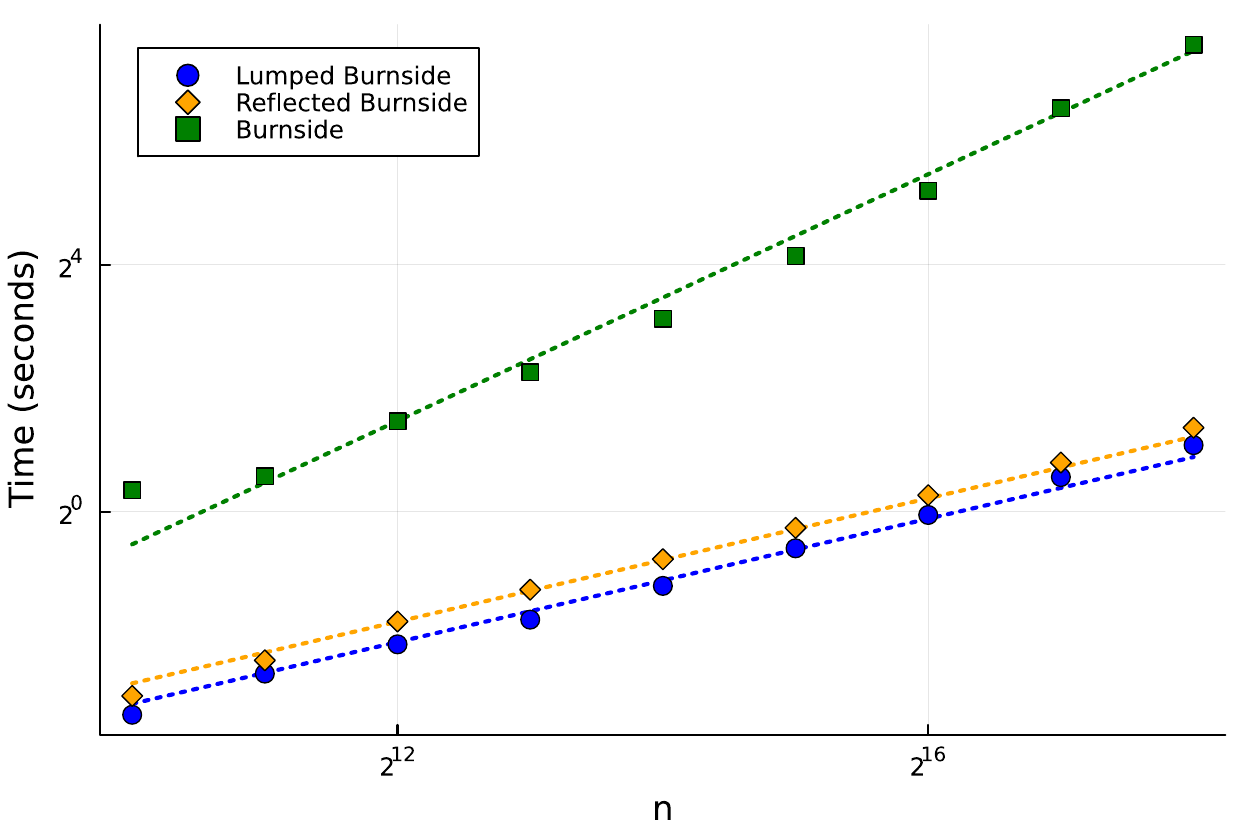}
    \caption{The above is a log--log plot showing the benefits of the lumped process for partitions. Each point corresponds to running the corresponding Burnside process for 10,000 steps. The dashed lines have fixed slopes (1 for the Burnside process, 1/2 for the lumped process and reflected process) and fitted intercepts. The good fit these lines supports our analysis that, up to log factors, the original process has complexity of order $n$ and the lumped process has complexity of order $\sqrt{n}$. Note also the comparable per-step complexity of the lumped and reflected processes.}
    \label{fig:partition-run-time}
\end{figure*}

\subsection{Speeding up the Burnside process}\label{sec:transposing}

The \emph{reflected Burnside process} is a second Markov chain on $\mathcal{P}_n$ that empirically mixes much faster than the Burnside process. Importantly, the reflected Burnside process has roughly the same per-step complexity as the lumped Burnside process from Section~\ref{sec:lumped-parts}.  Recall that the transpose of a partition is defined by reflecting the Young diagram of the partition as shown in Figure~\ref{fig:young_diagrams}. 

\begin{figure*}[tbp]
    \centering
    \begin{tikzpicture}
        \foreach \y/\num in {0/4, 1/3, 2/1} {
            \foreach \x in {0,1,2,3,4} {
                \ifnum\x<\num
                    \draw[thick] (\x,-\y) rectangle (\x+1,-\y-1);
                \fi
            }
        }

        \foreach \y/\num in {0/3, 1/2, 2/2, 3/1} {
            \foreach \x in {0,1,2,3} {
                \ifnum\x<\num
                    \draw[thick] (\x+7,-\y) rectangle (\x+8,-\y-1); 
                \fi
            }
        }
    \end{tikzpicture}
    \caption{Young diagrams for two partition of size  $a=(1,0,1,1,0,\ldots,0)\in \mathcal{P}_8$ on the left and its transpose $b=(1,2,1,0,\ldots,0)$ on the right.}
    \label{fig:young_diagrams}
\end{figure*}

Let $P$ be the transition kernel for the lumped Burnside process on $\mathcal{P}_n$ and let $\Pi : \mathcal{P}_n\times \mathcal{P}_n \to \{0,1\}$ be the matrix with 
\[
\Pi(a,b) = \begin{cases}
    1 & \text{if } b \text{ is the transpose of } a,\\
    0 & \text{otherwise}.
\end{cases} 
\]
The reflected Burnside process is the Markov chain with transition kernel $Q=\Pi P$. That is, to take one step of the reflected Burnside process, first compute the transpose of the current partition and then take a step of the lumped Burnside process via Algorithm~\ref{alg:partitions}. Since transposing is a bijection from $\mathcal{P}_n$ to $\mathcal{P}_n$, the transition kernel $Q$ is still ergodic with uniform stationary distribution. However, our simulations show that the transition matrix $Q$ mixes much more quickly than the original $P$. For example, Figure~\ref{fig:alternating-parts} shows that, under $Q$, the number of parts and the size of the largest part quickly approach the value predicted by Theorem~\ref{thrm:fristedt2}. Under $P$, both features are far from their stationary distribution even after many steps.

The findings in Figure~\ref{fig:alternating-parts} are consistent with a number of other simulations for different features and different values of $n$. For every feature that we have considered, the mixing of time of $Q$ seems to grow logarithmically in $n$ and remains less than $50$ for $n$ as large as $10^{10}$.

An algorithm for computing the transpose of a partition in exponential notation is given in Algorithm~\ref{alg:transpose} in the appendix. The complexity of Algorithm~\ref{alg:transpose} is on the order of $\sqrt{n}\log n$. This means that using $Q$ instead of $P$ increases the per-step complexity by at most a constant factor. The mixing time benefits of $Q$ far outweigh the moderate increase in per-step complexity.

The reflected Burnside process is an example of speeding up a Markov chain by adding a deterministic jump that preserves the stationary distribution. This technique is well--established in the literature \citep{chung1987random,chatterjee2020speeding,hermon2022universality,ben2023cutoff}. We do not know why taking the transposition at each step leads to such a speed-up. Our intuition is that the Burnside process mixes well when started at partition with many parts and mixes poorly when started at a partition with large parts. Taking the transpose thus interchanges a region where the chain mixes slowly with a region where the chain mixes rapidly while preserving the uniform stationary distribution. Turning this intuition into a quantitative statement and connecting this example to the above literature are open problems. 

\begin{figure*}[t]
    \centering
    \begin{subfigure}{0.48\textwidth}
        \centering
        \includegraphics[width=\textwidth]{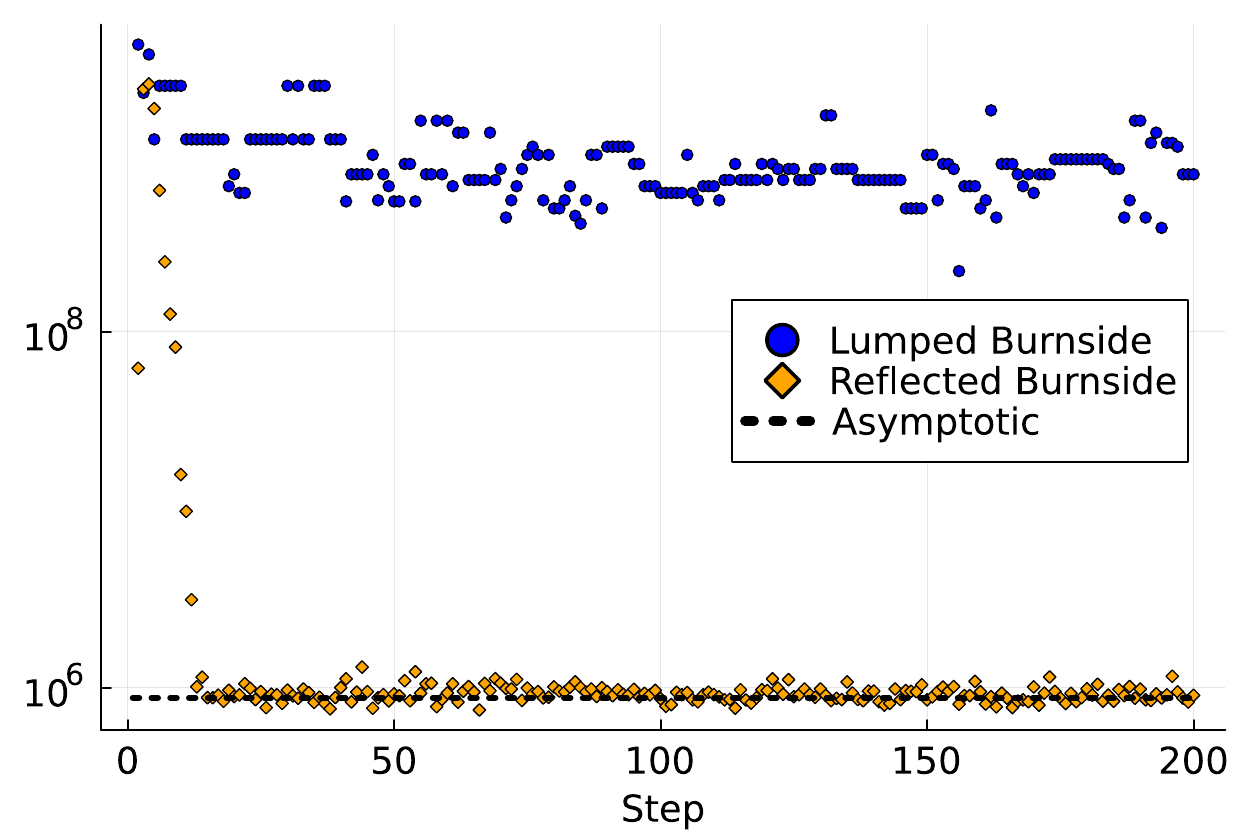}
        \caption{Largest part under the Burnside process $(n=10^{10})$.}
        \label{fig:largest-part}
    \end{subfigure}
    \hfill
    \begin{subfigure}{0.48\textwidth}
        \centering
        \includegraphics[width=\textwidth]{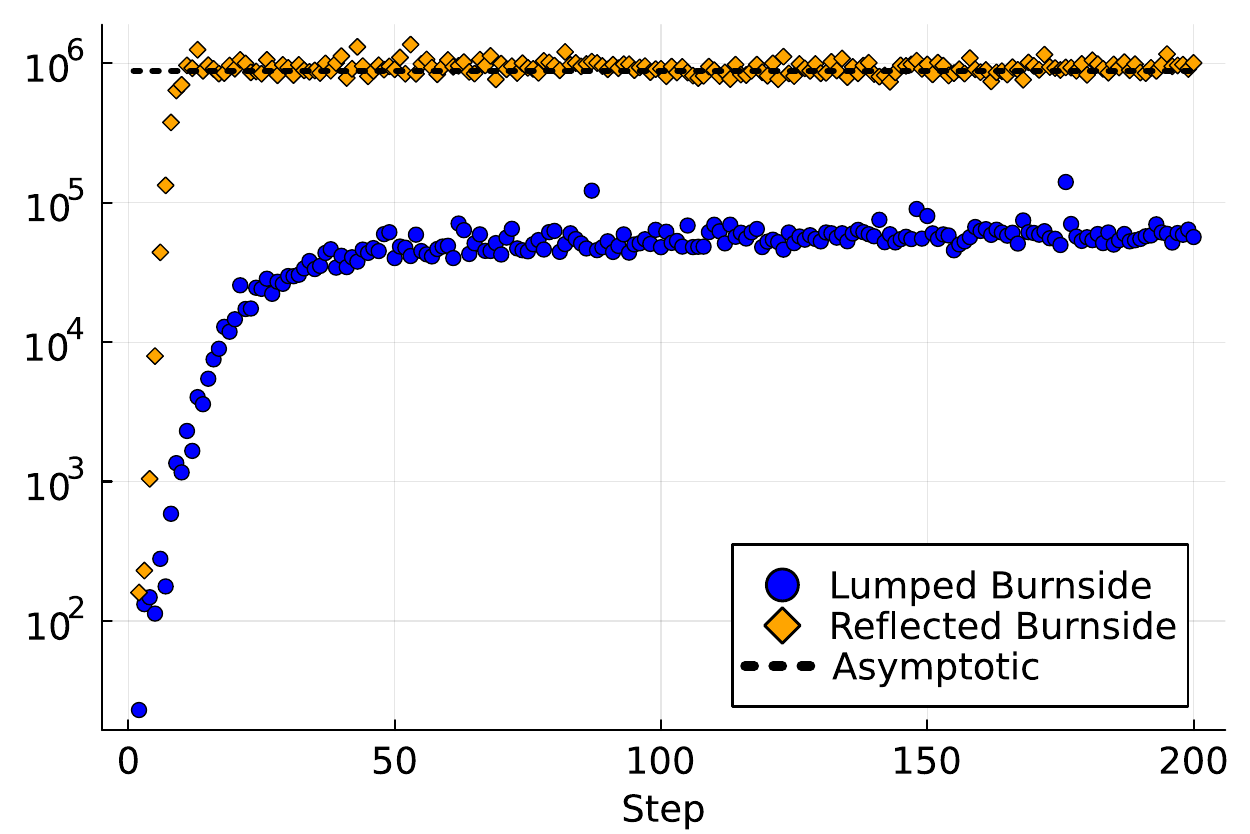}
        \caption{Number of parts under the Burnside process $(n=10^{10})$.}
        \label{fig:number-of-parts}
    \end{subfigure}
    \caption{The above figures compares the reflected Burnside process to lumped Burnside process for generating partition of size $n=10^{10}$. In each figure, both processes are initialized at the partition $1^n$ and run for 200 steps. Figure~\ref{fig:largest-part} shows the evolution of the size of the largest part and Figure~\ref{fig:number-of-parts} shows the evolution of the number of parts. Theorem~\ref{thrm:fristedt2} from \citep{fristedt1993structure} gives the asymptotic distribution of both these statistics under the uniform distribution and the asymptotic mean is represented as a dashed line. Under the reflected Burnside process, both statistics quickly converge to the uniform distribution. In contrast, neither statistic has converged under the lumped Burnside process. The lumped Burnside process also exhibits strong autocorrelation in Figure~\ref{fig:largest-part}. This is discussed in Section~\ref{sec:transposing}.}
    \label{fig:alternating-parts}
\end{figure*}

\section{Conclusion}

Classifying the orbits of a finite group acting on a finite set is a huge, unmanageable topic (!). The survey \citet{Keller2003Orbits} points to the richness and depth within group theory. Our examples show that there are worthwhile applications further afield in probability and statistics. 

The analysis and simulations in Sections~\ref{sec:tables} and \ref{sec:parts} demonstrate the computational benefits of the lumped Burnside process. Even for partitions or contingency tables of moderate size, the lumped chain is several orders of magnitude faster than original process. We have found two common ideas that help achieve the full benefits of lumping. These are using simple data structures for the orbits, and using discrete stick breaking to generate the cycle type of a random permutation. We believe that both of these will help in other examples, and we give some more details below.

In both Sections~\ref{sec:tables} and \ref{sec:parts} the orbits of the group action could be easily identified. In Section~\ref{sec:tables} the orbits are in bijection with contingency tables and in Section~\ref{sec:parts} the orbits are in bijection with partitions. These bijections meant we could run the Burnside process directly on the orbits. This led to an implementation that required far less memory than the original Burnside process. We also found that the choice of how to represent the orbits matters. In Section~\ref{sec:parts} we represented the partition in exponential notation and found that this worked better than the representing the partition as a list of parts. Unfortunately, in many other examples there is not a known identification of the orbits. Thus, this aspect of lumping may not be implementable in all other examples.

The second aspect of lumping that we found helpful is discrete stick breaking. Lemmas~\ref{lem:lumped_tables2} and \ref{lem:lumped-parts1} describe the distribution of the cycle type of a uniformly distributed element of a stabilizer of the group action. These results are useful because in both cases the second step of the Burnside process only depends on these cycles and not on the full permutation. We believe that this should hold in some generality. Specifically, when the group $G$ is a subgroup of a permutation group, then some form of stick breaking could be used to generate the cycle type of a uniformly chosen element of a stabilizer. This technique could still be applied in cases when the orbits do not have a known representation. 

For partitions, we also found that reflected Burnside process in Section~\ref{sec:transposing} converges much more quickly than the original Burnside process. The transpose map used by the reflected Burnside process is a bijection on partitions. However, the transpose map does not lift to a bijection on permutations. Thus, another benefit of lumping is that it can lead to new Markov chains that combine the Burnside process with deterministic bijections.

Finally, we end with two comments about implementation. First, the choice of programming language matters. We originally implemented our algorithms in Python but later moved to Julia. In Julia, the programs ran 100 times faster and could scale to larger problems. Second, both of our algorithms naturally lend themselves to parallelization. We did not explore this in our implementation, but it could lead to further speed--ups.

\bmhead{Acknowledgements}

We thank Timothy Sudijono, Nathan Tung, Chenyang Zhong, Andrew Lin, Bal\'azs Gerencs\'er and Laurent Bartholdi for helpful discussions and comments. We also thank the referees for detailed, constructive suggestions. This research was partially funded by NSF grant 1954042.

\begin{appendices}

\section{Proofs}

\subsection{Proof of Proposition~\ref{prop:lump}}\label{appn:lump}

The following lemma states that if two elements are in the same orbit, then their stabilizers are conjugate.
\begin{lemma}\label{lem:conjugate-stabilizers}
    For all $u,x \in \X$ and $s \in G$, if $x^s = u$, then 
    \[G_u = s^{-1}G_x s \]
    Furthermore, if $g = s^{-1}hs$, then $\X^g =  \{x^s : x \in \X_h\}$
\end{lemma}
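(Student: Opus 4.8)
The plan is to prove both assertions by direct manipulation of exponents, using only the two group-action axioms $x^{\mathrm{Id}}=x$ and $(x^g)^h=x^{gh}$. There is no conceptual content here beyond careful bookkeeping: the action is a right action, so exponents compose left-to-right, and both identities amount to ``conjugating by $s$'' on the level of subgroups of $G$ and subsets of $\X$.

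First I would establish $G_u=s^{-1}G_xs$ when $u=x^s$. For the inclusion $s^{-1}G_xs\subseteq G_u$, take $h\in G_x$ and compute $u^{s^{-1}hs}=(x^s)^{s^{-1}hs}=x^{ss^{-1}hs}=x^{hs}=(x^h)^s=x^s=u$, where the last equality uses $x^h=x$. For the reverse inclusion $G_u\subseteq s^{-1}G_xs$, take $g\in G_u$; then $x^{sg}=(x^s)^g=u^g=u=x^s$, and applying the action of $s^{-1}$ to both sides of $x^{sg}=x^s$ gives $x^{sgs^{-1}}=x^{ss^{-1}}=x$, so $sgs^{-1}\in G_x$ and hence $g\in s^{-1}G_xs$.

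Then I would prove the fixed-point statement $\X^g=\{x^s:x\in\X^h\}$ for $g=s^{-1}hs$ by the symmetric argument. For $\supseteq$: if $x^h=x$ then $(x^s)^g=(x^s)^{s^{-1}hs}=x^{hs}=(x^h)^s=x^s$, so $x^s\in\X^g$. For $\subseteq$: given $y\in\X^g$, set $x:=y^{s^{-1}}$, so that $x^s=y^{s^{-1}s}=y$; it then suffices to check $x\in\X^h$, and indeed, applying the action of $s^{-1}$ to $y^{s^{-1}hs}=y^g=y$ yields $y^{s^{-1}h}=y^{s^{-1}}$, whence $x^h=(y^{s^{-1}})^h=y^{s^{-1}h}=y^{s^{-1}}=x$.

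I do not expect a genuine obstacle: the only place where a slip could occur is confusing the order of composition in $(x^g)^h=x^{gh}$ or cancelling $ss^{-1}$ on the wrong side, so I would double-check the exponent arithmetic in the two ``hard direction'' inclusions ($G_u\subseteq s^{-1}G_xs$ and $\X^g\subseteq\{x^s:x\in\X^h\}$), since that is where such an error would most naturally creep in.
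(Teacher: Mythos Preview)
Your proposal is correct and follows essentially the same approach as the paper: both prove each set equality by two inclusions via direct exponent manipulation using the right-action axioms. In fact your write-up is tidier; the paper's displayed computations contain a couple of typographical slips (e.g.\ writing $u^{s^{-1}gs}=(x^g)^s=u^s=x$ where it should conclude with $x^s=u$), but the intended argument is exactly the one you give.
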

\begin{proof}
    Suppose that $x^s = u$. Then for all $g \in G_x$,
    \[
        u^{s^{-1}gs} = (x^{g})^s = u^s = x.
    \]
    And conversely, if $g \in G_u$, then $u^{sgs^{-1}}=u$ and hence $G_u = s^{-1}G_xs$.

    For the second claim, suppose that $g=s^{-1}hs$, then for all $x \in \X_h$
    \[
     (x^s)^{g} = x^{hs} = (x^h)^s = x^s.
    \]
    And conversely, if $x \in \X^g$, the $x^{s^{-1}} \in \X_h$ and so $\X^g = \{x^s : x \in \X_h\}$.
\end{proof}
We will now prove Proposition~\ref{prop:lump}.
\begin{proof}
    Recall that 
    \begin{align*}
        P(x,y) &= \frac{1}{|G_x|}\sum_{g \in G_x \cap G_y}\frac{1}{|\X^g|},
    \end{align*} 
    and
    \[
    \overline{P}(\O_x,\O_y) = \sum_{z \in \O_y}P(x,z).
    \]
    The first claim in Proposition~\ref{prop:lump} is that the definition of $\overline{P}$ does not depend on the choice of $x \in \O_x$. Thus, suppose that $u \in \O_x$. Then there exists $s \in G$ such that $G_u = s^{-1}G_xs$ by Lemma~\ref{lem:conjugate-stabilizers}. Thus,
    \begin{align*}
        &\sum_{z \in \O_y}P(u,z)\\
        &=\sum_{z \in \O_y}\frac{1}{|G_u|}\sum_{g \in G_u \cap G_z}\frac{1}{|\X^g|}\\
        &=\sum_{z \in \O_y}\frac{1}{|s^{-1}G_xs^{-1}|}\sum_{g \in s^{-1}G_xs \cap G_z}\frac{1}{|\X^g|}\\
        &=\sum_{z \in \O_y}\frac{1}{|G_x|}\sum_{g \in G_x \cap sG_zs^{-1}}\frac{1}{|\X^g|}\\
        &=\sum_{z^s \in \O_y}\frac{1}{|G_x|}\sum_{g \in G_x \cap sG_{z^{s}}s^{-1}}\frac{1}{|\X^g|}\\
        &=\sum_{z^s \in \O_y}\frac{1}{|G_x|}\sum_{g \in G_x \cap G_{z^{s}}}\frac{1}{|\X^g|}\\
        &=\sum_{z \in \O_y} P(x,z).
    \end{align*}
    Thus, the definition of $\overline{P}(\O_x,\O_y)$ does not depend on $x$. 
    
    The kernel $\overline{P}$ has positive entries and is therefore ergodic. Since $P$ is reversible with respect to $\pi(x) = \frac{1}{Z|\O_x|}$, it follows that $\overline{P}$ is also reversible with stationary measure
    \[\overline{\pi}(x) = \sum_{z \in \O_x} \pi(z) = \sum_{z \in \O_x}\frac{1}{Z|\O_z|} = \frac{1}{Z}. \]
    Thus, $\overline{P}$ is reversible with respect to the uniform distribution and hence symmetric. 

    Finally, since the definition of $\overline{P}(\O_x,\O_y)$ does not depend on $x \in \O_x$, Dynkin's criteria \citep[Section~6.3]{kemeny1976finite} implies that $\overline{P}$ is the transition kernel for the lumped Burnside process. 
\end{proof}

\subsection{Proof of Lemma~\ref{lem:cosets}}\label{appn:cosets}

\begin{proof}
    The first claim is simply stating that $h^{-1}sk=s$ if and only if $k=s^{-1}hs$. This implies that if $(h,k) \in (H \times K)_s$, then $k$ is determined by the relation $k=s^{-1}hs$. The set of possible $h$ is thus
    \[
        H \cap \{h : s^{-1}hs \in K\} = H \cap sKs^{-1}.
    \]
    This implies that $h \mapsto (h,s^{-1}hs)$ is a bijection between $H \cap sKs^{-1}$ and $(H \times K)_s$ which is the second claim in Lemma~\ref{lem:cosets}. For the final claim, suppose that there exists some $s \in G^{h,k}$. Then $hs=sk$ and furthermore $t \in G^{h,k}$ if and only if $ht = tk$. Thus, $t \in G^{h,k}$ is equivalent to $h(ts^{-1}) =(ts^{-1})h$. This implies that $G^{h,k}=C_G(h)s$.
\end{proof}

\subsection{Proof of Lemma~\ref{lem:lumped_tables}}\label{appn:lumped_tables}
\begin{proof}
    Let $\tau$ be drawn uniformly from $S_n^{h,k}\neq \emptyset$. Then, by Lemma~\ref{lem:cosets}, $\tau$ is drawn uniformly from the set $\{\tau \in S_n : k = \tau^{-1}h\tau\}$. In particular, the condition $S_n^{h,k} \neq \emptyset$ implies that $h=(h_i)_{i=1}^I\in S_\lambda$ and $k=(k_j)_{j=1}^I \in S_\mu$ are conjugate in $S_n$. Furthermore, for every $l$, the permutation $\tau^{-1}$ induces a bijection from the $l$-cycles of $k$ to the $l$-cycles of $h$. Since $\tau$ is uniformly distributed the bijection on the $l$-cycles is also uniformly distributed. The value $X_{i,j}^{(l)}$ is the number of $l$ cycles of $h$ in $L_i$ that $\tau^{-1}$ maps to an $l$ cycle of $k$ in $M_j$. Since the bijection on the $l$ cycles is uniformly distributed, $X^{(l)}$ has the Fisher--Yates distribution with row sums $r^{(l)}$ and column sums $c^{(l)}$.

    For the last claim, we have
    \begin{align*}
        &f(\tau)_{i,j} \\
        &=|L_i \cap \tau(M_j)|\\
        &=\sum_{l \ge 1} |\{x : x \in C, |C|=l, C \in \mathcal{C}(h_i) \cap \tau(\mathcal{C}(k_j)) \}|\\
        &=\sum_{l \ge 1} l\{C : |C|=l, C \in \mathcal{C}(h_i)\cap  \tau(\mathcal{C}(k_j))\}|\\
        &=\sum_{l \ge 1} lX_{i,j}^{(l)},
    \end{align*}
    as required.
\end{proof}

\subsection{Proof of Lemma~\ref{lem:lumped_tables2}}\label{appn:lumped_tables2}
\begin{proof}
    Let $(h,k)$ be drawn uniformly from $(S_\lambda \times S_\mu)_\sigma$ where $\sigma$ is such that $f(\sigma)=T$. By Lemma~\ref{lem:cosets}, we know that $h$ is uniformly drawn from $S_\lambda \cap \sigma S_\mu \sigma^{-1} = S_{L \land \sigma(M)}$. It follows that $h = (h_{i,j})_{i,j}$ where $h_{i,j}$ is a uniformly drawn permutation of $L_i \cap \sigma(M_j)$ and $(h_{i,j})_{i,j}$ are independent. The vector $(a_{i,j}^{(l)})_{l \ge 0}$ is simply the cycle type of $h_{i,j}$. Since $|L_i \cap \sigma(M_j)|=T_{i,j}$, the vector $(a_{i,j}^{(l)})_{l \ge 0}$ has the discrete stick breaking distribution on $T_{i,j}$. Independence of $(a_{i,j})_{i,j}$ follows from the independence of $(h_{i,j})_{i,j}$.

    Furthermore, since $h \in S_\lambda \cap \sigma S_\mu \sigma^{-1}$, every cycle of $h$ must be contained in $L_i \cap \sigma(M_j)$ for some $i$ and $j$. Thus,
    \begin{align*}
    &r_i^{(l)}\\
     &= |\{C \in \mathcal{C}(h_i): |C|=l, C \subseteq L_i\}|\\
    & = \sum_{j=1}^J |\{C \in \mathcal{C}(h_i) : |C|=l, C \subseteq L_i \cap \sigma(M_j)\}|\\
    & = \sum_{j=1}^J a_{i,j}^{(l)}.
    \end{align*}
    Finally, by Lemma~\ref{lem:cosets} $k=\sigma^{-1} h \sigma$. This implies that if $C$ is an $l$-cycle of $h$ in $L_i \cap \sigma(M_j)$, then $\sigma^{-1}(C)$ is an $l$-cycle of $k$ in $\sigma(L_i) \cap M_j$. And so, by similar reasoning, $c^{(l)}_j = \sum_{i=1}^I a_{i,j}^{(l)}$.
\end{proof}

\subsection{Proof of Proposition~\ref{prop:lumped-parts}}\label{appn:lumped-parts}
\begin{proof}
    Let $\tau = (V_1,V_2,\ldots,V_{a_l}; \pi)$ be a uniformly sampled element of $C_{l}^{a} \rtimes S_{a}$. Thus, $\pi$ is uniformly distributed in $S_{a}$, $V_1,\ldots,V_{a}$ are uniformly distributed on $C_l$ and $(V_1,\ldots,V_{a};\pi)$ are mutually independent.  As an element of $S_{la}$, the permutation $\tau$ cyclically shifts the elements of $L_1=\{1,\ldots,l\}$ by $V_1$, the elements of $L_2=\{l+1,\ldots,2l\}$ by $V_2$, \ldots, the elements of $L_{a} = \{(a-1)l+1,\ldots,al\}$ by $V_{a}$ and then permutes the blocks $\{L_1,\ldots,L_{a}\}$ according to $\pi$. Thus, if $i = kl+r$ with $r \in \{1,\ldots,l\}$, then $\tau(i) = \pi(k)l + (r+V_k \bmod l)$. Our goal is to understand the cycles of $\tau$. 

    Let $B_1,\ldots,B_m$ be the cycles of $\pi$ and let $\lambda_j=|B_j|$ be the length of the $j$th cycle and write $B_j = (k_1,k_2,\ldots,k_{\lambda_j})$. The permutation $\tau$ thus cycles the blocks $L_{k_1},L_{k_2},\ldots,L_{k_{\lambda_j}}$. When applying $\tau$ $\lambda_j$ times, an element $i \in L_{k_1}$ returns to the block $L_{k_1}$ but will be shifted by $U_j = V_{k_1}+V_{k_2}+\cdots+V_{k_{\lambda_j}}$. Thus, the size of the cycle of $\tau$ that contains $i$ is $\lambda_j$ times the order of $U_j$ in $C_l$. The order of $U_j$ is $l/\gcd(U_j,l)$ and thus each $i \in  L_{k_1}$ is in a cycle of size $\lambda_j l/d_j$ where $d_j=\gcd(U_j,l)$. Furthermore, the elements of $L_{k_1} \cup L_{k_2}\cup\cdots \cup L_{k_{\lambda_j}}$ are partitioned into $d_j$ cycles each of size $\lambda_j l/d_j$. Finally, since each $V_k$ is uniformly distributed in $\integers_l$, $U_j$ is also uniformly distributed.  Thus, as stated in Proposition~\ref{prop:lumped-parts}, every cycle length $\lambda_j$ contributes $d_j$ cycles of size $\lambda_j l/d_j$ to the cycle type of $\tau$.
\end{proof}
\begin{algorithm*}[t]
    \caption{Transpose of a partition}
    \begin{algorithmic}[1]
    \Require{$(a_l)_{l=1}^n$ (partition of $n$)}
    \Ensure{$(b_l)_{l=1}^n$ (transpose of $(a_l)_{l=1}^n$)}
    \State{$b_l \gets 0$ for all $l \in [n]$}
    \State{$L \gets \{l : a_l \neq 0\}$}
    \State{Sort $L\gets (l_1, l_2 ,\ldots , l_m)$ in descending order}
    \State{$k \gets 0$}
    \For{$i \gets 1$ to $m-1$}
        \State{$k \gets k + a_{l_i}$}
        \State{$b_k \gets l_i - l_{i+1}$}
    \EndFor
    \State{$k \gets k + a_{l_m}$}
    \State{$b_k \gets l_m$}
    \State \Return{$(b_l)_{l=1}^n$}
\end{algorithmic}
\label{alg:transpose}
\end{algorithm*}

\section{Algorithm for computing the transpose of a partition}

The following algorithm computes the transpose of a permutation in exponential notation. Let $a=(a_l)_{l=1}^n$ be the input initial partition and let $(b)=(b_l)_{l=1}^n$ be the transpose of $a$. Suppose that we sort the indices $l$ of $a$ such that $a_l\neq 0$. This gives the list $L=(l_1,l_2,\ldots,l_m)$ with $l_1 > l_2 >\cdots > l_m$ where $m$ is the number of non-zero entries of $a_l$. Also set $l_{m+1}=0$. The non-zero entries of $b$ are then 
\[
    b_{k_i} = l_i - l_{i+1},  
\]
where $k_i = a_{l_1}+a_{l_2}+\cdots+a_{l_i}$ for $i$ between $1$ and $m$. For example, if $a_{l_1}=1$, then the number of $1$'s in $b$ is equal to the difference in size between the two largest parts of $a$. These non-zero values of $b$ can be computed efficiently by going through the list $L$ and updating the partial sums $a_{l_1}+a_{l_2}\cdots + a_{l_i}$. Algorithm~\ref{alg:transpose} implements this idea.

Recall that the number of non-zero elements of $a$ is always at most $2\sqrt{n}$. Thus, the collection $L$ in Algorithm~\ref{alg:partitions} has size at most $2\sqrt{n}$ and the time complexity of sorting $L$ is $O(\sqrt{n}\log n)$. Updating $k$ and computing the difference $l_i-l_{i+1}$ has complexity $O(\log n)$. Thus, the complexity of Algorithm~\ref{alg:transpose} is $O(\sqrt{n}\log n)$. Which is comparable to the $O(\sqrt{n}(\log n)^2)$ complexity of the lumped Burnside process for partitions.




\end{appendices}




\begin{thebibliography}{56}
\ifx \bisbn   \undefined \def \bisbn  #1{ISBN #1}\fi
\ifx \binits  \undefined \def \binits#1{#1}\fi
\ifx \bauthor  \undefined \def \bauthor#1{#1}\fi
\ifx \batitle  \undefined \def \batitle#1{#1}\fi
\ifx \bjtitle  \undefined \def \bjtitle#1{#1}\fi
\ifx \bvolume  \undefined \def \bvolume#1{\textbf{#1}}\fi
\ifx \byear  \undefined \def \byear#1{#1}\fi
\ifx \bissue  \undefined \def \bissue#1{#1}\fi
\ifx \bfpage  \undefined \def \bfpage#1{#1}\fi
\ifx \blpage  \undefined \def \blpage #1{#1}\fi
\ifx \burl  \undefined \def \burl#1{\textsf{#1}}\fi
\ifx \doiurl  \undefined \def \doiurl#1{\url{https://doi.org/#1}}\fi
\ifx \betal  \undefined \def \betal{\textit{et al.}}\fi
\ifx \binstitute  \undefined \def \binstitute#1{#1}\fi
\ifx \binstitutionaled  \undefined \def \binstitutionaled#1{#1}\fi
\ifx \bctitle  \undefined \def \bctitle#1{#1}\fi
\ifx \beditor  \undefined \def \beditor#1{#1}\fi
\ifx \bpublisher  \undefined \def \bpublisher#1{#1}\fi
\ifx \bbtitle  \undefined \def \bbtitle#1{#1}\fi
\ifx \bedition  \undefined \def \bedition#1{#1}\fi
\ifx \bseriesno  \undefined \def \bseriesno#1{#1}\fi
\ifx \blocation  \undefined \def \blocation#1{#1}\fi
\ifx \bsertitle  \undefined \def \bsertitle#1{#1}\fi
\ifx \bsnm \undefined \def \bsnm#1{#1}\fi
\ifx \bsuffix \undefined \def \bsuffix#1{#1}\fi
\ifx \bparticle \undefined \def \bparticle#1{#1}\fi
\ifx \barticle \undefined \def \barticle#1{#1}\fi
\bibcommenthead
\ifx \bconfdate \undefined \def \bconfdate #1{#1}\fi
\ifx \botherref \undefined \def \botherref #1{#1}\fi
\ifx \url \undefined \def \url#1{\textsf{#1}}\fi
\ifx \bchapter \undefined \def \bchapter#1{#1}\fi
\ifx \bbook \undefined \def \bbook#1{#1}\fi
\ifx \bcomment \undefined \def \bcomment#1{#1}\fi
\ifx \oauthor \undefined \def \oauthor#1{#1}\fi
\ifx \citeauthoryear \undefined \def \citeauthoryear#1{#1}\fi
\ifx \endbibitem  \undefined \def \endbibitem {}\fi
\ifx \bconflocation  \undefined \def \bconflocation#1{#1}\fi
\ifx \arxivurl  \undefined \def \arxivurl#1{\textsf{#1}}\fi
\csname PreBibitemsHook\endcsname

\bibitem[\protect\citeauthoryear{Andersen and Diaconis}{2007}]{diaconis_hnr}
\begin{barticle}
\bauthor{\bsnm{Andersen}, \binits{H.C.}},
\bauthor{\bsnm{Diaconis}, \binits{P.}}:
\batitle{Hit and run as a unifying device}.
\bjtitle{Journal de la societe francaise de statistique \& revue de statistique appliquee}
\bvolume{148}(\bissue{4}),
\bfpage{5}--\blpage{28}
(\byear{2007})
\end{barticle}
\endbibitem

\bibitem[\protect\citeauthoryear{Arratia and DeSalvo}{2016}]{arratia2016probabilistic}
\begin{barticle}
\bauthor{\bsnm{Arratia}, \binits{R.}},
\bauthor{\bsnm{DeSalvo}, \binits{S.}}:
\batitle{Probabilistic divide-and-conquer: a new exact simulation method, with integer partitions as an example}.
\bjtitle{Combinatorics, Probability and Computing}
\bvolume{25}(\bissue{3}),
\bfpage{324}--\blpage{351}
(\byear{2016})
\end{barticle}
\endbibitem

\bibitem[\protect\citeauthoryear{Aldous and Fill}{2002}]{aldous2014reversible}
\begin{botherref}
\oauthor{\bsnm{Aldous}, \binits{D.}},
\oauthor{\bsnm{Fill}, \binits{J.A.}}:
Reversible Markov Chains and Random Walks on Graphs.
Unfinished monograph, recompiled 2014, available at \url{http://www.stat.berkeley.edu/~aldous/RWG/book.html}
(2002)
\end{botherref}
\endbibitem

\bibitem[\protect\citeauthoryear{Agresti}{1992}]{agresti1992survey}
\begin{barticle}
\bauthor{\bsnm{Agresti}, \binits{A.}}:
\batitle{A survey of exact inference for contingency tables}.
\bjtitle{Statistical science}
\bvolume{7}(\bissue{1}),
\bfpage{131}--\blpage{153}
(\byear{1992})
\end{barticle}
\endbibitem

\bibitem[\protect\citeauthoryear{Almendra-Hern{\'a}ndez et~al.}{2024}]{almendra2024markov}
\begin{botherref}
\oauthor{\bsnm{Almendra-Hern{\'a}ndez}, \binits{F.}},
\oauthor{\bsnm{De~Loera}, \binits{J.A.}},
\oauthor{\bsnm{Petrovi{\'c}}, \binits{S.}}:
Markov bases: a 25 year update.
Journal of the American Statistical Association,
1--16
(2024)
\end{botherref}
\endbibitem

\bibitem[\protect\citeauthoryear{Aldous}{1985}]{aldous1985exchangeability}
\begin{bchapter}
\bauthor{\bsnm{Aldous}, \binits{D.J.}}:
\bctitle{Exchangeability and related topics}.
In: \beditor{\bsnm{Hennequin}, \binits{P.L.}} (ed.)
\bbtitle{{\'E}cole d'{\'E}t{\'e} de Probabilit{\'e}s de Saint-Flour XIII --- 1983},
pp. \bfpage{1}--\blpage{198}.
\bpublisher{Springer},
\blocation{Berlin, Heidelberg}
(\byear{1985})
\end{bchapter}
\endbibitem

\bibitem[\protect\citeauthoryear{Anders and Schweitzer}{2021}]{anders2021parallel}
\begin{bchapter}
\bauthor{\bsnm{Anders}, \binits{M.}},
\bauthor{\bsnm{Schweitzer}, \binits{P.}}:
\bctitle{Parallel computation of combinatorial symmetries}.
In: \bbtitle{29th Annual European Symposium on Algorithms, {ESA} 2021, September 6-8, 2021, Lisbon, Portugal (Virtual Conference)}.
\bsertitle{LIPIcs},
vol. \bseriesno{204},
pp. \bfpage{6}--\blpage{1618}.
\bpublisher{Schloss Dagstuhl - Leibniz-Zentrum f{\"{u}}r Informatik},
\blocation{Dagstuhl, Germany}
(\byear{2021}).
\doiurl{10.4230/LIPIcs.ESA.2021.6} .
\burl{https://doi.org/10.4230/LIPIcs.ESA.2021.6}
\end{bchapter}
\endbibitem

\bibitem[\protect\citeauthoryear{Anders et~al.}{2023}]{anders2023engineering}
\begin{botherref}
\oauthor{\bsnm{Anders}, \binits{M.}},
\oauthor{\bsnm{Schweitzer}, \binits{P.}},
\oauthor{\bsnm{Stie{\ss}}, \binits{J.}}:
Engineering a preprocessor for symmetry detection.
CoRR
\textbf{abs/2302.06351}
(2023)
\doiurl{10.48550/arXiv.2302.06351}
{\href{https://arxiv.org/abs/2302.06351}{{2302.06351}}}
\end{botherref}
\endbibitem

\bibitem[\protect\citeauthoryear{Besag and Clifford}{1989}]{BC}
\begin{barticle}
\bauthor{\bsnm{Besag}, \binits{J.}},
\bauthor{\bsnm{Clifford}, \binits{P.}}:
\batitle{Generalized monte carlo significance tests}.
\bjtitle{Biometrika}
\bvolume{76}(\bissue{4}),
\bfpage{633}--\blpage{642}
(\byear{1989})
\doiurl{10.1093/biomet/76.4.633} .
Accessed 2023-04-14
\end{barticle}
\endbibitem

\bibitem[\protect\citeauthoryear{Bartholdi and Diaconis}{2024}]{bartholdi2024algorithm}
\begin{botherref}
\oauthor{\bsnm{Bartholdi}, \binits{L.}},
\oauthor{\bsnm{Diaconis}, \binits{P.}}:
An algorithm for uniform generation of unlabeled trees (p\'olya trees), with an extension of cayley's formula.
arXiv preprint arXiv:2411.17613
(2024)
\end{botherref}
\endbibitem

\bibitem[\protect\citeauthoryear{Bezanson et~al.}{2017}]{bezanson2017julia}
\begin{barticle}
\bauthor{\bsnm{Bezanson}, \binits{J.}},
\bauthor{\bsnm{Edelman}, \binits{A.}},
\bauthor{\bsnm{Karpinski}, \binits{S.}},
\bauthor{\bsnm{Shah}, \binits{V.B.}}:
\batitle{Julia: A fresh approach to numerical computing}.
\bjtitle{SIAM review}
\bvolume{59}(\bissue{1}),
\bfpage{65}--\blpage{98}
(\byear{2017})
\end{barticle}
\endbibitem

\bibitem[\protect\citeauthoryear{Ben-Hamou and Peres}{2023}]{ben2023cutoff}
\begin{bchapter}
\bauthor{\bsnm{Ben-Hamou}, \binits{A.}},
\bauthor{\bsnm{Peres}, \binits{Y.}}:
\bctitle{Cutoff for permuted markov chains}.
In: \bbtitle{Annales de l'Institut Henri Poincare (B) Probabilites et Statistiques},
vol. \bseriesno{59},
pp. \bfpage{230}--\blpage{243}
(\byear{2023}).
\doiurl{10.1214/22-AIHP1248} .
\bcomment{Institut Henri Poincar{\'e}}
\end{bchapter}
\endbibitem

\bibitem[\protect\citeauthoryear{Broder}{1986}]{broder1986hard}
\begin{bchapter}
\bauthor{\bsnm{Broder}, \binits{A.Z.}}:
\bctitle{How hard is it to marry at random?(on the approximation of the permanent)}.
In: \bbtitle{Proceedings of the Eighteenth Annual ACM Symposium on Theory of Computing},
pp. \bfpage{50}--\blpage{58}
(\byear{1986}).
\burl{https://doi.org/10.1145/12130.12136}
\end{bchapter}
\endbibitem

\bibitem[\protect\citeauthoryear{Chatterjee and Diaconis}{2020}]{chatterjee2020speeding}
\begin{barticle}
\bauthor{\bsnm{Chatterjee}, \binits{S.}},
\bauthor{\bsnm{Diaconis}, \binits{P.}}:
\batitle{Speeding up markov chains with deterministic jumps}.
\bjtitle{Probability Theory and Related Fields}
\bvolume{178},
\bfpage{1193}--\blpage{1214}
(\byear{2020})
\end{barticle}
\endbibitem

\bibitem[\protect\citeauthoryear{Chung et~al.}{1987}]{chung1987random}
\begin{barticle}
\bauthor{\bsnm{Chung}, \binits{F.R.}},
\bauthor{\bsnm{Diaconis}, \binits{P.}},
\bauthor{\bsnm{Graham}, \binits{R.L.}}:
\batitle{Random walks arising in random number generation}.
\bjtitle{The Annals of Probability}
\bvolume{15}(\bissue{3}),
\bfpage{1148}--\blpage{1165}
(\byear{1987})
\doiurl{10.1214/aop/1176992088}
\end{barticle}
\endbibitem

\bibitem[\protect\citeauthoryear{Chen et~al.}{2005}]{SIS}
\begin{barticle}
\bauthor{\bsnm{Chen}, \binits{Y.}},
\bauthor{\bsnm{Diaconis}, \binits{P.}},
\bauthor{\bsnm{Holmes}, \binits{S.P.}},
\bauthor{\bsnm{Liu}, \binits{J.S.}}:
\batitle{Sequential monte carlo methods for statistical analysis of tables}.
\bjtitle{Journal of the American Statistical Association}
\bvolume{100}(\bissue{469}),
\bfpage{109}--\blpage{120}
(\byear{2005})
\doiurl{10.1198/016214504000001303} .
Accessed 2023-04-17
\end{barticle}
\endbibitem

\bibitem[\protect\citeauthoryear{Chikina et~al.}{2020}]{Gerry2}
\begin{barticle}
\bauthor{\bsnm{Chikina}, \binits{M.}},
\bauthor{\bsnm{Frieze}, \binits{A.}},
\bauthor{\bsnm{Mattingly}, \binits{J.C.}},
\bauthor{\bsnm{Pegden}, \binits{W.}}:
\batitle{Separating effect from significance in markov chain tests}.
\bjtitle{Statistics and Public Policy}
\bvolume{7}(\bissue{1}),
\bfpage{101}--\blpage{114}
(\byear{2020})
\doiurl{10.1080/2330443X.2020.1806763}
{\href{https://arxiv.org/abs/https://doi.org/10.1080/2330443X.2020.1806763}{{https://doi.org/10.1080/2330443X.2020.1806763}}}
\end{barticle}
\endbibitem

\bibitem[\protect\citeauthoryear{Chikina et~al.}{2017}]{Gerry1}
\begin{barticle}
\bauthor{\bsnm{Chikina}, \binits{M.}},
\bauthor{\bsnm{Frieze}, \binits{A.}},
\bauthor{\bsnm{Pegden}, \binits{W.}}:
\batitle{Assessing significance in a markov chain without mixing}.
\bjtitle{Proceedings of the National Academy of Sciences}
\bvolume{114}(\bissue{11}),
\bfpage{2860}--\blpage{2864}
(\byear{2017})
\doiurl{10.1073/pnas.1617540114}
\end{barticle}
\endbibitem

\bibitem[\protect\citeauthoryear{Crane}{2016}]{crane2016ubiquitous}
\begin{barticle}
\bauthor{\bsnm{Crane}, \binits{H.}}:
\batitle{The ubiquitous ewens sampling formula}.
\bjtitle{Statistical Science}
\bvolume{31}(\bissue{1}),
\bfpage{1}--\blpage{19}
(\byear{2016})
\end{barticle}
\endbibitem

\bibitem[\protect\citeauthoryear{Diaconis and Efron}{1985}]{diaconis1985volume}
\begin{barticle}
\bauthor{\bsnm{Diaconis}, \binits{P.}},
\bauthor{\bsnm{Efron}, \binits{B.}}:
\batitle{Testing for independence in a two-way table: new interpretations of the chi-square statistic}.
\bjtitle{The Annals of Statistics}
\bvolume{13}(\bissue{3}),
\bfpage{845}--\blpage{874}
(\byear{1985})
\end{barticle}
\endbibitem

\bibitem[\protect\citeauthoryear{Donovan and Freislich}{1972}]{donovan1972some}
\begin{barticle}
\bauthor{\bsnm{Donovan}, \binits{P.}},
\bauthor{\bsnm{Freislich}, \binits{M.}}:
\batitle{Some evidence for an extension of the brauer-thrall conjecture}.
\bjtitle{Sonderforschungsbereich Theor. Math}
\bvolume{40},
\bfpage{24}--\blpage{26}
(\byear{1972})
\end{barticle}
\endbibitem

\bibitem[\protect\citeauthoryear{Duchon et~al.}{2004}]{Duchon2004Boltzmann}
\begin{barticle}
\bauthor{\bsnm{Duchon}, \binits{P.}},
\bauthor{\bsnm{Flajolet}, \binits{P.}},
\bauthor{\bsnm{Louchard}, \binits{G.}},
\bauthor{\bsnm{Schaeffer}, \binits{G.}}:
\batitle{Boltzmann samplers for the random generation of combinatorial structures}.
\bjtitle{Combinatorics, Probability and Computing}
\bvolume{13}(\bissue{4–5}),
\bfpage{577}--\blpage{625}
(\byear{2004})
\doiurl{10.1017/S0963548304006315}
\end{barticle}
\endbibitem

\bibitem[\protect\citeauthoryear{Diaconis and Gangolli}{1995}]{diaconis1995rectangular}
\begin{bchapter}
\bauthor{\bsnm{Diaconis}, \binits{P.}},
\bauthor{\bsnm{Gangolli}, \binits{A.}}:
\bctitle{Rectangular arrays with fixed margins}.
In: \beditor{\bsnm{Aldous}, \binits{D.}},
\beditor{\bsnm{Diaconis}, \binits{P.}},
\beditor{\bsnm{Spencer}, \binits{J.}},
\beditor{\bsnm{Steele}, \binits{J.M.}} (eds.)
\bbtitle{Discrete Probability and Algorithms},
pp. \bfpage{15}--\blpage{41}.
\bpublisher{Springer},
\blocation{New York, NY}
(\byear{1995})
\end{bchapter}
\endbibitem

\bibitem[\protect\citeauthoryear{Diaconis}{2005}]{diaconis2005analysis}
\begin{bchapter}
\bauthor{\bsnm{Diaconis}, \binits{P.}}:
\bctitle{Analysis of a bose--einstein markov chain}.
In: \bbtitle{Annales de l'Institut Henri Poincare (B) Probability and Statistics},
vol. \bseriesno{41},
pp. \bfpage{409}--\blpage{418}
(\byear{2005}).
\bcomment{Elsevier}.
\burl{https://doi.org/10.1016/j.anihpb.2004.09.007}
\end{bchapter}
\endbibitem

\bibitem[\protect\citeauthoryear{Dittmer}{2019}]{dittmer2019counting}
\begin{bbook}
\bauthor{\bsnm{Dittmer}, \binits{S.J.}}:
\bbtitle{Counting Linear Extensions and Contingency Tables}.
\bpublisher{University of California, Los Angeles},
\blocation{Los Angeles, CA}
(\byear{2019}).
\burl{https://escholarship.org/uc/item/80k2b78q}
\end{bbook}
\endbibitem

\bibitem[\protect\citeauthoryear{Dyer et~al.}{1997}]{dyer1997sampling}
\begin{barticle}
\bauthor{\bsnm{Dyer}, \binits{M.}},
\bauthor{\bsnm{Kannan}, \binits{R.}},
\bauthor{\bsnm{Mount}, \binits{J.}}:
\batitle{Sampling contingency tables}.
\bjtitle{Random Structures \& Algorithms}
\bvolume{10}(\bissue{4}),
\bfpage{487}--\blpage{506}
(\byear{1997})
\end{barticle}
\endbibitem

\bibitem[\protect\citeauthoryear{Diaconis and Malliaris}{2021}]{diaconis2021complexity}
\begin{barticle}
\bauthor{\bsnm{Diaconis}, \binits{P.}},
\bauthor{\bsnm{Malliaris}, \binits{M.}}:
\batitle{Complexity and randomness in the heisenberg groups (and beyond)}.
\bjtitle{New Zealand Journal of Mathematics}
\bvolume{52},
\bfpage{403}--\blpage{426}
(\byear{2021})
\end{barticle}
\endbibitem

\bibitem[\protect\citeauthoryear{Drmota}{2009}]{drmota2009random}
\begin{bbook}
\bauthor{\bsnm{Drmota}, \binits{M.}}:
\bbtitle{Random Trees: an Interplay Between Combinatorics and Probability}.
\bpublisher{Springer},
\blocation{Vienna}
(\byear{2009}).
\burl{https://doi.org/10.1007/978-3-211-75357-6}
\end{bbook}
\endbibitem

\bibitem[\protect\citeauthoryear{Drozd}{1980}]{drozd1980tame}
\begin{bchapter}
\bauthor{\bsnm{Drozd}, \binits{J.A.}}:
\bctitle{Tame and wild matrix problems}.
In: \beditor{\bsnm{Dlab}, \binits{V.}},
\beditor{\bsnm{Gabriel}, \binits{P.}} (eds.)
\bbtitle{Representation Theory II},
pp. \bfpage{242}--\blpage{258}.
\bpublisher{Springer},
\blocation{Berlin, Heidelberg}
(\byear{1980}).
\burl{https://doi.org/10.1007/BFb0088467}
\end{bchapter}
\endbibitem

\bibitem[\protect\citeauthoryear{Diaconis and Sturmfels}{1998}]{algebraic}
\begin{barticle}
\bauthor{\bsnm{Diaconis}, \binits{P.}},
\bauthor{\bsnm{Sturmfels}, \binits{B.}}:
\batitle{Algebraic algorithms for sampling from conditional distributions}.
\bjtitle{The Annals of statistics}
\bvolume{26}(\bissue{1}),
\bfpage{363}--\blpage{397}
(\byear{1998})
\doiurl{10.1214/aos/1030563990}
\end{barticle}
\endbibitem

\bibitem[\protect\citeauthoryear{Diaconis and Simper}{2022}]{diaconis2022statistical}
\begin{barticle}
\bauthor{\bsnm{Diaconis}, \binits{P.}},
\bauthor{\bsnm{Simper}, \binits{M.}}:
\batitle{Statistical enumeration of groups by double cosets}.
\bjtitle{Journal of Algebra}
\bvolume{607},
\bfpage{214}--\blpage{246}
(\byear{2022})
\end{barticle}
\endbibitem

\bibitem[\protect\citeauthoryear{Diaconis and Tung}{2024}]{diaconis2024poisson}
\begin{botherref}
\oauthor{\bsnm{Diaconis}, \binits{P.}},
\oauthor{\bsnm{Tung}, \binits{N.}}:
Poisson approximation for large permutation groups
(2024)
{\href{https://arxiv.org/abs/2408.06611}{{arXiv:2408.06611}}}
{[math.PR]}
\end{botherref}
\endbibitem

\bibitem[\protect\citeauthoryear{Diaconis and Zhong}{2020}]{diaconis2020hahn}
\begin{botherref}
\oauthor{\bsnm{Diaconis}, \binits{P.}},
\oauthor{\bsnm{Zhong}, \binits{C.}}:
Hahn polynomials and the burnside process.
The Ramanujan Journal,
1--29
(2020)
\end{botherref}
\endbibitem

\bibitem[\protect\citeauthoryear{Diaconis and Zhong}{2025}]{diaconis2025counting}
\begin{botherref}
\oauthor{\bsnm{Diaconis}, \binits{P.}},
\oauthor{\bsnm{Zhong}, \binits{C.}}:
Counting the number of group orbits by marrying the Burnside process with importance sampling
(2025).
\url{https://arxiv.org/abs/2501.11731}
\end{botherref}
\endbibitem

\bibitem[\protect\citeauthoryear{Feller}{1968}]{feller1968introduction}
\begin{bbook}
\bauthor{\bsnm{Feller}, \binits{W.}}:
\bbtitle{An Introduction to Probability Theory and Its Applications, Volume 1},
\bedition{3}rd edn.
\bpublisher{Wiley},
\blocation{New York, NY}
(\byear{1968})
\end{bbook}
\endbibitem

\bibitem[\protect\citeauthoryear{Flajolet et~al.}{2007}]{Flajolet2007unlabelled}
\begin{bbook}
\bauthor{\bsnm{Flajolet}, \binits{P.}},
\bauthor{\bsnm{Fusy}, \binits{E.}},
\bauthor{\bsnm{Pivoteau}, \binits{C.}}:
\bbtitle{Boltzmann Sampling of Unlabelled Structures},
pp. \bfpage{201}--\blpage{211}
(\byear{2007}).
\doiurl{10.1137/1.9781611972979.5} 
\end{bbook}
\endbibitem

\bibitem[\protect\citeauthoryear{Fisher}{1970}]{fisher1970statistical}
\begin{bbook}
\bauthor{\bsnm{Fisher}, \binits{R.A.}}:
\bbtitle{Statistical Methods for Research Workers}.
\bpublisher{Springer},
\blocation{Edinburgh}
(\byear{1970})
\end{bbook}
\endbibitem

\bibitem[\protect\citeauthoryear{Fristedt}{1993}]{fristedt1993structure}
\begin{barticle}
\bauthor{\bsnm{Fristedt}, \binits{B.}}:
\batitle{The structure of random partitions of large integers}.
\bjtitle{Transactions of the American Mathematical Society}
\bvolume{337}(\bissue{2}),
\bfpage{703}--\blpage{735}
(\byear{1993})
\end{barticle}
\endbibitem

\bibitem[\protect\citeauthoryear{Goldberg and Jerrum}{2002}]{goldberg2002burnside}
\begin{barticle}
\bauthor{\bsnm{Goldberg}, \binits{L.A.}},
\bauthor{\bsnm{Jerrum}, \binits{M.}}:
\batitle{The ‘burnside process’ converges slowly}.
\bjtitle{Combinatorics, Probability and Computing}
\bvolume{11}(\bissue{1}),
\bfpage{21}--\blpage{34}
(\byear{2002})
\end{barticle}
\endbibitem

\bibitem[\protect\citeauthoryear{Gudivok et~al.}{1990}]{gudivok1990classes}
\begin{barticle}
\bauthor{\bsnm{Gudivok}, \binits{P.}},
\bauthor{\bsnm{Kapitonova}, \binits{Y.V.}},
\bauthor{\bsnm{Polyak}, \binits{S.}},
\bauthor{\bsnm{Rud'ko}, \binits{V.}},
\bauthor{\bsnm{Tsitkin}, \binits{A.}}:
\batitle{Classes of conjugate elements of the unitriangular group}.
\bjtitle{Cybernetics}
\bvolume{26}(\bissue{1}),
\bfpage{47}--\blpage{57}
(\byear{1990})
\end{barticle}
\endbibitem

\bibitem[\protect\citeauthoryear{Goldberg}{1993}]{goldberg1993automating}
\begin{barticle}
\bauthor{\bsnm{Goldberg}, \binits{L.A.}}:
\batitle{Automating p{\'o}lya theory: The computational complexity of the cycle index polynomial}.
\bjtitle{Information and Computation}
\bvolume{105}(\bissue{2}),
\bfpage{268}--\blpage{288}
(\byear{1993})
\end{barticle}
\endbibitem

\bibitem[\protect\citeauthoryear{Holtzen et~al.}{2020}]{holtzen2020generating}
\begin{bchapter}
\bauthor{\bsnm{Holtzen}, \binits{S.}},
\bauthor{\bsnm{Millstein}, \binits{T.}},
\bauthor{\bsnm{{Van den Broeck}}, \binits{G.}}:
\bctitle{Generating and sampling orbits for lifted probabilistic inference}.
In: \beditor{\bsnm{Adams}, \binits{R.P.}},
\beditor{\bsnm{Gogate}, \binits{V.}} (eds.)
\bbtitle{Proceedings of The 35th Uncertainty in Artificial Intelligence Conference}.
\bsertitle{Proceedings of Machine Learning Research},
vol. \bseriesno{115},
pp. \bfpage{985}--\blpage{994}.
\bpublisher{PMLR}, \blocation{Tel Aviv, Israel}
(\byear{2020}).
\burl{https://proceedings.mlr.press/v115/holtzen20a.html}
\end{bchapter}
\endbibitem

\bibitem[\protect\citeauthoryear{Howes}{2023}]{howes2023markov}
\begin{botherref}
\oauthor{\bsnm{Howes}, \binits{M.}}:
Markov chain monte carlo significance tests.
arXiv preprint arXiv:2310.04924
(2023)
\end{botherref}
\endbibitem

\bibitem[\protect\citeauthoryear{Hermon et~al.}{2022}]{hermon2022universality}
\begin{barticle}
\bauthor{\bsnm{Hermon}, \binits{J.}},
\bauthor{\bsnm{Sly}, \binits{A.}},
\bauthor{\bsnm{Sousi}, \binits{P.}}:
\batitle{Universality of cutoff for graphs with an added random matching}.
\bjtitle{The Annals of Probability}
\bvolume{50}(\bissue{1}),
\bfpage{203}--\blpage{240}
(\byear{2022})
\doiurl{10.1214/21-AOP1532}
\end{barticle}
\endbibitem

\bibitem[\protect\citeauthoryear{Jerrum}{1993}]{jerrum1993uniform}
\begin{bchapter}
\bauthor{\bsnm{Jerrum}, \binits{M.}}:
\bctitle{Uniform sampling modulo a group of symmetries using markov chain simulation}.
In: \beditor{\bsnm{Friedman}, \binits{J.}} (ed.)
\bbtitle{Expanding Graphs}.
\bsertitle{DIMACS Series in Discrete Mathematics and Theoretical Computer Science},
vol. \bseriesno{10},
pp. \bfpage{37}--\blpage{47}.
\bpublisher{American Mathematical Society},
\blocation{Princeton, NJ}
(\byear{1993}).
\burl{https://www.lfcs.inf.ed.ac.uk/reports/93/ECS-LFCS-93-272/ECS-LFCS-93-272.ps}
\end{bchapter}
\endbibitem

\bibitem[\protect\citeauthoryear{James and Kerber}{1981}]{james1981representation}
\begin{bbook}
\bauthor{\bsnm{James}, \binits{G.D.}},
\bauthor{\bsnm{Kerber}, \binits{A.}}:
\bbtitle{The Representation Theory of the Symmetric Groups}.
\bsertitle{Encyclopedia of Mathematics and its Applications},
vol. \bseriesno{16}.
\bpublisher{Cambridge University Press},
\blocation{Cambridge}
(\byear{1981})
\end{bbook}
\endbibitem

\bibitem[\protect\citeauthoryear{Jerrum et~al.}{1986}]{jerrum1986random}
\begin{barticle}
\bauthor{\bsnm{Jerrum}, \binits{M.R.}},
\bauthor{\bsnm{Valiant}, \binits{L.G.}},
\bauthor{\bsnm{Vazirani}, \binits{V.V.}}:
\batitle{Random generation of combinatorial structures from a uniform distribution}.
\bjtitle{Theoretical computer science}
\bvolume{43},
\bfpage{169}--\blpage{188}
(\byear{1986})
\end{barticle}
\endbibitem

\bibitem[\protect\citeauthoryear{Keller}{2003}]{Keller2003Orbits}
\begin{bbook}
\bauthor{\bsnm{Keller}, \binits{T.M.}}:
In: \beditor{\bsnm{Campbell}, \binits{C.M.}},
\beditor{\bsnm{Robertson}, \binits{E.F.}},
\beditor{\bsnm{Smith}, \binits{G.C.E.}} (eds.)
\bbtitle{Orbits in finite group actions}.
\bsertitle{London Mathematical Society Lecture Note Series},
pp. \bfpage{306}--\blpage{331}.
\bpublisher{Cambridge University Press},
\blocation{Cambridge}
(\byear{2003}).
\burl{https://doi.org/10.1017/CBO9780511542787.003}
\end{bbook}
\endbibitem

\bibitem[\protect\citeauthoryear{Kemeny and Snell}{1976}]{kemeny1976finite}
\begin{bbook}
\bauthor{\bsnm{Kemeny}, \binits{J.G.}},
\bauthor{\bsnm{Snell}, \binits{J.L.}}:
\bbtitle{Finite {M}arkov Chains}.
\bsertitle{Undergraduate Texts in Mathematics},
vol. \bseriesno{26}.
\bpublisher{Springer},
\blocation{New York, NY}
(\byear{1976})
\end{bbook}
\endbibitem

\bibitem[\protect\citeauthoryear{Lehmann and Casella}{1998}]{lehmann1998theory}
\begin{bbook}
\bauthor{\bsnm{Lehmann}, \binits{E.L.}},
\bauthor{\bsnm{Casella}, \binits{G.}}:
\bbtitle{Theory of Point Estimation}.
\bpublisher{Springer},
\blocation{New York, NY}
(\byear{1998})
\end{bbook}
\endbibitem

\bibitem[\protect\citeauthoryear{Nijenhuis and Wilf}{1978}]{nijenhuis1978combinatorial}
\begin{bbook}
\bauthor{\bsnm{Nijenhuis}, \binits{A.}},
\bauthor{\bsnm{Wilf}, \binits{H.S.}}:
\bbtitle{Combinatorial Algorithms: for Computers and Calculators},
\bedition{2}nd edn.
\bpublisher{Academic Press},
\blocation{New York, NY}
(\byear{1978}).
\burl{https://doi.org/10.1016/C2013-0-11243-3}
\end{bbook}
\endbibitem

\bibitem[\protect\citeauthoryear{Paguyo}{2022}]{paguyo2022mixing}
\begin{botherref}
\oauthor{\bsnm{Paguyo}, \binits{J.}}:
Mixing times of a burnside process markov chain on set partitions.
arXiv preprint arXiv:2207.14269
(2022)
\end{botherref}
\endbibitem

\bibitem[\protect\citeauthoryear{Rahmani}{2022}]{rahmani2022mixing}
\begin{barticle}
\bauthor{\bsnm{Rahmani}, \binits{J.}}:
\batitle{Mixing times for the commuting chain on ca groups}.
\bjtitle{Journal of Theoretical Probability}
\bvolume{35}(\bissue{1}),
\bfpage{457}--\blpage{483}
(\byear{2022})
\end{barticle}
\endbibitem

\bibitem[\protect\citeauthoryear{Suzuki}{1982}]{suzuki1986group}
\begin{bbook}
\bauthor{\bsnm{Suzuki}, \binits{M.}}:
\bbtitle{Group Theory I}
vol. \bseriesno{247}.
\bpublisher{Springer},
\blocation{New York, NY}
(\byear{1982})
\end{bbook}
\endbibitem

\bibitem[\protect\citeauthoryear{Tung}{2025}]{tung2025cutting}
\begin{botherref}
\oauthor{\bsnm{Tung}, \binits{N.}}:
Cutting a unit square and permuting blocks.
arXiv preprint arXiv:2501.13844
(2025)
\end{botherref}
\endbibitem

\bibitem[\protect\citeauthoryear{Zhong}{2025}]{zhong2025ewens}
\begin{botherref}
\oauthor{\bsnm{Zhong}, \binits{C.}}:
An Ewens deformation of a Bose-Einstein Markov chain.
In preperation
(2025)
\end{botherref}
\endbibitem

\end{thebibliography}


\end{document}